\DeclareMathAlphabet{\mathrmbf}{\encodingdefault}{\rmdefault}{bx}{n}
\newcommand{\ignore}[1]{}
\newcommand{\tian}[1]{\textcolor{blue}{Tian:#1}}
\newcommand{\yanan}[1]{\colorbox{lightgray}{Yanan:}{\color{blue}#1}}
\newcommand{\red}[1]{\textcolor{red}{#1}}
\newtheorem{remark}{Remark}
\newtheorem{definition}{Definition}
\newtheorem{theorem}{Theorem}
\newcommand{\name}{Pisces}
\begin{document}
%

\title{$\mathsf{\name}$: Private and Compliable Cryptocurrency Exchange}


\author{\IEEEauthorblockN{Ya-Nan Li}
\IEEEauthorblockA{The University of Sydney\\
yanan.li@sydney.edu.au}
\and
\IEEEauthorblockN{Tian Qiu}
\IEEEauthorblockA{The University of Sydney\\
tqiu4893@uni.sydney.edu.au}
\and
\IEEEauthorblockN{Qiang Tang}
\IEEEauthorblockA{The University of Sydney\\
qiang.tang@sydney.edu.au}}


%


\IEEEoverridecommandlockouts
\makeatletter\def\@IEEEpubidpullup{6.5\baselineskip}\makeatother
\IEEEpubid{\parbox{\columnwidth}{
    Network and Distributed System Security (NDSS) Symposium 2024\\
    26 February - 1 March 2024, San Diego, CA, USA\\
    ISBN 1-891562-93-2\\
    https://dx.doi.org/10.14722/ndss.2024.23xxx\\
    www.ndss-symposium.org
}
\hspace{\columnsep}\makebox[\columnwidth]{}}

\maketitle

\begin{abstract}
Cryptocurrency exchange platforms such as Coinbase, Binance, enable users to purchase and sell cryptocurrencies conveniently just like trading stocks/commodities. However, because of the nature of blockchain, when a user withdraws coins (i.e., transfers coins to an external on-chain account), all future transactions can be learned by the platform. This is in sharp contrast to conventional stock exchange where all {\em external} activities of users are always hidden from the platform. Since the platform knows highly sensitive user private information such as passport number, bank information etc, linking all (on-chain) transactions raises a serious privacy concern about the potential disastrous data breach in those cryptocurrency exchange platforms.

In this paper, we propose a cryptocurrency exchange that restores user anonymity for the first time. To our surprise, the seemingly well-studied privacy/anonymity problem has several new challenges in this setting. Since the public blockchain and internal transaction activities naturally provide many non-trivial leakages to the platform, internal privacy is not only useful in the usual sense but also becomes necessary for regaining the basic anonymity of user transactions. We also ensure that the user cannot double spend, and the user has to properly report {\em accumulated} profit for tax purposes, even in the private setting.  We give a careful modeling and efficient construction of the system that achieves constant computation and communication overhead (with only simple cryptographic tools and rigorous security analysis); we also implement our system and evaluate its practical performance.
\end{abstract}


%


\section{Introduction}

Just like stocks and other commodities, people buy or sell cryptocurrencies on exchange platforms, mostly, on centralized platforms such as Coinbase, which are essentially marketplaces for cryptocurrencies. There, customers can pay fiat money like U.S dollars to get some coin, e.g, Bitcoin, or transfer their coin in the platform to an external account ({\em withdrawal}) \footnote{Or transfer coins into accounts in the platform from an external account ({\em deposit}), then sell for fiat money; and {\em exchange} one coin, e.g., BTC, to get some other coin, e.g., ETH. See Fig.\ref{overview}, and Sec.\ref{syntax} for details.}. 
Despite the promise of decentralized exchange, those centralized trading platforms still play a major role for usability and even regulatory reasons. For example, the annual trading volume of Binance was up to 9580 billion USD in 2021~\cite{BinanceStatistics}, and Coinbase also had 1640 billion USD in 2021 \cite{CoinbaseStatistics}. 

Like conventional stock exchanges, these exchange platforms must comply with regulations including Know Your Customer (KYC). They require businesses to verify the identity of their clients. Essentially, when a client/user registers an account at the exchange platform, he is normally required to provide a real-world identification document, such as passport, or a stamped envelope with address, for the platform to verify. Also for trading purposes, bank information is also given.

\ignore{

\noindent\textbf{Exchange platforms}  
Mining Bitcoin is a heavy work.
Most users prefer to buy Bitcoins with fiat money 
from Coinbase or other centralized cryptocurrency exchange platforms.

Firstly, the user registers and creates an account in the platform. 
When the user registers himself, he shows the real-world identity, like E-mail address, phone number, passport, etc, abiding KYC and AML.
With an account in the exchange platform, the user conducts the following operations: 
\setlist{nolistsep}
\begin{itemize}[noitemsep,leftmargin=*]
	\item Deposit: The user deposits assets into the exchange platform. They can be fiat money or cryptocurrencies. The platform credits the respective assets in his account.
	\item Exchange: The user exchanges one kind of asset to another one. For example, exchanging fiat money to cryptocurrency can be seemed as buying and the opposite direction is selling.
	\item Withdraw: The user withdraws assets and lets the platform sends them to his bank account or wallet address in blockchain.
	\item Request: The user requests the platform to sign his annual profit for reporting tax to some authority like IRS.
\end{itemize}
}

\noindent{\bf Serious privacy threats.} Despite provided convenience, those centralized cryptocurrency exchange platforms cause a much more serious concern on potential privacy breaches. 

As we have witnessed, many data breach instances exist \cite{DataBreaches}. A more worrisome issue in the exchange setting is that exchange can be seen as a bridge between the real world and the cryptocurrency world, which amplify the impacts of potential privacy breach (in exchange, of user records including identities and accounts). Users may {\em deposit} coins into the platform from,  or {\em withdraw} coins (transfer out of the platform) to,  his personal account on a blockchain. 


Since most of the blockchains are transparent (except very few number of chains such as Zcash \cite{zcash}) and publicly accessible (e.g., Bitcoin, Ethereum), the platform can essentially extract all transaction history knowing the real identity of a user. In the former case, the platform immediately links the real identity to his incoming addresses, and trace back all previous transactions on-chain; while even worse in the latter case, the platform, knowing the real identity of a user, and knows exactly which account/address of the cryptocurrency the user requested to withdraw (transfer to), and all future transactions. For instance, the platform could easily deduce that a user Alice bought a Tesla car with Bitcoin, as she withdrew them from the platform and then transferred them to Tesla's Bitcoin account (which could be public knowledge). 


\ignore{
It sells cryptocurrencies to users and holds their assets in escrow. 
Note that the user's account in the platform binds to his real identity due to the requirement of KYC and AML. 
The platform also knows their cryptocurrency addresses on the blockchain when they request to withdraw cryptocurrencies to the designated wallet addresses.

the platform knows users' future spending behaviors and destruct their privacy.
The blockchain just shows that ``somebody'' pays Tesla with the amount of Bitcoins, but Coinbase knows exactly that the payer is Alice.
}

It follows that existing centralized cryptocurrency exchange immediately ``destroys'' the pseudonym protection of blockchains, and the platform could obtain a large amount of information that is not supposed to be learned, e.g., the purchase/transaction histories of clients {\em outside of} the platform. This is even worse than conventional stock/commodity exchange where privacy may be breached within the system, but user information outside of the system is not revealed. 

We would like to design a cryptocurrency exchange system that at least restores user anonymity/privacy so that external records are not directly linked to the real identity. 




\noindent{\bf Insufficiency of external anonymity mechanisms.} 
The first potential method is keeping the existing exchange unchanged, and cutting the link between the exchange and external blockchain by
making on-chain payments/transfers (for coin deposits and withdrawals) on every blockchain anonymous, 
so that nobody 
can link the payer and the payee. Unfortunately, all those external anonymity solutions are insufficient.

First, fully anonymous on-chain payments such as Zcash only support its own native coins, while in most exchange platforms, Bitcoin, Ethereum and many other crypto tokens are the main objects of exchange, and cannot be supported. 


More exotic solutions like anonymous layer-2 payment solutions ~\cite{heilman_tumblebit_2017, glaeser2022foundations, qin2022blindhub, ng_ldsp_2021} and private smart contract enabled private payment solutions~\cite{bunz2020zether,diamond2021manyzether} also exist. 
One may wonder whether we can let the platform be the payer in those solutions during coin withdrawal. 
%
%
%
However, existing solutions mainly consider $k$-anonymity (where $k$ is the number of active users in an epoch) against the hub in~\cite{heilman_tumblebit_2017, glaeser2022foundations, qin2022blindhub} and the leader in~\cite{ng_ldsp_2021} and other outsiders, {\em not against the payer himself}. In our case, the platform is the payer and knows exactly the payee address during a coin withdrawal. 

Recent works of \cite{tairia2l2021,glaeser2022foundations} even considered anonymous ($k$-anonymity) payment hub against payers, assuming fixed denomination. Besides that $k$ is usually small, the withdraw transactions in exchange platform can hardly be of a fixed amount. When two users withdraw different amount of coins, the platform again can trivially tell them apart.

\noindent{\bf Unexplored anonymity within exchange platform.} The above analysis hints that relying on external anonymity mechanism alone is insufficient, we need to further strengthen the anonymity protection {\em within} the platform. Anonymity issues are classical topics that have been extensively studied in different settings, including in cryptocurrencies; yet, we will demonstrate that large body of those works are not applicable to our setting of exchange system.

\ignore{
\noindent{\bf Who and how obtain the compliance related info.} 
Users are required to reveal their correct compliance information to some authority periodically, like reporting tax to IRS. 
Otherwise, they would be detected and punished. 
Such report needs the authentication of the platform to make sure the compliance information is correct. 
Furthermore, the user must request authentication periodically, otherwise he would be identified and cannot exchange or withdraw later, which means his assets would be frozen in the platform. 
}

First, not only anonymous payment hub solutions cannot be directly applicable, even the techniques
(e.g., viewing the exchange platform as the hub instead, while each user can be both a payer and payee) are not sufficient either for the {\em ``internal''} anonymity. The key difference, again, lies in the functionality difference of payment hubs (and other payment related solutions in general) and exchange platforms. 

Usually, in an anonymous payment hub, payer-payee exchange some information first, and then each runs some form of (blockchain facilitated) 
 fair exchange protocol with the hub. For anonymity, they would require a bunch of payers and payees to have some on-chain {\em setup} first with the hub, and $k$ active payments, so that the link between each pair of payer-payee can be hidden among those $k$ transactions; otherwise, each individual incoming transaction can be recognized by the hub. But in an exchange platform, there is no other entity for such setup, each individual request would be independent from the view of the platform: when user A, B purchase some BTCs from the exchange platform, 
 these purchase requests can trivially be distinguished by the platform (i.e., $k=1$).

 Another issue (not covered in the payment solutions) in anonymous exchange is that every exchange transaction between a user and the platform contains two highly correlated parts: the transaction from user to platform and that from platform to user. The amounts are based on the exchange rate, e.g., A pays 1 BTC, for 15 ETHs. While in (anonymous) payment solutions, any two transactions can be completely independent, e.g., A pays 10 BTCs to B (e.g. platform here), while B pays 1 ETH to A.

There are also some works on private Decentralized EXchange (DEX for short) \cite{bowe2020zexe, Baum2021P2DEXExchange} where users exchange cryptocurrencies with each other. 
The privacy model in DEX is different from that of our centralized setting.
It 
keeps the transaction information secret {\em except} for the trading parties.
Again, in our setting, the platform is one of the trading parties who can learn the information of the other trivially.


Atomic swap across different ledgers supports the exchange between different cryptocurrencies.
While atomic swap pays much effort on ensuring fairness, the only privacy-preserving atomic swap work \cite{deshpande2020privateswap} reduces
the confidentiality and anonymity properties to the underlying blockchains. 
If the swap protocol involves cryptocurrencies on transparent blockchains, like Bitcoin and Ethereum, these two transactions can be linked easily via their amounts. 
%
There is only one private  fiat-to-Bitcoin exchange \cite{yi2019new}. 
During withdraws, the client chooses one UTXO and mixes it among $k$ transactions. 
To prevent linkability by the transaction amount, it requires each withdrawal to be fixed for 1 BTC. 
And if two clients choose the same BTC, only one of them would get paid.


It follows that the natural question of anonymous cryptocurrency exchange is still open.




\medskip
\noindent{\bf Further challenges.} Besides the issues mentioned above not covered in existing studies, the anonymous cryptocurrency exchange setting has several other features that bring about more challenges: since the exchange system is always connected with external blockchain (e.g., via the {\em deposit} and {\em withdrawal} of coins), it automatically leaks highly non-trivial information (e.g., 3 BTCs has been deposited, and 2.9 BTCs has been withdrawn/transferred out 2 minutes later) such that how to best deal with them requires care.

 \noindent{\em The right anonymity/privacy goal.}\ 
From a first look, 
we may just handle the withdrawal operation and define a basic, direct anonymity notion, that breaks the link between the receiving account and user identity, and leave other operations unchanged for efficiency. 
%
A bit more formally, given two different users and a specified withdraw transaction, we can require that it is infeasible to distinguish which one conducts the withdrawal if {\em both} of them are eligible. 
However, if we examine the  {\em anonymity set} of the withdrawal, it only consists of users who have enough amount of the specified coin, which could be few. 
For example, for some unpopular assets, maybe only a very small number of users own such kinds of coins; or one user may hold a significantly larger amount of the coin than others. When a large-volume withdrawal of such token is taken place, it is easy for the platform to identify the user.



We then turn to consider stronger anonymity. One may suggest to gradually strengthening anonymity by allowing fewer unnecessary leakages (keeping some internal transaction data {\em private} such as amount) and leave seemingly safe information such as coin names as now (to avoid potentially complex solutions for protecting such info). Unfortunately, many of remaining transaction metadata, together with the inherent leakages such as 3 BTCs have been withdrawn by someone to an external address, can still reduce anonymity set. 
It is hard to have a reasonably stronger anonymity without full internal transaction privacy (excluding the inherent leakage during withdraw/deposit), as it is unclear what is the actual consequence of each specific leakage.
For these reasons, we choose a definition that insists the system does not leak anything more than necessary to the exchange platform (essentially requiring privacy). We will explain more in Sec. \ref{WithdrawAnonymitySection}.

\noindent{\it Preserving major compliance functionalities.} We also need to preserve all the critical functionalities that are currently provided by centralized exchange platforms, including compliance such as generating tax reports for users and checking sufficient reserve for the platform. \footnote{There are some related works in  accountable privacy (e.g., PGC \cite{chen2020pgc}, UTT \cite{tomescu2022utt}, Platypus \cite{Platypus}, etc),
but they only focus on the payment with a single kind of asset and enforce limits on one transaction amount or account balance or sum of all sent or received values. 
Note that these compliance requirements cannot cover the profit computation which uses the specific buying price without linking to that transaction.}

There are many types of assets/coins in an exchange system, 
and their prices fluctuate over time. 
Users gain a profit by capitalizing on the price difference between buying and selling.
It is often mandatory for users to pay taxes on their {\em accumulated} profits over time. 
At each year end, users obtain a tax report from the platform so that they can report their annual profit, e.g., to Internal Revenue Service for tax filing. 
For example, based on the suggested tax policy of Coinbase \cite{coinbasetax}, transactions that result in a tax are called taxable events. Taxable events as capital gains include selling cryptocurrency for cash, converting one cryptocurrency to another, and spending cryptocurrency on goods and services (e.g., withdrawing cryptocurrency).

In the current transparent exchange system, the platform records the whole transaction history for each account and extract easily their taxable events. 
The platform can also check the reserve easily as it knows the asset details of each account. 
This ensures that the platform possesses sufficient assets to meet the withdrawal requests of users.

However, in the anonymous setting (now also requires privacy), the platform has no idea about the asset detail of each account. 
It cannot prove solvency in the same way as before. 
Furthermore, the platform knows neither the actual profit nor the relationship between these transactions with any user. 
Without careful designs to calculate accumulated profit (without violating privacy/anonymity), some users could always claim they made no profit.

 \noindent{\em Striving for practical performance.} 
Privacy preserving constructions normally use zero-knowledge proofs. 
Although the deposit and withdrawal assets are public, the exchange details (e.g., 1 BTC for 15 ETHs) should be hidden and proven in zero-knowledge that the transaction is valid and the prices 
are recorded correctly.
In theory, zkSNARK \cite{bitansky2013succinct} may enable succinct proof size and verification time.
But the proof generation incur heavy computation for users.
$\Sigma$-protocols may also be useful, but hiding the exchanged asset types in all $n$ kinds assets usually requires communication/computation cost growing at least {\em linear} in $n$. For a practical design, we need to reduce the communication and computation overhead to be as small as possible (e.g., ideally {\em constant} cost).


\subsection{Our contributions}

\noindent\textbf{Modeling.}
We for the first time formally define the private and compliable cryptocurrency exchange.
We give a basic version of anonymity first as a warm up, which only cares about the withdrawal operation.
As we briefly discussed above, hiding only part of transaction data may not give a reasonably strong anonymity. 
In the end, we define the security model insisting that the exchange leaks essentially no information to the platform. In this way, we obtain the best possible anonymity (given that public withdrawal is always there). We also carefully define {\em soundness} properties such as {\em overdraft prevention}, and {\em compliance}. For details, we refer to Sec. \ref{model}.

\noindent\textbf{Constructions.} 
We first give a very simple construction satisfying the basic withdraw anonymity, and showcase its limitations. 
We then design the first private and compliable exchange system which is provably secure in the full private model.
Users are hidden in a large anonymity set, and they cannot withdraw more asset than they own, or report false compliance information.
To obtain full anonymity, user's information are concealed as much as possible in each transaction,
including user identities 
and the exchanged assets details. Soundness properties are ensured via efficient NIZK proofs specially designed for our purposes.
Note that proving correctness of a exchange request usually leads to a proof whose size is linear to the total number of asset types; instead, we propose an efficient construction with {\em constant} cost in both communication and computation which is independent with the number of asset types and users in the system.


\noindent{\bf{Performance evaluations.}}
We implement and evaluate our $\mathsf{\name}$  system and test the cost breakdown in each operations, and compare with those in plain exchange (without anonymity). Considering the presence of TLS communication, our overhead is minimal. We also compare with other relevant systems\footnote{\label{ucs}There is a concept of updatable anonymous credential~\cite{Bl2019UpdatableSystems}, that share similar theoretical structure of proving properties of attributes in anonymous credential; however, their main application to incentive systems supports only limited functionalities and the achieved anonymity is weak, see Sec.~\ref{IncentiveSystem}. We have to design more complicated compliance functions.} for further evidence. See Sec.~\ref{sec:eval} for details.

\section{Technical overview}
%
We first provide a high-level overview of the technique. Typically, there are two main parties involved: the platform and the user. However, in certain cases such as tax filing, there may also be an external authority involved. 

 \noindent{\em Workflow of exchange system.}
First, the user provides the real identity to the platform during registration.
Then the user interacts with the platform to deposit, exchange (e.g., 1 BTC for 15 ETHs) and withdraw asset. 
For compliance, the user generates his compliance report, gets it certified by the platform and reports to an authority. 
The platform also generates information to check its own solvency.


We use Fig \ref{overview} to visualize these (simplified) procedures.
Each interactive protocol can be expressed by \ding{192} \ding{193} \ding{194} steps.  
The user sends compliance report to the authority who verifies it in step \ding{195}.
The platform checks in step \ding{196} whether its internal state satisfies the platform compliance rule.

\begin{figure}[!htbp] 
\centering 
\includegraphics[width=0.45\linewidth, height= 0.3\linewidth]{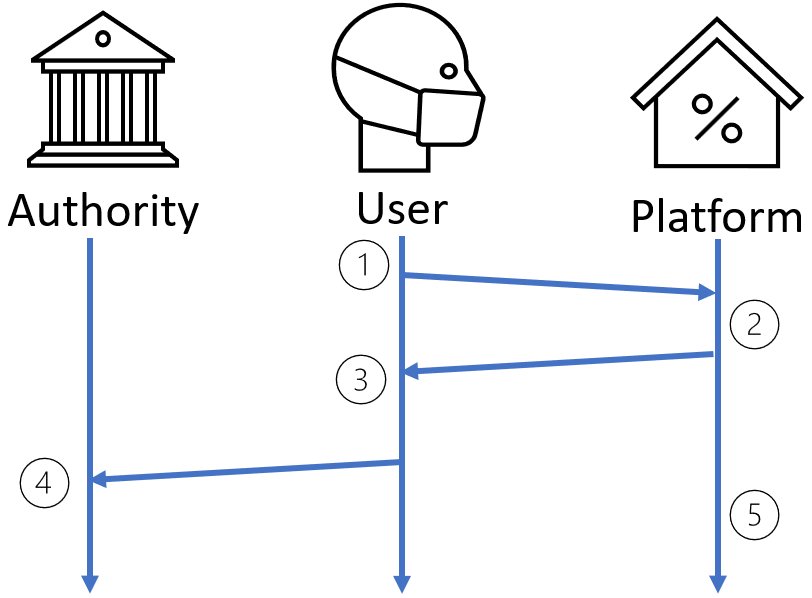} 
\caption{Overview of exchange system: \ding{192} \textit{Transaction request}; 
\ding{193} \textit{Transaction processing}: platform verifies the transaction and process it; \ding{194} \textit{Transaction completion}: user completes the transaction; \ding{195} \textit{Compliance verification}: authority verifies the compliance report of the user; \ding{196} \textit{Compliance check}: platform checks internal state with the platform compliance rule} 
\vspace{-0.2cm}
\label{overview} 
\end{figure}








 \noindent{\bf Constructions.}  Based on the workflow above, we illustrate the design idea of our efficient system $\mathsf{\name}$\/ step by step.



 \noindent {\em Basic anonymity.} As a warm-up,  
to just break the link between outgoing transaction (withdraw) and the history within the platform, we can introduce a preparation step. There, users ask for 
{\em one-time} anonymous credentials (as tokens) with amounts hidden to the platform (simply via a ``partially'' blind signature) that later can be used to do withdrawal. Users also submit a committed record containing asset details of the token for compliance purposes.  Now, the user balance becomes hidden, and users should prove that 
the sum of all hidden amounts is valid depending on his previous balance via zero-knowledge range proof. 
When withdraw, users could directly reveal such one-time credential (a valid signature on a random identifier and transaction etc), which easily prevents double-spending. Since now user balance is also hidden, all future operations including exchange, withdraw preparation will involve (efficient) zero-knowledge proofs of validity.

 \noindent{\em Full anonymity.} 
Additional protection on the exchanges and deposits is needed. Especially, the exchanges should not only be anonymous but also keep {\em asset type} and {\em amount} private. 
Each exchange transaction requires the value of exchange-in and exchange-out to be equal. To calculate the value, users need to show the used prices (now committed) are two of all current prices and correspond to the two exchanged assets. Further zero-knowledge proofs on membership and equality will be leveraged. But doing them efficiently requires care and will be explained soon in {\em practical considerations}.

\noindent{\em Supporting compliance.}
The above full anonymity construction is over-simplified, as we have not considered compliance issues. For example, since each user can have multiple credentials, he could give one credential with, say 10 Bitcoins, as a gift to any person, who may not even registered with the platform. 
Then the gift receiver could use the credential to do the anonymous trading with the platform without revealing his real-world identity, which is not compliant even with the basic KYC regulation. Moreover, we would support common compliance goals without hurting anonymity/privacy. 
In particular, we use tax filing as an example of client-compliance. 

First, all transactions from the same user should be bind together (without revealing the content) to derive accumulated profit; we thus let each user maintain {one} long-term {\em registration credential} that contains two attributes ``cost'' and ``gain'' to record the buying cost, and selling gain for exchange-out and withdraw transactions (the taxable transactions in e.g., Coinbase). Such a credential is issued when user registered to the system, and will be updated properly after each transaction. To conveniently update it, transaction metadata would also be recorded in a secure way, e.g., each coin type, buying/selling price and amount, etc (as in current exchange platform such as Coinbase). Each transaction corresponds to two one-time {\em asset credentials} w.r.t the exchanged assets which contain the corresponding trading prices and amounts.

When users request to exchange or withdraw, they need to provide proof of ownership for a valid asset credential with sufficient amounts, a valid registration credential, evidence of fair exchange, and accurate records of updated costs and gains. However, when generating the compliance report, revealing this information directly to the platform would compromise user privacy. For instance, if a user has a substantial profit, it increases the likelihood of being linked to previous large-scale withdrawals. To address it, we employ a workaround by having the platform blindly sign (thus providing validity proof) to generate the report without revealing sensitive information.

\noindent\textit{Practical considerations.} With above considerations, the users and platform have to engage in multiple non-interactive zero-knowledge proofs, some of which may be heavy if not done properly. Particularly, for each exchange transaction (e.g., user wants to buy 1 BTC using 15 ETHs), the user takes his ETH asset certificate, proves in zero-knowledge that the asset type belongs to $[n]$ via a membership proof (as asset type needs to be kept private); and proves the used prices (committed in the new asset certificates) are exactly
the current prices of the exchange-in and exchange-out assets, which also need membership proofs. To facilitate such a proof, one idea is to let the user to commit to a vector $\vec{v}$ with $n$ dimension, and prove that $\vec{v}$ is a vector of bits and contains only one entry (corresponding to his asset certificate) as $1$. Then homomorphically evaluating the linear combinations may get commitments of price$\times$amount of BTC, and ETH respectively, the user can further prove the resulting committed values are equal. The proof size is already at least $O(n)$, and computation cost even more. Recent work of one-out-of-many proof or many-of-many proof may reduce the proof size to logarithmic in $n$ but the computation cost of proof generation and verification is still (super)linear in $n$ \cite{Groth2015, diamond2021manyzether}. 




Instead, we let the platform generate signatures on each asset name and the price and make them public, called \textit{price credential}.
To capture the price fluctuation and avoid users using out-of-date price credentials, each price credential contains the timestamp of the latest price update. In the exchange transaction, the user proves that the new exchanged asset record contains the name and price and she knows the valid signature on them and the latest timestamp. 
It can be verified by the platform's {\em single} public key, and we can bring down the communication and computation cost to be constant. For details, please refer to Sec. \ref{Pricecredential}.

 \noindent\textbf{Extensions and open questions.} 
For client-compliance, there are many other regulation rules such as limiting the transaction frequency, transaction amounts, all sending or receiving amounts and scrutinizing the receiver addresses in case of financing terrorism. Our techniques can be extended very easily to also support those. See Sec. \ref{construction} for more discussions.

\noindent\textit{Solvency issues.} There are also many platform-compliance requirements. One notable one is the solvency problem that the platform should be able to check it has sufficient reserve. Now, transaction details are hidden in privacy-preserving exchange, which may increase the risk of solvency issues, and users may exchange/withdraw a large amount of certain cryptocurrency privately that exceeds the platform's reserve, thus causing a potential ``bank run''. According to the Basel Accords \cite{Basel} for the banking industry (we also use it as the platform-compliance rule in Sec. \ref{compliance-rule}), it usually requires the banks to  (i) provide sufficient liquidity (e.g., keep enough asset to cover the total withdraws of last month), and (ii) keep a sufficient minimal reserve (e.g., 0-10\% of the total assets held by all users in the platform, in the form of a major currency such as USD~\cite{IMFForeignExchangeReserve}; in our setting, Bitcoin). We show that our Pisces system with full anonymity also satisfies the first platform-compliance requirement. As the platform is still aware of the total amount of incoming/outgoing Bitcoins (and any other cryptocurrency tokens), thus the needed information could still be derived. 

For maintaining a minimal reserve, there might be some practical mitigation, e.g., actively monitoring the total withdrawal amount/pattern for each coin, limiting the exchange and withdrawal amount/frequency, etc, or involving a third-party auditor (similar to the tax authority to keep the aggregated information to manage risks). Those can be supported by extending our design. But a more rigorous solution remains open. Also, there could be even more strict and complicated rules that may require the platform to keep sufficient reserve and liquidity for every single type of coins \cite{BinanceProofofReserve}; or require the platform to generate publicly verifiable proofs of solvency. 

We remark that with our basic anonymity, the platform knows all the holdings of each account (except the link between the inside and external onchain accounts), thus can still derive all needed information for both requirements and the more strict rules.  

However, a more systematic investigation of solvency issues in the fully anonymous setting (e.g., allowing the platform to gain extra side information for solvency purposes) may again have further impact on the anonymity.

%

%
As a first step studying privacy in exchange system with efficient compliance support, there are many interesting questions and challenges to explore (e.g., supporting broader compliance rules). 
For a more systematic investigation, we leave them as interesting open problems.



\section{Preliminary}

\noindent{\emph{Notations.}} Throughout this paper, we denote with $\lambda\in \mathbb{N}$ the security parameter, and by poly($\lambda$) any function which bounded by a polynomial in $\lambda$. 
An algorithm $\mathcal{A}$ is said to be PPT if it is modeled as a probabilistic Turing machine that runs in time polynomial in $\lambda$. 
Informally, we say that a function is negligible if it vanishes faster than the inverse of any polynomial. 
A function $f : \mathbb{N} \rightarrow \mathbb{R}$ is negligible if for every positive integer $c$, there exists an integer $x_0$ such that $|f (x)| < 1/x^c$ for all $x > x_0$.
It is denoted by \textit{negl}. 
For a finite set $S$, $x \sample S$ means that $x$ is chosen uniformly from $S$.
If $n$ is an integer, $[n]$ denotes the set of positive integers $1, 2,\dots, n$.
We use $\Vec{v}$ to denote a vector.
We write $\langle \mathsf{A},
\mathsf{B}\rangle$ to denote interactive algorithms $\mathsf{A,B}$ engage in an interactive protocol, take their respective inputs, and share some transcripts.


{We briefly introduce some cryptographic primitives here for completeness and defer their details to the Appendix~\ref{AppPriliminary}.}

\noindent\textbf{Commitments.} 
A commitment scheme allows one to commit to a chosen value secretly, with the ability to only open to the same committed value later. 
A commitment scheme $\Pi_{\mathrm{cmt}}$ consists of the following PPT algorithms: 

\noindent$\mathsf{Setup}(1^{\lambda}) \rightarrow pp$: generates the public parameter \textit{pp}.\\
$\mathsf{Com}(pp, m; r) \rightarrow com$: generates the commitment for the message $m$ using the randomness \textit{r}. For ease of notation, we omit $pp$ in the input.

We require a commitment scheme to be \textit{hiding} and \textit{binding}.
A commitment is additively homomorphic if it satisfies that for any messages $m_1, m_2$ and randomnesses $r_1, r_2$: 
$\mathsf{Com}(m_1; r_1)+\mathsf{Com}(m_2; r_2)=\mathsf{Com}(m_1+m_2; r_1+r_2)$.

 \noindent\textbf{Blind signatures.}
A blind signature scheme  $\Pi_{\mathrm{bs}}$ for signing committed $n$ messages  has the following algorithms: 

\noindent$\mathsf{KeyGen}(pp) \rightarrow (pk, sk)$: takes public parameter $pp$ as input, outputs a key pair $(pk, sk)$. $pp,pk$ are implicit inputs of others.
$\mathsf{Com}(\vec{m}, r) \rightarrow c$: given messages $\vec{m}\in\mathcal{M}^n $ and randomness $r$, computes a commitment $c$. \\
$\langle\mathsf{BlindSign}, \mathsf{BlindRcv}\rangle$: it is an interactive protocol between the signer and user, with inputs $(sk, c)$ and $(\vec{m}, r)$ respectively. User outputs a signature $\sigma$. \\
$\mathsf{Vrfy}(\vec{m}, \sigma ) \rightarrow b$: it checks  $(\vec{m},\sigma)$ pair and outputs 0/1.



We require a blind signature scheme to be correct and have the properties of {\em unforgeability} and {\em blindness}.


 \noindent\textbf{Zero-knowledge argument of knowledge ($\mathsf{ZKAoK}$).}
The prover proves knowledge of $w$ such that $(x, w)$ is in some NP relation $R$. 
Here $x$ is the statement and $w$ is the witness. 
The zero-knowledge argument of knowledge \cite{lindell2003parallel} can be simulated perfectly and there exists an expected polynomial-time extractor $\mathcal{E}$ that, given black-box access to a successful prover, computes a witness $w$ with probability 1.
It is denoted by $\mathsf{ZKAoK}[(w); (x, w) \in R]$.

\section{Syntax}
In this section, we define the syntax 
that is abstracted from real exchange systems and is general for both plain and private centralized exchange systems.
Basically, an exchange system supports users depositing multiple kinds of assets (including fiat money), exchanging assets with the platform, and withdrawing assets. To comply with regulations, 
the system also checks compliance with platform rules and supports users in filing their compliance documents.





\vspace{-0.2cm}
%
\subsection{Syntax}\label{syntax}
An exchange system involves three entities: the platform $\mathsf{P}$, the user $\mathsf{U}$ and an authority $\mathsf{A}$. 
The system consists of the following PPT algorithms: $\mathsf{Setup}$, $\mathsf{PKeyGen}$, 
$\mathsf{Verify}$, $\mathsf{Check}$
as well as interactive protocols:
$\langle\mathsf{Join}, \mathsf{Issue}\rangle$,  
$\langle\mathsf{Deposit}, \mathsf{Credit}\rangle$,
$\langle\mathsf{Exchange},$ $\mathsf{Update}\rangle$, $\langle\mathsf{Withdraw}, \mathsf{Deduct}\rangle$, $\langle\mathsf{File}, \mathsf{Sign}\rangle$. 
To present the syntax, we prepare some data structures.


\noindent{\textbf{{Transaction requests $\mathrmbf{reqs}$.}}}\label{remark1}
For each transaction, $\mathsf{U}$'s input includes a transaction request to specify details. We denote it as a data structure and consider five kinds of requests as follows. To keep the syntax general and simple, we add an optional attribute $aux$ to each request. $aux$ could contain several sub-attributes required by the operation but not included in the listed attributes, be different for different operations, and be specified by the detail construction.
\vspace{-0.2cm}
\setlist{nolistsep}
\begin{itemize}[noitemsep,leftmargin=*]

    \item[-] $\mathit{req_{joi}:=(info,aux)}$ denotes join request, where $\mathit{req_{joi}.info}$ is user's information for joining the system.
    \item[-] $\mathit{req_{dep}:=(name,amt,aux)}$ denotes deposit request, where $\mathit{req_{dep}.name}$ is asset name, $\mathit{req_{dep}.amt}$ is asset amount.
    \item[-] $\mathit{req_{exc}:=(name_{in},amt_{in}, name_{out}, amt_{out}),aux}$ denotes exchange request, where $\mathit{req_{exc}.name_{in}}$ is exchange-in asset name, $\mathit{req_{exc}.amt_{in}}$ is exchange-in asset amount, $\mathit{req_{exc}.name_{out}}$ is exchange-out asset name, $\mathit{req_{exc}.amt_{out}}$ is exchange-out asset amount.
    \item[-] $\mathit{req_{wit}:=(name,amt,aux)}$ is withdraw request, where $\mathit{req_{wit}.name}$ is asset name, $\mathit{req_{wit}.amt}$ is asset amount.
    \item[-] $\mathit{req_{fil}:=(uid,cp,aux)}$ denotes file request, where $\mathit{req_{fil}.uid}$ is user identifier, $\mathit{req_{fil}.cp}$ is compliance information.
\end{itemize}



\noindent{\textbf{{Transaction records $\mathrmbf{Rd_{reg}}$ and $\mathrmbf{Rd_{ast}}$.}}} \label{remark2}
Each record is an information credential pair, where the information contains several attributes. It is generated or updated during transactions, and kept privately by users. 
We denote record $\mathit{Rd}$ as a data structure and consider two kinds of records: 
the registration record $\mathit{Rd_{reg}}$ and the asset record $Rd_{ast}$.

\vspace{-0.2cm}
\setlist{nolistsep}
\begin{itemize}[noitemsep,leftmargin=*]

    \item [-] $\mathit{Rd_{reg}:=(non, uid, cp, cred)}$ denotes a registration record, including three attributes and the credential $\mathit{Rd_{reg}.cred}$ on the three attributes.  $\mathit{Rd_{reg}.non}$ is the random nonce to uniquely identify it. $\mathit{Rd_{reg}.uid}$ is the owner's unique identifier. 
    $\mathit{Rd_{reg}.cp}$ is the compliance information.  Each user holds only one valid $\mathit{Rd_{reg}}$, which is initialized in join transaction,  updated in exchange, and withdraw transaction via revoking the old one and generating a new one. 
    \item[-] $\mathit{Rd_{ast}:=(non, uid, name, amt, acp, cred)}$ denotes an asset record, including five attributes and a credential $\mathit{Rd_{ast}.cred}$ on the five attributes. Similar to $\mathit{Rd_{reg}}$, $\mathit{Rd_{ast}.non}$ is the random nonce and $\mathit{Rd_{ast}.uid}$ is the owner's identifier. $\mathit{Rd_{ast}.name}$ is the asset name, $\mathit{Rd_{ast}.amt}$ is the amount of asset, and $\mathit{Rd_{ast}.acp}$ is asset-related compliance information. Notably, in a private yet compliable setting, assets cannot be accumulated trivially in terms of quantity since they are tied to different asset kinds and compliance-related information such as selling prices. Thus each user could hold multiple asset records.
\end{itemize}

\noindent\textbf{Concrete algorithms.}
\setlist{nolistsep}
\begin{itemize}[noitemsep,leftmargin=*]
 
\item $\mathsf{Setup}$: The public parameters \textit{epp} for the exchange system is set. $\textit{epp} $ includes the public parameters for cryptographic primitives. For simplicity of syntax, we let $epp$ also include some publicly available information, such as the external blockchain, all coin prices in the exchange system, some metadata like the time, etc.

\item $\mathsf{PKeyGen}$:
$\mathsf{P}$ runs the key generation algorithm to generate a key pair $(pk, sk)$ and makes $pk$ public to users.
It initializes its internal state $st$ as $\emptyset$.




\item {$\langle\mathsf{Join}, \mathsf{Issue}\rangle$}: It is a register protocol. $\mathsf{U}$ runs the interactive algorithm $\mathsf{Join}(\textit{epp}, \textit{pk}, \mathit{req_{joi}})$, and $\mathsf{P}$ runs the interactive algorithm $\mathsf{Issue}(epp, pk, sk, st)$, where $st$ denotes the internal state of $\mathsf{P}$.  After the interaction, $\mathsf{P}$ outputs a signal bit $b$ indicating whether the operation succeeds or not, and updates its internal state to $st'$. If $b=1$, the user outputs the unique user identifier $uid$ and the registration record $\mathit{Rd_{reg}}$. 

 

\item $\langle\mathsf{Deposit}, \mathsf{Credit}\rangle$: 
It is a deposit transaction for users to deposit assets to $\mathsf{P}$. $\mathsf{P}$ runs $\mathsf{Credit}(epp, pk, sk, st)$ and 
$\mathsf{U}$ runs $\mathsf{Deposit}(epp, pk, uid,  \mathit{Rd_{reg}}, \mathit{req_{dep}})$. 
 The asset name and amount are specified in $\mathit{req_{dep}}$.
After the interaction, $\mathsf{P}$ outputs a signal bit $b$ indicating whether the operation succeeds or not, and updates its internal state to $st'$. 
If $b=1$, $\mathsf{U}$ gets a new asset record $\mathit{Rd_{ast}^{out}}$ for the deposited asset. 

\item $\langle\mathsf{Exchange}, \mathsf{Update}\rangle$: It is an exchange transaction for users to exchange assets with $\mathsf{P}$. The names and amounts of exchange-in and exchange out assets are specified by $\mathit{req_{exc}}$.
$\mathsf{U}$ runs $\mathsf{Exchange}(\textit{epp}, pk, uid, \mathit{Rd_{reg}},\mathit{Rd_{ast}}, \mathit{req_{exc}})$, 
and $\mathsf{P}$ runs $\mathsf{Update}(\textit{epp}, pk, sk, st)$.
After the interaction, $\mathsf{P}$ outputs a signal bit $b$, indicating whether the operation succeeds or not, and updates its internal state to $st'$. 
If $b=1$, $\mathsf{U}$ outputs three records: the updated ones $\mathit{Rd'_{reg}}$ and $\mathit{Rd'_{ast}}$, and a newly generated asset record $\mathit{Rd_{ast}^{out}}$ for exchange-out asset with name $\mathit{req_{exc}.name_{out}}$.

\item {$\langle\mathsf{Withdraw}, \mathsf{Deduct}\rangle$}:
It is a withdraw transaction for users to withdraw a kind of asset from $\mathsf{P}$ to the blockchain. The name and amount of withdrawn asset are specified in $\mathit{req_{wit}}$. $\mathsf{U}$ runs $\mathsf{Withdraw}(epp, pk, uid, \mathit{Rd_{reg}}, \mathit{Rd_{ast}}, \mathit{req_{wit}})$, 
and $\mathsf{P}$ runs $\mathsf{Deduct}(epp, pk, sk, st)$. 
After the interaction, $\mathsf{P}$ outputs a signal bit $b$, indicating whether the operation succeeds or not, and updates its internal state to $st'$. 
If $b=1$, $\mathsf{U}$ outputs two updated records 
$\mathit{Rd'_{reg}}, \mathit{Rd'_{ast}}$.

\item $\langle\mathsf{File}, \mathsf{Sign}\rangle$:
It is a two-party protocol in which $\mathsf{U}$ files compliance information periodically and requests $\mathsf{P}$ to sign it. $\mathsf{U}$ runs $\mathsf{File}(epp, pk,$ $uid, \mathit{Rd_{reg}}, \mathit{req_{fil}})$ and $\mathsf{P}$ runs $\mathsf{Sign}(epp, pk, sk, st)$. 
After the interaction, $\mathsf{P}$ outputs a single bit $b$, indicating whether the operation succeeds or not, and updates its internal state to $st'$. 
If $b=1$, $\mathsf{U}$ outputs an updated record $\mathit{Rd_{reg}'}$ 
and  a compliance document $\textit{doc}$ certified by $\mathsf{P}$. Similar to transaction record $\mathit{Rd}$, $\textit{doc}:= (uid,cp,mt,sig)$ is also a data structure including three attributes and a signature \textit{doc}.\textit{sig} on them, where \textit{doc}.\textit{uid} is the user identifier, \textit{doc}.\textit{cp} is the reported compliance information, and \textit{doc}.\textit{mt} is the metadata such as time.

\item$\mathsf{Verify}$: 
The authority runs $\mathsf{Verify}(\textit{epp},pk, doc)$ to check the validity of the submitted compliance document, including the consistency of metadata in $\textit{epp}$ and $\textit{doc.mt}$,  and the valid signature. It outputs a bit $b$ with $b=1$ indicating a passing check, and vice versa. 


\item$\mathsf{Check}$: $\mathsf{P}$ runs $\mathsf{Check}(epp,st)$ for self-checking the internal state's compliance with platform rules specified in $epp$. The output is a single bit $b$, with $b=1$ indicating a passing check and vice versa.

\end{itemize}
\section{Security models\label{model}}

In this section, we formally define security models to capture the desired security properties of a private yet compliable exchange system. Along the way, we show the motivation, importance, and ideas of defining such properties. 

To the best of our knowledge, this is the first security modeling of the centralized exchange system. 
Security modeling of this work is involved in four aspects. 
(1) we put less trust in the platform than in existing plain exchange systems where the platform is always assumed to be honest. While our model gives the platform more power in some security definitions, especially for anonymity, the platform can be completely malicious;
(2) When modeling privacy/anonymity, naive attempts for a ``direct" anonymity (without privacy on other parts of transactions, or trying to strive a best balance between efficiency and hiding only part of the transactions) may not work well because of potential consequences of each seemingly benign leakage (within the exchange system).
We will elaborate on it in Sec.~\ref{WithdrawAnonymitySection};
(3) Besides desired anonymity, we also define {\em soundness} properties of overdraft prevention and client compliance security that require care too; 
(4) For platform compliance, we require that the honest platform can always self-check whether its internal state satisfies the platform compliance rule.

We first give a high-level description of the security requirements of the system.  
\noindent{\em Correctness.}
The honest user gets the correct balance amount in his account from deposit, exchange, and withdrawal, also gets the correct number of real assets from withdrawal, and gets a valid signature on his compliance information that can be verified by the authority.

\noindent{\it Anonymity.}
Given a withdraw/deposit transaction, the malicious platform should not link it to any specific user, except the user has to expose the identity, such as depositing/withdrawing fiat money from/to bank.
We start discussing it from the basic anonymity where only focusing on the withdraw or deposit transactions.
Although the basic anonymity scheme could be simple and not bring extra challenges to compliance (especially platform compliance), we show that the basic withdraw anonymity may not be sufficient, since the platform could narrow down the anonymity set based on other transactions, such as deposit, exchange, and file. Thus we further explore the best possible (full) anonymity and model it.

\noindent{\it Overdraft prevention.}
It ensures users cannot possess or spend more assets than they actually own in the system. It prevents malicious users from conducting fraudulent deposits, exchanges, or withdrawals.
	
\noindent{\it Compliance.} 
It requires that both users and the platform to comply with the regulations expressed as functions, and we call the corresponding compliance F-client-compliance and G-platform-compliance.
All entities are required to provide compliance information according to respective compliance rules, and none of them can deceive the authority with incorrect information as long as the user does not collude with the platform. 
For example, F could be a tax report function on accumulated profit, and G could be a solvency-related function on the coin reserve and liquidity. 
Tax-report-client-compliance ensures that the user cannot cheat with a value less than his latest accumulated profit this year. Solvency-platform-compliance ensures that the platform maintains 
appropriate liquidity based on the monthly assets inflow and outflow.

\subsection{Preparations for the models}
Note that the bank accounts leak the user's identity when depositing or withdrawing fiat money which is unavoidable. So we consider privacy only during cryptocurrency trading. Besides, the deanonymization attack in the network layer is out of the scope of our work. The attacker links multiple transactions by IP address, but users can protect themselves using an anonymous network like Tor \cite{tor,torbrowser}.

We provide oracles to capture the adversary's capability.
To model the capabilities of the malicious platform, 
we provide oracles: 
$\mathcal{O}^1_\mathsf{Join}$, $\mathcal{O}^1_\mathsf{Deposit}$,
$\mathcal{O}^1_\mathsf{Exchange}$,
$\mathcal{O}^1_\mathsf{Withdraw}$, 
$\mathcal{O}^1_\mathsf{File}$.
To model the capabilities of malicious users, 
we define the oracles: 
$\mathcal{O}^2_\mathsf{PKeyGen}$, $\mathcal{O}^2_\mathsf{Issue}$, $\mathcal{O}^2_\mathsf{Credit}$, $\mathcal{O}^2_\mathsf{Update}$, $\mathcal{O}^2_\mathsf{Deduct}$, $\mathcal{O}^2_\mathsf{Sign}$. 
We also provide $\mathcal{O}_{\mathsf{Public}}$ for every party to model access to some public ongoing information, such as a secure blockchain system, the prices of all assets, currencies, stocks, cryptocurrencies, and a global clock, etc.
\noindent\emph{Reference-record map $\textsc{Map}: (uid,\mathit{ref})\rightarrow Rd$:} 
When \adv\ acts as a malicious platform, it is allowed to induce honest users to conduct transactions by querying oracles. However, some oracles require specifying records as input, which are private to honest users and unavailable to \adv. To enable \adv\ to identify different records without knowing what they are,  we let \adv~ specify the reference string $\mathit{ref}$\footnote{Note that the reference string \textit{ref} used by \adv~ is different from the identifier(nonce) of the record which is privately chosen by the honest user or oracle randomly.}  for each record and Oracles keep the map $\textsc{Map}$ from key tuple $(uid,\mathit{ref})$ to value $Rd$ for \adv's later queries. For notational convenience, we let $
\textsc{Map}(uid,\mathit{ref})$ denote the record $Rd$.
In the queries, $\mathit{ref}_{\mathsf{reg}}$ is the reference for the registration record, $\mathit{ref}_{\mathsf{ast}}^{\mathsf{in}}$ is the reference for the spending asset record, and $\mathit{ref}_{\mathsf{ast}}^{\mathsf{out}}$ is the reference for the buying asset record.

\vspace{-0.2cm}
\setlist{nolistsep}
\begin{itemize}[noitemsep,leftmargin=*]
\item $\mathcal{O}_\mathsf{Public}$: when queried, it returns the public information \textit{pub}, such as the registration information, bank account, asset prices and related wallet addresses, etc. 
For all queries to other oracles, they inherently invoke $\mathcal{O}_{\mathsf{Public}}$ at first. 
We do not repeat these moves in the oracle descriptions.

\item $\mathcal{O}^1_{\mathsf{Join}}(\mathit{req_{joi}, ref}_{\mathsf{reg}})$: it interacts with $\mathcal{A}$\ by running the protocol $\langle \mathsf{Join}, \mathsf{Issue} \rangle$, where oracle runs $\mathsf{Join}(epp, pk, \mathit{req_{joi}}) \rightarrow (uid, \mathit{Rd_{reg}})$. 
If $\mathsf{Join}$ algorithm outputs $\bot$, then oracle outputs $\bot$.
Otherwise, oracle  adds $(\mathit{uid, ref}_{\mathsf{reg}}, \mathit{Rd_{reg}})$ to $\textsc{Map}$ and outputs $uid$ to \adv. 


    \item $\mathcal{O}^1_\mathsf{Deposit}(uid, \mathit{req_{dep}}, \mathit{ref}_{\mathsf{reg}}, \mathit{ref}_{\mathsf{ast}}^{\mathsf{out}})$: oracle first gets record $\mathit{Rd_{reg}=}\textsc{Map}(uid,\mathit{ref}_{\mathsf{reg}})$ from \textsc{Map} per references. Then it interacts with $\mathcal{A}$ by running $\langle \mathsf{Deposit}, \mathsf{Credit} \rangle$ protocol, where oracle runs $\mathsf{Deposit}(epp, pk, uid, \mathit{Rd_{reg}},\mathit{req_{dep}})\rightarrow (\mathit{Rd'_{reg}},\mathit{Rd_{ast}^{out}})$. If $\mathsf{Deposit}$ algorithm outputs $\bot$, then oracle outputs $\bot$; otherwise, oracle updates the map by setting  $\textsc{Map}(uid,\mathit{ref}_{\mathsf{reg}}) \leftarrow \mathit{Rd'_{reg}}$ and adds a new tuple $(uid,\mathit{ref}_{\mathsf{ast}}^{\mathsf{out}} ,\mathit{Rd_{ast}^{out}})$ to \textsc{Map}.  \adv\ gets interaction transcripts but no more output from oracle.

  \item $\mathcal{O}^1_\mathsf{Exchange}(uid,\mathit{req_{exc}},\mathit{ref}_{\mathsf{reg}}, \mathit{ref}_{\mathsf{ast}}^{\mathsf{in}}, \mathit{ref}_{\mathsf{ast}}^{\mathsf{out}})$: oracle first gets records $\mathit{Rd_{reg}}=\textsc{Map}(uid,\mathit{ref}_{\mathsf{reg}})$, $ \mathit{Rd_{ast}}=\textsc{Map}(uid,$ $\mathit{ref}_{\mathsf{ast}}^{\mathsf{in}})$ from \textsc{Map} per references. Then it interacts with $\mathcal{A}$ by running $\langle \mathsf{Exchange}, \mathsf{Update} \rangle$ protocol, where oracle runs $\mathsf{Exchange}(epp, pk, uid, \mathit{Rd_{reg}},\mathit{Rd_{ast}},\mathit{req_{exc}})\rightarrow (\mathit{Rd'_{reg}},\mathit{Rd'_{ast}},\mathit{Rd_{ast}^{out}})$. If $\mathsf{Exchange}$ algorithm outputs $\bot$, then oracle outputs $\bot$; otherwise, oracle updates the map by setting  $\textsc{Map}(uid,\mathit{ref}_{\mathsf{reg}}) \leftarrow \mathit{Rd'_{reg}}$ and $\textsc{Map}(uid,\mathit{ref}_{\mathsf{ast}}^{\mathsf{in}}) \leftarrow \mathit{Rd'_{ast}}$, and adds a new tuple $(uid,\mathit{ref}_{\mathsf{ast}}^{\mathsf{out}} ,\mathit{Rd_{ast}^{out}})$ to \textsc{Map}. \adv\ gets interaction transcripts but no more output from oracle.

    \item $\mathcal{O}^1_\mathsf{Withdraw}(uid,\mathit{req_{wit}},\mathit{ref}_{\mathsf{reg}}, \mathit{ref}_{\mathsf{ast}}^{\mathsf{in}})$: 
    oracle first gets records $\mathit{Rd_{reg}}=\textsc{Map}(uid,\mathit{ref}_{\mathsf{reg}})$, $ \mathit{Rd_{ast}}=\textsc{Map}(uid,\mathit{ref}_{\mathsf{ast}}^{\mathsf{in}})$ from \textsc{Map} per references. Then it interacts with $\mathcal{A}$ by running $\langle \mathsf{Withdraw}, \mathsf{Deduct} \rangle$ protocol, where oracle runs $\mathsf{Withdraw}(epp, pk, uid, \mathit{Rd_{reg}},\mathit{Rd_{ast}},\mathit{req_{wit}})\rightarrow (\mathit{Rd'_{reg}},\mathit{Rd'_{ast}})$. If $\mathsf{Exchange}$ algorithm outputs $\bot$, then oracle outputs $\bot$; otherwise, oracle updates the map by setting  $\textsc{Map}(uid,\mathit{ref}_{\mathsf{reg}}) \leftarrow \mathit{Rd'_{reg}}$, $\textsc{Map}(uid,\mathit{ref}_{\mathsf{ast}}^{\mathsf{in}}) \leftarrow \mathit{Rd'_{ast}}$.  \adv\ gets interaction transcripts but no more output from oracle.
    

    \item $\mathcal{O}^1_\mathsf{File}(uid, \mathit{req_{fil}}, \mathit{ref}_{\mathsf{reg}})$: oracle first get records $\mathit{Rd_{reg}=}\textsc{Map}(uid,\mathit{ref}_{\mathsf{reg}})$ from \textsc{Map} per references. Then it interacts with $\mathcal{A}$ by running $\langle \mathsf{File}, \mathsf{Sign} \rangle$ protocol, where oracle runs $\mathsf{File}(epp, pk, uid, \mathit{Rd_{reg}},\mathit{req_{fil}})\rightarrow (\mathit{Rd'_{reg}, doc})$. If $\mathsf{File}$ algorithm outputs $\bot$, then oracle outputs $\bot$; otherwise, oracle updates the map by setting  $\textsc{Map}(uid,\mathit{ref}_{\mathsf{reg}}) \leftarrow \mathit{Rd'_{reg}}$. \adv\ gets interaction transcripts but no more outputs from oracle.

    \item $\mathcal{O}^2_\mathsf{PKeyGen}$: It can only be invoked once. When triggered, run $(pk, sk) \leftarrow \mathsf{PKeyGen}(epp)$. 
    It initializes the internal state as $ st \leftarrow \emptyset$.
	It outputs $pk$.
	
	\item $\mathcal{O}^2_\mathsf{Issue}$: 
	$\mathcal{A}$ runs $\mathsf{Join}$ algorithm and interacts with the $\mathcal{O}^2_\mathsf{Issue}$ oracle. 
 $\mathcal{O}^2_\mathsf{Issue}$ runs $\mathsf{Issue}$ algorithm, takes $(\textit{epp}, pk, sk)$ as input, and receives user's transcript $\mathit{ts}$ as external input. 
It outputs a signal bit $b$ indicating whether the operation succeeds or not.
If $b=0$, it outputs $\bot$. 
 
	
	\item $\mathcal{O}^2_\mathsf{Update}$: 
	$\mathcal{A}$ runs $\mathsf{Exchange}$ algorithm and interacts with the $\mathcal{O}^2_\mathsf{Update}$ oracle. $\mathcal{O}^2_\mathsf{Update}$ runs $\mathsf{Update}$ algorithm, takes $(epp, pk, sk)$ as input, and receives user's transcript $\mathit{ts}$ as external input. It outputs a signal bit $b$ indicating whether the operation succeeds or not. If $b=0$, it outputs $\bot$. 
	
    \item $\mathcal{O}^2_\mathsf{Credit}$:
    it is similar to $\mathcal{O}^2_\mathsf{Update}$ except that here they run the $\langle \mathsf{Deposit}, \mathsf{Credit} \rangle$ protocol and \adv\ gets $\{\mathit{Rd_{ast_i}}\}$.
    
    \item $\mathcal{O}^2_\mathsf{Deduct}$: 
    it is similar to $\mathcal{O}^2_\mathsf{Update}$ except that here they run $\langle \mathsf{Withdraw}, \mathsf{Deduct} \rangle$ and \adv\ gets $\mathit{\{Rd'_{reg},Rd'_{ast_i}\}}$.
	
    \item $\mathcal{O}^2_\mathsf{Sign}$: it is similar to $\mathcal{O}^2_\mathsf{Update}$ except that here they run the $\langle \mathsf{File}, \mathsf{Sign} \rangle$ protocol and \adv\ gets $\mathit{doc}$.
\end{itemize}

\label{WithdrawAnonymitySection}

\subsection{Basic anonymity}
Basic anonymity guarantees that even a malicious platform cannot link the wallet address with any honest user. 
It consists of basic withdraw anonymity and deposit anonymity.

\noindent\textit{Basic withdraw anonymity.}
We define the model in Figure~\ref{anoy}, the adversary $\mathcal{A}$ interacts with any honest user by querying the anonymity oracle set:  $\mathcal{O}_\mathsf{anony}=\{\mathcal{O}^1_\mathsf{Join}$, $\mathcal{O}^1_\mathsf{Deposit},\mathcal{O}^1_\mathsf{Exchange},\mathcal{O}^1_\mathsf{Withdraw},\mathcal{O}^1_\mathsf{File}$, $\mathcal{O}_\mathsf{Public}\}$ oracles.
The adversary submits $(uid_0, uid_1, \mathit{ref}_{\mathsf{ast}}^0,$ $\mathit{ref}_{\mathsf{ast}}^1, \mathit{req_{wit}})$ as the challenge.
It also outputs some internal state information \textit{st}.  


\begin{figure}[htbp]
	\centering
	\fbox{
		\procedure{$\mathrm{Exp}^{\mathrm{ano-wit}}(\mathcal{A},\lambda)$}
		{\textit{epp}\leftarrow\mathsf{Setup}(\mathcal{G}(1^{\lambda}))\\
		(pk,st)\leftarrow\mathcal{A}(\textit{epp})\\	(uid_0, uid_1, \mathit{ref}_{\mathsf{ast}}^0, \mathit{ref}_{\mathsf{ast}}^1, \mathit{req_{wit}}, st)\leftarrow\mathcal{A}^{\mathcal{O}_\mathsf{anony}}(st)\\
        \pcif 
        \textsc{Map}(uid_0, \mathit{ref}_{\mathsf{ast}}^0)=\bot \ or \ \textsc{Map}(uid_1, \mathit{ref}_{\mathsf{ast}}^1) =\bot \\ \quad \pcreturn 0
        \quad \color{brown}{\text{//no record mapped by references $\mathit{ref}_{\mathsf{ast}}^0, \mathit{ref}_{\mathsf{ast}}^1$}}\\
        \pcif \textit{req}_{\textit{wit}}.\textit{amt}\neq\textsc{Map}(uid_0, \mathit{ref}_{\mathsf{ast}}^0).\textit{amt}\ \mathrm{or}\\ \quad \textit{req}_{\textit{wit}}.\textit{amt}\neq\textsc{Map}(uid_1, \mathit{ref}_{\mathsf{ast}}^1).\textit{amt}\ \pcreturn 0\\
		\pcelse\ b\sample\{0,1\}\\
            \mathrm{Interacts\ with}\ \mathcal{A} \mathrm{\ by\ running} \\		
            \quad \mathsf{Withdraw}(\textit{epp},\textit{pk},\textit{uid}_{b},\textsc{Map}(uid_b, \mathit{ref}_{\mathsf{ast}}^b), \mathit{req_{wit}})\\
		\hat{b}\leftarrow \mathcal{A}^{
        \mathcal{O}^*_\mathsf{anony}}(st)\\ 
        \quad \color{brown}{\text{// * requires no query with references $\mathit{ref}_{\mathsf{ast}}^0,\mathit{ref}_{\mathsf{ast}}^1$ }}\\
		\pcreturn (\hat{b}==b)
		}
		}
	\caption{Basic withdraw anonymity experiment}
 \vspace{-0.2cm}
\label{anoy}
\end{figure}

\begin{definition}[Basic withdraw anonymity]\label{defano} 
We say that an exchange system provides basic withdraw anonymity if for all PPT $\mathcal{A}$ and $\lambda$, in the experiment shown in Fig. \ref{anoy}, it holds that
\begin{equation*}
	|\mathrm{Pr}[\mathrm{Exp^{ano-wit}}(\mathcal{A},\lambda)=1]-{1}/{2}|\leq negl(\lambda)
\end{equation*}

\end{definition}

\vspace{-0.2cm}	
\noindent{\it Warm-up construction.}
To achieve the basic withdraw anonymity, an intuitive idea is 
cutting the link to the user's real identity within the withdraw operation, and leaving all other operations plain. Our warm-up construction follows this simple idea by partitioning the withdraw operation into two separate steps: first, users log in their plain account and request for a one-time anonymous credential (just use blind signatures) on the coin they plan to withdraw; 
second, they could show an anonymous credential without login to get the asset withdrawn on-chain. 
If the withdrawn amount is arbitrary and different users withdraw different amounts of assets, the special amount helps the platform identify a specific withdrawal.
To handle this problem, 
our method is hiding the withdrawn amount in the first step where the user request the anonymous credential with a committed amount.  
We introduce a brief idea here and defer the detailed description to the Appendix~\ref{simple}.

For example, Alice has 50 BTCs in her account and Bob has 100 BTCs.
Alice wants to withdraw 5 BTCs and Bob wants to withdraw 2 BTCs.
Firstly, they commit on these values and prove they have enough balance and request the platform to issue credentials. 
Once the proof gets verified, the platform signs blindly and stores these commitments in their accounts.
Alice unblinds the signature and shows it to the platform for withdrawing 5 BTCs.
The platform cannot distinguish it is Alice or Bob since both of them have enough BTCs and have requested.
Afterwards, if they want to withdraw or exchange, they should prove that they have enough balance after deducting all committed amounts from the plain balance.

\noindent\textit{Basic deposit anonymity.} 
This property prevents the platform from linking the user's past on-chain transactions to his identity via deposit operations. 
Modeling it can be regarded as a symmetric work of basic withdraw anonymity except that all deposit requests are achievable naturally for any user.

To add deposit anonymity, one more one-time anonymous credential can be employed.
Its workflow is an inverse version of anonymous withdraw. 
When depositing assets, the user, as an anonymous guest, initializes the process by requesting the platform to issue a one-time use anonymous credential, which contains the asset name and amount.
Then he requires the platform to credit the asset balance to his account using that credential without showing the asset details.

\noindent{\it Limitations of basic withdraw anonymity.} 
The above constructions are efficient and satisfy the basic anonymity model, but we observe that the anonymity is limited.

It is well-known that the anonymity strength depends on the size of the anonymity set. The greater the anonymity set is, the higher the level of anonymity a user can achieve. When considering the anonymity set for a withdrawal transaction, it comprises users who can withdraw from the view of the platform.
In the above scheme, anonymous credentials are requested from real-name accounts. 
The anonymity set consists of users who have requested credentials for the same asset and their account balance exceeds the withdrawn amount. The following example narrows down the set size to one. 

\noindent{\em Example 1:} 
{Alice deposits 50 BTCs, Bob deposits 100 XRPs and 100 BTCs, and Clare deposits 1000 BTCs. Only Alice and Bob request anonymous credentials for BTCs. Later, a withdrawal of 51 BTCs occurs, it can thus be linked to Bob as he is in possession of a sufficient amount of BTCs.}


\subsection{Full anonymity}
\label{full}
As we mentioned above, the anonymity set of basic anonymity could be quite small. 
So we explore the stronger anonymity of the withdrawal. 
We first attempt to get perfect anonymity with all users in the anonymity set. 
Unfortunately, it is impossible due to some \textit{unavoidable leakage}.
We will elaborate it later.
For other kinds of leakage, we check that whether it is even worth to prevent, as any protective measure comes with cost. 
After some attempts, it turns out that any other leakage could be used to reduce the anonymity set.
We explain that with some examples.
Finally, we define the best possible anonymity called interactive indistinguishability by by constraining the information leakage to the minimum. 

 \noindent{\it {Stronger anonymity is needed.}}
Basic withdraw anonymity only ensures the anonymity set includes those eligible users who own enough of the withdrawn asset and are capable to withdraw.  
It is acceptable for some popular assets that a lot of people own and the withdrawn amount is small such that the anonymity set is large enough.
But it rules out many interesting scenarios, such as withdrawing some special assets owned by a small number of people or a comparatively large amount of assets that few people have so much.
Thus we aim to explore a stronger model which provides larger anonymity set.


\noindent{\em {Perfect anonymity is impossible.}} In the ideal case, the anonymity set of each withdraw transaction consists of all registered users in the system which is called the perfect anonymity. 
Unfortunately, it cannot be true since the platform always can exclude some users using some public information. 
For example, given a withdrawal of 100 Bitcoins and a newly enrolled user Alice, she is in no way the user of this withdrawal if there is no such big amount deposit in the system after her registration.


Due to the special setting of exchange platform, some information is unavoidably public to the platform, which we call \textit{unavoidable leakage}, like transaction types (deposit or exchange or withdraw), users' registration information (due to KYC requirement), deposited and withdrawn asset details (the asset name and amount, bank accounts, and wallet addresses), and even some out-of-band information like the users' behavioral preference.

To achieve the best possible anonymity, it seems that only the unavoidable leakage is acceptable. 
But a series of natural questions are: why do we need to hide so many? Can we leak a little bit more, like the privacy (identity, coin name, and amount) of the exchange? Does it hurt the best possible anonymity?


\noindent{\it {Necessity of privacy and towards best possible anonymity.}} 
To answer the above questions, we identify the avoidable leakage information which can be concealed using some cryptographic tools. 
Concretely, we assort the avoidable leakage into five classes according to the transaction: the identity in depositing coins, the identity in exchange, the contents in exchange including the coin name and amount, and the identity in withdrawing coins, and the compliance information in filing operation. 
We hide each of them and leak the other part to test whether the anonymity set is affected.

It is easy to see the identity in withdrawal cannot be leaked.  
For the three leakages occurring in the deposit and exchange, we show an example of the exchange system with a series of transactions and check the anonymity set if any avoidable leakage is allowed.

\noindent{\em Example 2:} 
In a cryptocurrency exchange system, 
there are a bunch of users registered and doing transactions, then David and Ella joined. After that somebody (David) deposits 10 BTCs. Then there is an exchange transaction: somebody (Alice, a registered user) exchanges some coins (2 BTCs to 20 ETHs). 
Then a withdrawal happens: somebody withdraws 5 XRPs. 
Check its anonymity set: 
\vspace{-0.2cm}
\setlist{nolistsep}
\begin{itemize}[noitemsep,leftmargin=*] 
    \item[1.] If the identity of deposit is leaked, then the platform knows that David deposits 10 BTCs, and Ella is excluded from the anonymity set and David is included.
    \item[2.] If the identity of exchange is leaked, the platform knows that David and Ella cannot withdraw 5 XRPs, then both David and Ella are excluded from the anonymity set.
    \item[3.] If the content of exchange is disclosed, the platform knows it is a BTC-to-ETH exchange and excludes David and Ella.  
\end{itemize}

As for the compliance information in the filing operation, someone may consider just leaking the summary of compliance information is fine. 
But in this case, 
many users might have not generated any transactions in a year which can be inferred from their zero profit.
Excluding these sleeping users reduces the anonymity set.
Therefore, the privacy of compliance information should also be protected. 
The identity of the filing operation could be leaked by the regulatory authority to the platform which is an unavoidable leakage. 


In a nutshell, protecting privacy is necessary. We need to go toward the best possible anonymity.  

When modeling the best possible anonymity,  
we want to prevent any avoidable leakage. 
However, since public information could have various and complicated relationships with the events in the exchange system, it is tricky to exactly quantify the potential influence, which may be leveraged by the adversary to win trivially.  
Instead, we define the full anonymity via interaction indistinguishability. 
\noindent{\it Interaction indistinguishability.}
This property requires that the interaction between the user and platform leak nothing except the public information.
To include the interaction of all kinds of transactions, we design the experiment as follows. 
In a high level, the adversary \adv\ acts as the malicious platform and the experiment simulates two worlds with the same initialization. 
\adv\ can add honest users to both worlds and interact with them by submitting different query pairs.
The queries are sent to the worlds via a challenger \cdv\ who forwards query pair to two worlds depending on a random bit \textit{b}.
To model the unavoidable leakage, the queries should contain the same public information. 
After a series of interactions, \adv\ still cannot distinguish which world is based on which one of the query pair. 
It means that for any kind of transaction the interaction does not leak more than the unavoidable leakage. 
Otherwise, \adv\ can send different queries for the transaction and distinguish the two worlds successfully.

The two worlds are simulated via two sets of oracles: 
$\mathcal{O}_\mathsf{IND}^a=\{\mathcal{O}^{1,a}_\mathsf{Join}$, $\mathcal{O}^{1,a}_\mathsf{Deposit}$, $\mathcal{O}^{1,a}_\mathsf{Exchange},\mathcal{O}^{1,a}_\mathsf{Withdraw},\mathcal{O}^{1,a}_\mathsf{File}$, $\mathcal{O}^a_\mathsf{Public}\}$ 
with separated internal map $\textsc{Map}^a$
for $a\in\{0,1\}$. 
\cdv\ chooses one bit \textit{b} randomly at the beginning.
\adv~sends queries to the challenger $\mathcal{C}$ which are in pair $(Q^0, Q^1)$ to interact with oracles.
For each query pair, $\mathcal{C}$ checks that they could be different but must contain the same public information which represents the unavoidable leakage as we discussed before (see Def.~\ref{defquery}).
Then \cdv\ forwards $Q^b$ to the oracle in $\mathcal{O}^0_{\mathsf{IND}}$ and forwards $Q^{1-b}$ to the oracle in $\mathcal{O}^1_{\mathsf{IND}}$.  

With these queries as input, these oracles interact with \adv\ with different states, 
and we denote it in terms of $\langle\adv(st^0), \mathcal{O}^0_{\mathsf{IND}}(Q^b)\rangle$ and $\langle\adv(st^1), \mathcal{O}^1_{\mathsf{IND}}(Q^{1-b})\rangle$. 
But it cannot distinguish which are induced by which queries. Therefore, the interaction leaks nothing but public information.
We formally define the interactive indistinguishability in Fig~\ref{defind}.

\begin{definition}[Publicly consistent queries]\label{defquery}
$\mathcal{A}$ submits a publicly consistent query pair $(Q^0, Q^1)$, which satisfy all the following conditions:
\vspace{-0.2cm}
\setlist{nolistsep}
\begin{itemize}[noitemsep,leftmargin=*]    
\item First of all, both queries would succeed, and are for the same type of oracle.
    \item For queries to $\mathcal{O}^1_{\mathsf{Join}}$, with the same the request info $\mathit{req_{joi}}$ and they get the same user identifier $uid$ as output. 
    \item For queries to $\mathcal{O}^1_{\mathsf{File}}$, both with the same user identifier $uid$.
    \item For queries to $\mathcal{O}^1_{\mathsf{Deposit}}$ and $\mathcal{O}^1_{\mathsf{Withdraw}}$, the users can be different but the name and amount of the assets and the on-chain addresses are the same in both queries. For fiat money deposit/withdraw, the users and bank accounts are the same.
\end{itemize}

\end{definition}

\begin{figure}[htbp]
    \centering
    \fbox{
	\procedure{$\mathrm{Exp}^{\mathsf{IND}}(\mathcal{A},\mathcal{C},\lambda)$}
        {\textit{epp}\leftarrow\mathsf{Setup}(\mathcal{G}(1^{\lambda}))\\
	(pk,st)\leftarrow\mathcal{A}(\textit{epp})\\	
        \mathcal{C}\ \text{randomly chooses } b\sample\{0,1\}\\
        \text{Run } \mathcal{A}^{\mathcal{C}(\mathcal{O}^0_\mathsf{IND},\mathcal{O}^1_\mathsf{IND})}(st)\ \text{for \textit{N} steps:}\ \color{brown}{\text{// \textit{N}=poly($\lambda$)}} \\
        \text{In each step:}\\
        \quad (Q^0, Q^1, st^0, st^1)\leftarrow \mathcal{A}(st)\\
        \quad \pcif (Q^0, Q^1) \text{ are not \textit{publicly consistent}, } \pcthen \pcreturn 0;\\
        \quad \pcelse \mathcal{C} \text{ forwards } Q^b\text{ to } \mathcal{O}^0_\mathsf{IND},\ Q^{1-b}\text{ to } \mathcal{O}^1_\mathsf{IND},\\ 
        \quad \text{Run}\ \langle\adv(st^0),\mathcal{O}^0_\mathsf{IND}(Q^b)\rangle \text{ and } \langle\adv(st^1),\mathcal{O}^1_\mathsf{IND}(Q^{1-b})\rangle\\
        \quad\ \color{brown}{\text{// It simulates \adv\ induces honest users’ behaviors}}\\
        \text{Finally, } \adv\ \text{halts, and outputs } \hat{b}\\
		\pcreturn (\hat{b}==b)
		}
		}
	\caption{Interaction indistinguishability experiment}
	\label{IND}
\end{figure}

\begin{definition}[Interaction indistinguishability]\label{defind} 
The interaction indistinguishability is described in Fig~\ref{IND}. 
We say that an exchange system provides interaction indistinguishability if for all PPT $\mathcal{A}$ and $\lambda$ it holds that
\begin{equation*}
	\left|\mathrm{Pr}[\mathrm{Exp^{\mathsf{IND}}}(\mathcal{A},\lambda)=1]-{1}/{2}\right|\leq negl(\lambda)
\end{equation*}

\end{definition}

\vspace{-0.1cm}	
\begin{remark}[Relation with basic anonymity]
The interaction indistinguishability implies the basic anonymity if the exchange identity, and exchange content are public and identical in both queries. 
Only the withdraw identity is concealed like in the basic withdraw anonymity experiment. 


\end{remark}

\vspace{-0.2cm}	
\begin{remark}[Best possible anonymity]


We claim that the interaction indistinguishability achieves the best of possible anonymity. 
The interaction indistinguishability covers all kinds of transactions with specific public information. 
When we specify that the public information exclusively comprises the unavoidable leakage as defined in Def~\ref{defquery},
we can ensure that the platform \textbf{learns nothing} about the user from their interactions, except for the unavoidable leakage. 
Recall the Example 2, we can see any avoidable leakage in these cases excludes some users. 
If all avoidable leakages are prevented, the anonymity set expands to encompass a broader range of users, now including both David and Ella.


\end{remark}

\subsection{Soundness definitions}

\noindent\textit{Extractor.} 
In overdraft prevention and client-compliance experiments, adversary \adv\, who acts as a malicious user, gets some valid records after querying oracles and keeps them secret. 
It means that after some successful anonymous transactions, the experiment does not know how many assets the users actually own and their correct compliance information.
So \textit{it is hard to decide whether \adv~breaks the overdraft prevention or compliance properties}.
To deal with this dilemma, in those security experiments, we introduce an extractor \edv\ that can output the user identity and detailed information for each transaction. 
Note that both overdraft prevention and compliance are soundness properties. 
We mimic the classic proof-of-knowledge style of definition, and the extractor can rewind \adv\ to the former state and \adv\ reuses its randomness $r_{\adv}$, similar to the proof of knowledge extractor \cite{lindell2003parallel}. 
Then the experiment is able to check if any overdraft or compliance cheating happens.

\subsubsection{Overdraft prevention}
 
Overdraft prevention requires that users cannot spend more than they own within the platform. 
Concretely it ensures no malicious users could exchange or withdraw more assets than they actually own. 
Using the transaction details extracted by the extractor \edv, the experiment can check whether an overdraft happens:
(1) the user gets credited more assets than his deposit or exchange-in; 
(2) the user gets deducted less asset than his withdrawal or exchange-out; 
(3) the remainder amount of asset is negative;
(4) the exchange is unfair; 
(5) the user steals others' asset.

We formally define overdraft prevention via the following experiment. 
$\adv$ acts as malicious users and interacts with extractor $\edv$ via querying oracles: 
$\mathcal{O}_{\mathsf{od}}=\{\mathcal{O}^2_{\mathsf{Issue}},\mathcal{O}^2_{\mathsf{Credit}},\mathcal{O}^2_{\mathsf{Deduct}},\mathcal{O}^2_{\mathsf{Update}},$ $\mathcal{O}^2_{\mathsf{Sign}},\mathcal{O}_{\mathsf{public}}\}$.
\adv\ can query at most $N=poly(\lambda)$ times, then it halts. 
$\mathcal{E}$ extracts a set of successful transaction histories $\{h_t\}$ for $t\in[N]$, where each transaction history $h_t=(uid,$$\mathit{Rd_{reg}},$$ \mathit{Rd_{ast}},$$\mathit{Rd'_{reg}},$ $\mathit{Rd'_{ast}}, \mathit{Rd_{ast}^{out}},\mathit{ts_t},\mathit{pub_t})$ includes user id $uid$, the input records $(\mathit{Rd_{reg}}, \mathit{Rd_{ast}})$, the output records $(\mathit{Rd'_{reg}}, \mathit{Rd'_{ast}},$ $\mathit{Rd_{ast}^{out}})$, the transaction transcript $\mathit{ts_t}$, and the related public information $\mathit{pub_t}$, where some records could be empty for some transactions. For example, $\mathit{Rd_{ast}^{out}}$ is empty in withdraw transaction.
Especially, transaction transcript $\textit{ts}_t:=(\textit{name}, \textit{amt},\dots )$ is a tuple of attributes including the asset name $\textit{ts}_t.\textit{name}$ and amount $\textit{ts}_t.\textit{amt}$, etc. Public information $\mathit{pub_t}:=(pr_{in}, pr_{out},\dots)$ is a tuple of attributes including input-asset price $\mathit{pub_t}.pr_{in}$, output-asset price $\mathit{pub_t}.pr_{out}$, etc. Please note, there could be some other metadata per the implementation need, so we cannot specify all the attributes and some attributes could be empty for different transactions. 
Finally, the experiment sequentially checks each transaction history to figure out whether any one of the above overdraft cases happens. 
Especially, in deposit and withdraw transactions, $\mathit{ts_t}$ contains the deposited or withdrawn asset information: $\textit{ts}_t.\textit{name}=i$, $\textit{ts}_t.\textit{amt}=k_i$ denotes that the user deposits or withdraws the asset $i$ with amount  $k_i$. 
To facilitate the check, the experiment maintains a list $\mathsf{RdSet}$ for tracking asset records that have not been spent till the checkpoint, which is initialized as empty.

\begin{figure}[htbp]
    \centering
    \fbox{\procedure{$\mathrm{Exp}^{\mathrm{od}}(\mathcal{A},\mathcal{E},\lambda)$}{		\textit{epp}\leftarrow\mathsf{Setup}(\mathcal{G}(1^{\lambda})), (1^n,st)\leftarrow\mathcal{A}(\textit{epp}), \mathrm{for\ some}\ n\in\mathbb{N}\\
    (pk,sk)\leftarrow\mathsf{PKeyGen}(epp,1^n)\\
    \mathrm{Run}\ \mathcal{A}^{\mathcal{O}_{\mathsf{od}}}(\textit{epp},pk,st) \\
    \ \pcif \text{any oracle aborts} \pcthen \pcreturn 0;\\ 
    \ \pcelse \text{continue until \adv\ halts}\\
    \mathrm{Run}\ \{\mathit{h_t}\}\leftarrow\mathcal{E}^{\mathcal{A}}(\textit{epp})\\
    \ \hfill{\color{brown}{\text{// \edv\ could control the randomness of \adv }}}\\
    \mathrm{Set}\  \mathsf{RdSet}\leftarrow \emptyset \\
    \mathrm{For}\ t=1\ \text{to } N, \text{check}\ \mathit{h_t}:\\
    \ \text{Parse}\ \mathit{h_t=(uid,}\mathit{Rd_{reg}}, \mathit{Rd_{ast}},\mathit{Rd'_{reg}}, \mathit{Rd'_{ast}}, \mathit{Rd_{ast}^{out}},\mathit{ts_t},\mathit{pub_t})\\
  \ \text{For}\ \mathsf{Deposit}\ \text{transaction}:\\ 
  \quad \pcif  Rd_{\textit{ast}}^{\mathit{out}}.\textit{name}\neq\mathit{ts_t}.\textit{name}\ \text{or}\ Rd_{\textit{ast}}^{\mathit{out}}.\textit{amt}\neq\mathit{ts_t}.\textit{amt}\\ 
  \quad \pcthen \pcreturn 1\\
   \quad \color{brown}{\text{// the name or amount of  credited asset record is wrong}}\\
  \quad \pcelse \text{let}\ \mathsf{RdSet}\leftarrow\{\mathit{Rd'_{reg}}, \mathit{Rd_{ast}^{out}}\}\cup\mathsf{RdSet}\\
  \ \text{For}\ \mathsf{Exchange}\ \text{transaction}:\\ 
  \quad \pcif \text{any of the followings happens}, \pcthen \pcreturn 1:\\
  \quad - \{\mathit{Rd_{reg},Rd_{ast}}\}\not\subseteq \mathsf{RdSet}; \color{brown}{\text{// invalid records}}\\
  \quad - Rd'_{\textit{ast}}.\textit{amt}<0\ \color{brown}{\text{// deducted amount exceeds asset amount} }\\
  \quad - \mathit{(Rd_{ast}.amt-Rd'_{ast}.amt)\cdot pub_t.pr_{in} \neq}\\
  \qquad\qquad \quad \mathit{Rd_{ast}.amt \cdot pub_t.pr_{out}}\\
  \qquad \color{brown}{\text{// the deducted value is not equal to the credited value}}\\
  \quad \pcelse  \mathsf{RdSet}\leftarrow \mathsf{RdSet}\setminus\{ 
  \mathit{Rd_{reg},Rd_{ast}}
  \}\cup \{\mathit{Rd'_{reg},Rd'_{ast},Rd_{ast}^{out}}\}\\
  \ \text{For}\ \mathsf{Withdraw}\ \text{transaction}:\\ 
  \quad \pcif \text{any of the followings happens}, \pcthen \pcreturn 1:\\
  \quad - \{\mathit{Rd_{reg},Rd_{ast}}\}\not\subseteq \mathsf{RdSet};\\
  \quad - \mathit{Rd_{ast}.name\neq ts_t.name}\\ 
  \quad - \mathit{Rd'_{ast}.amt<0}\ \text{or}\ \mathit{Rd_{ast}.amt-Rd'_{ast}.amt< ts_t.amt}\\
  \qquad\color{brown}{\text{// withdraws more asset than the deducted amount}};\\
  \quad \pcelse \text{let}\ \mathsf{RdSet}\leftarrow\mathsf{RdSet}\setminus\{\mathit{Rd_{reg},Rd_{ast}}\}\cup \{\mathit{Rd'_{reg},Rd'_{ast}}\}\\
  \pcreturn 0 
		}		
	}
	\caption{Overdraft prevention experiment.}
        \vspace{-0.2cm}
	\label{odp}
\end{figure}

\begin{definition}[Overdraft Prevention]\label{od} 
As shown in Figure \ref{odp}, we say that an exchange system can prevent overdraft if for all PPT $\mathcal{A}$ and $\lambda$, there exists \edv\ such that it holds that \begin{equation*}
	\mathrm{Pr}[\mathrm{Exp^{od}}(\mathcal{A},\edv, \lambda)=1]\leq negl(\lambda)
\end{equation*}

\end{definition}

\vspace{-0.2cm}	

\subsubsection{Compliance}
This property requires both clients and the platform to comply with the regulation rules. Here we represent these rules using compliance functions, and formalize both client compliance and platform compliance.

In the client compliance experiment, 
$\adv$ acts as malicious users and interacts with extractor $\edv$ via querying oracles: 
$\mathcal{O}_{\mathsf{clie-comp}}=\{\mathcal{O}^2_{\mathsf{Issue}},\mathcal{O}^2_{\mathsf{Credit}},\mathcal{O}^2_{\mathsf{Deduct}},\mathcal{O}^2_{\mathsf{Update}}, \mathcal{O}^2_{\mathsf{Sign}}, \mathcal{O}_{\mathsf{public}}\}$. 
\adv~outputs a certified document $\textit{doc}^*$ with four attributes $ (uid,cp,mt,sig)$.
\edv~ extracts a set of successful transaction histories $\{h_t\}$ as in overdraft prevention experiment. \adv\ wins, if $\textit{doc}^*$ passes the authority verification, i.e., $\mathsf{Verify}(epp,pk,\textit{doc}^*)\rightarrow 1$,  but there exists extracted transaction history that does not follow basic client-compliance rules (we will specify in the following), or the submitted valid document $doc^*$ is inconsistent with the extracted transaction histories $\{h_t\}$.
Concretely, each transaction in $\{h_t\}$ satisfy that: 
(1) the user has already registered (KYC rule); 
(2) all records in one transaction belong to the same user (AML rule to avoid secretly transferring assets to other accounts);
(3) the compliance-related information in each asset record, such as buying and selling price, is correct (general compliance rule).
To facilitate the check, the experiment maintains a list $\mathsf{RU}$ of all registered users, which is initialized as empty.

If all transaction histories in $\{h_t\}$ pass the above check, consistency checks between $\textit{doc}^*$ and $\{h_t\}$ per function $F$ will also be done. 
Specifically, in one transaction, the compliance information $\mathsf{cp}_{t}$ is collected from $\{h_t\}$ (e.g., the asset prices and amount) and is added to the user's the compliance information set $\{\mathsf{cp}\}_{uid}$. 
Then $F$ is applied to $\{\mathsf{cp}\}_\mathit{doc^*.uid}$  to get the final result $\widetilde{cp}_\mathit{doc^*.uid}$.
See Fig~\ref{cp1} for details.

\begin{figure}[htbp]
	\centering
	\fbox{
		\procedure{$\mathrm{Exp}^{\mathrm{clie-comp}}(\mathcal{A}, \mathcal{E}, F, \lambda)$}{
		epp\leftarrow\mathsf{Setup}(\mathcal{G}(1^{\lambda})), (1^n,st)\leftarrow\mathcal{A}(epp), \mathrm{for\ some}\ n\in\mathbb{N}\\
		(pk,sk)\leftarrow\mathsf{PKeyGen}(epp,1^n), \mathsf{RU} \leftarrow \emptyset\\
  \mathrm{Run}\ \mathit{doc^*}:=(uid,cp,mt,sig)/\bot\leftarrow  \mathcal{A}^{\mathcal{O}_{\mathsf{clie-comp}}}(epp,pk,st) \\
		\pcif \mathrm{oracle\ aborts\ or\ }  \mathsf{Verify}(epp, pk, \mathit{doc}^*)=0 
  \pcthen \pcreturn 0\\ 
        \mathrm{Run}\  \{h_t\}\leftarrow\mathcal{E}^{\mathcal{A}}(epp) \
        \color{brown}{\text{// \edv\ controls the randomness of \adv}} \\
        \mathrm{For}\ t=1\ \text{to } N, \text{check}\ \mathit{h_t}:\\
        \ \mathrm{Parse}\ \mathit{h_t=(uid,}\mathit{Rd_{reg}}, \mathit{Rd_{ast}},\mathit{Rd'_{reg}}, \mathit{Rd'_{ast}}, \mathit{Rd_{ast}^{out}},\mathit{ts_t},\mathit{pub_t})\\
        \ \text{For}\ \mathsf{Join}\ \text{transaction}:\\ 
        \quad 
        \text{let}\ \mathsf{RU}\leftarrow\{uid\}\cup\mathsf{RU},\{\mathsf{cp}\}_{uid}=\emptyset\\ 
        \ \text{For}\ \mathsf{Deposit}\ \text{transaction}:\\ 
        \quad \pcif \text{any of the followings happens}, \pcthen \pcreturn 1:\\
  \quad - \text{single transaction involves different user identifiers};\\
  \quad - \textit{uid}\notin\mathsf{RU};\color{brown}\\
  \quad\quad \color{brown}{\text{// also check them in exchange and withdraw transactions}}\\
  \quad - \mathit{Rd_{ast}^{out}.acp\neq pub_t.pr_{out}}
  \qquad \color{brown}{\text{// price was wrong}}\\
  \ \text{For}\ \mathsf{Exchange}\ \text{transaction}:\\ 
  \quad \pcif \mathit{Rd'_{ast}.acp\neq Rd_{ast}.acp \ \text{or} \ Rd_{ast}^{out}.acp\neq pub_t.pr_{out}}\\
  \qquad \pcthen \pcreturn 1\\
  \quad \pcelse \text{collect}\ \mathsf{cp}_t \text{ from } h_t, \text{add } \mathsf{cp}_t \text{ to } \{\mathsf{cp}\}_{uid}\\
  \ \text{For}\ \mathsf{Withdraw}\ \text{transaction}:\\ 
  \quad \pcif Rd'_{\textit{ast}}.acp\neq Rd_{\textit{ast}}.acp \pcthen \pcreturn 1\\
  \quad \pcelse \text{collect}\ \mathsf{cp}_t \text{ from } h_t, \text{add } \mathsf{cp}_t \text{ to } \{\mathsf{cp}\}_{uid}\\
  \pcif \{\mathsf{cp}\}_\mathit{doc^*.uid}=\emptyset, \pcthen \pcreturn 0\\
  \pcelse \text{compute } \widetilde{cp}_\mathit{doc^*.uid}\leftarrow F(\{\mathsf{cp}\}_\mathit{doc^*.uid})\\
  \color{brown}{\text{// \textit{F} is a function specified by the compliance rule}}\\
    \pcif  doc^*.cp\neq \widetilde{cp}_\mathit{doc^*.uid} \pcthen \pcreturn 1 \\
    \pcelse \pcreturn 0
		}
	}
	\caption{F-Client-Compliance experiment.}
\vspace{-0.4cm}
	\label{cp1}
\end{figure}
\begin{definition}[Client Compliance]\label{defcp} 
The client compliance experiment is shown in Figure \ref{cp1}. We say that an exchange system is client-compliant w.r.t. a compliance function F if for all PPT $\mathcal{A}$ and $\lambda$, there exists \edv\ such that
\begin{equation*}
	\mathrm{Pr}[\mathrm{Exp^{clie-comp}}(\mathcal{A},\edv, F, \lambda)=1]\leq negl(\lambda)
\end{equation*}

\end{definition}

For {\em platform compliance}, it is similar with the correctness, and we require that the internal state of the honest platform is always satisfied with the platform compliance rule. 
For example, the platform can self-check whether it owns sufficient cash and assets to cover fund outflows for the previous 30 days according to the regulation rule \cite{Basel}. 

\section{Private and Compliable Exchange System}
\label{construction_system}

In this section, we present the generic construction of the private and compliable exchange system $\Pi_{\mathsf{\name}}$ that achieves full anonymity, overdraft prevention, and compliance, and we provide formal proofs of its security.
Before that, we give concrete compliance rules that our system aims to comply with for both clients and the platform. 
Following that, we introduce the concept of price credentials and illustrate the high-level idea about the construction.
%

\noindent{\em Concrete compliance rules.} \label{compliance-rule}
For F-client-compliance, we take the tax report as an example of F, called tax-report-client-compliance. 
It requires clients to report the investment profit  yearly (total gain minus total cost). 
The taxable profit is calculated when clients sell their assets via exchange or withdraw transactions.
Concretely, $\textit{cost}=\textit{pr}_i\cdot k_i$ and $\textit{gain}=\overline{\textit{pr}}_i\cdot k_i$, where $k_i$ is the exchange-out or withdrawn asset amount, $\textit{pr}_i$ is the buying price and $\overline{\textit{pr}}_i$ is the selling price of the asset. 
Then he reports the accumulated cost $\mathit{cp}_1=\sum \mathit{pr_i}\cdot\mathit{k_i}$ and gain $\mathit{cp}_2=\sum \overline{pr}_i\cdot\mathit{k_i}$ to the authority.





For G-platform-compliance, we refer to the liquidity coverage ratio (LCR) requirement of Basel Accords \cite{Basel}, a series of banking regulations established by representatives from major global financial centers. 
LCR mandates that banks hold sufficient cash and liquid assets to cover fund outflows for 30 days. 
The platform always knows clearly the inflows/outflows for each coin including fiat money transfers (as the platform receives or transfers them out) by checking its internal state.
It can prepare enough amount of coins for all kinds of assets.
Thus this LCR-platform-compliance is compatible with our fully anonymous setting.

\noindent\textit{{About price credential and price fluctuation.}}\label{Pricecredential} 
To achieve efficient private exchange with compliance, we introduce {price credentials} denoted as $\mathit{px:=(time,name,pr,sig)}$,
where the signature $\mathit{px.sig}$ is signed by the platform on the current time $\mathit{px.time}$, the coin name $\mathit{px.name}$, and the corresponding current price $\mathit{px.pr}$. 
To tackle price fluctuation without leaking coin information, the platform keeps signing the latest prices for all coins at the same timestamp.
In the exchange transaction, the user proves that the newly exchanged asset record contains the name and price, and they know a valid signature on these values from the latest timestamp's price credential. The user also ensures that the exchange is fair based on these prices and amounts. 
This approach enables 
the user to prove with just a single credential. 
It significantly reduces communication and computation costs to a constant level. 



\noindent\textit{{High-level idea.}}
Before giving the formal algorithms, let us illustrate the high-level construction idea with five concrete transaction examples as follows. 
The platform only knows some public information, like all registered users and their bank accounts, the name and amount of deposited and withdrawn assets, as shown in Fig~\ref{view}. 


\begin{figure}[!htbp] 
\centering 
\includegraphics[width=1.05\linewidth, height= 0.35\linewidth]{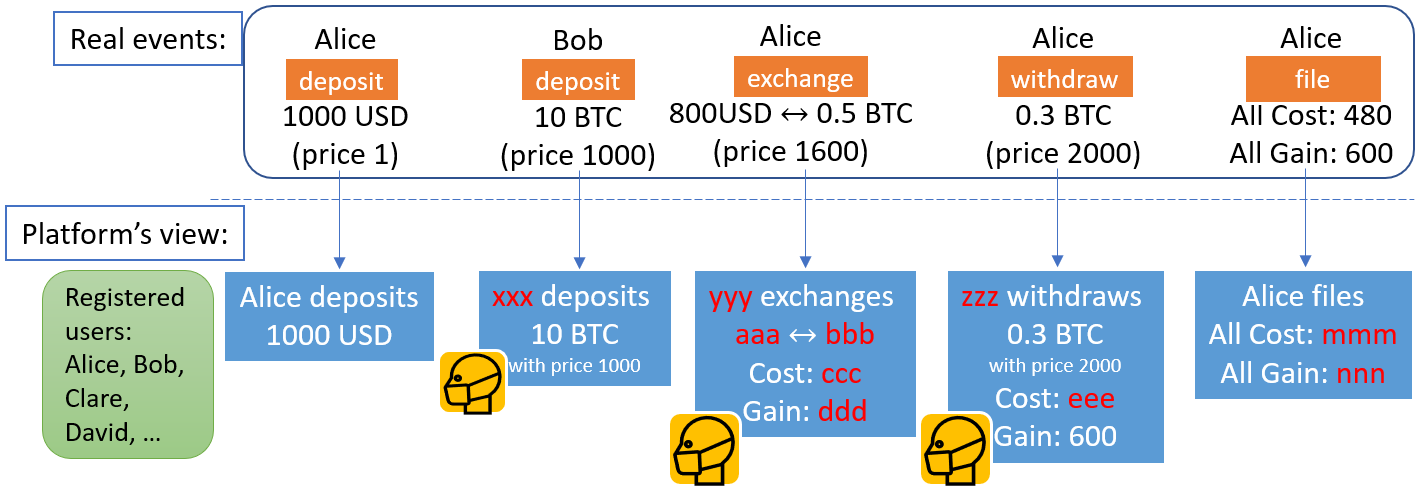} 
\caption{Platform's view in the exchange system}
\label{view} 
\end{figure}

\setlist{nolistsep}
\begin{itemize}[noitemsep,leftmargin=*]
    \item When deposits 1000 USD, the bank account reveals Alice's identity. Then Alice gets an asset credential or signature generated by the platform.  
    To prevent double-spending and ensure regulatory compliance, the signed message must contain additional attributes beyond asset details.
    These attributes include a unique asset identifier, Alice's user identifier. 
    They should be hidden from the platform to keep the user anonymous. 
    Alice is required to prove in zero-knowledge that the blinded user identifier is equal to that in her registration credential. 
    To meet tax-report-client-compliance, the profit of selling assets in exchange and withdraw transactions need to be computed. 
    This necessitates including the exact cost of the assets when they were purchased in deposit and exchange transactions. To this end, we introduce the buying price as an additional attribute in asset credentials. 
    \item When Bob deposits 10 BTC, the only difference from a fiat deposit is that the platform does not know Bob's identity. 
    
    \item When Alice exchanges 800 USD for 0.5 BTC, she uses a 1000 USD asset credential to request two new credentials for the remaining 200 USD and 0.5 BTC, keeping their details hidden. The platform grants her request under specific conditions, including demonstrating in zero-knowledge that she has enough USD, ensuring the credentials share the same user identifier, confirming the non-negative remaining USD amount, matching the BTC's price with the latest credential, and verifying the total exchange value equivalence.
    
    
    We give each asset one separate asset credential for practicality especially with compliance. Intuitively, if all assets are in one credential their attributes would increase linearly with the asset names. 
    Compliance makes it more complicated, because every asset transaction would have to be recorded as an attribute in the credential. This means that the credential attributes would keep growing. 
    It is practical to separate each transaction asset, but it is hard to collect all transactions to calculate the total profit. 
    We solve it by accumulating profit for each transaction involving exchanged-out or withdrawn assets, and recording it on user's exclusive registration credential as two attributes: accumulated cost and accumulated gain. 
    Concretely, Alice shows her registration credential, and requests the platform to issue a new registration credential on a new index, updated cost and gain which are consistent with real cost (amount times buying price) and gain (amount times selling price). 
    \item When withdraws 0.3 BTC, it is similar to the exchange operation, except for the exchanged-in asset. Alice also verifies the receipt of the withdrawn BTC on the blockchain.
    \item When files all cost 480 and all gain 600,  Alice shows a valid registration credential with her identity, and requests an updated registration credential with a new index, reset cost and gain as zeros, and a file credential used to show to the regulatory authority. The file credential contains Alice's real identity, the correct cost 480 and gain 600, and some regulatory auxiliary information. Since the cost and gain are hidden from the platform to avoid information leakage, Alice should prove the committed cost and gain are equal to the ones in her registration credential, and the platform signs blindly. Alice unblinds it and submits the message signature pair to the authority for tax report.
\end{itemize}


\vspace{-0.2cm}	
\subsection{An efficient $\mathsf{\name}$\ construction}
\label{construction}
In this section, we give an efficient $\mathsf{\name}$ construction from additive homomorphic commitment, blind signature and zero-knowledge proof\footnote{Note that these primitives are also used for updatable anonymous credentials and an incentive system in \cite{Bl2019UpdatableSystems}. As we explain in detail in App.\ref{IncentiveSystem}, they do not support the exchange operation and compliance rule in our setting.}. 
Let $\mathsf{Com}$ be an additively homomorphic commitment scheme, $\Pi_{\mathrm{bs}}=(\mathsf{KeyGen}$, $\mathsf{Com}$, $ \langle\mathsf{BlindSign}, \mathsf{BlindRcv}\rangle,\mathsf{Vrfy})$ be a blind signature scheme using $\mathsf{Com}$ to blind messages, and $\mathsf{ZKAoK}$ is the underlying proof system. 
The platform maintains the registered user set $\mathsf{USet}$ and the identifier set $\mathsf{ID}$ which are initially empty.
Formal construction is in Fig~\ref{scheme}. 
The concrete instantiation is presented in section~\ref{instantiaction}. 

\begin{figure*}[!htbp] 
\centering 
\includegraphics[page=2, trim = 18mm 27mm 13mm 29mm, width=1\linewidth,frame]{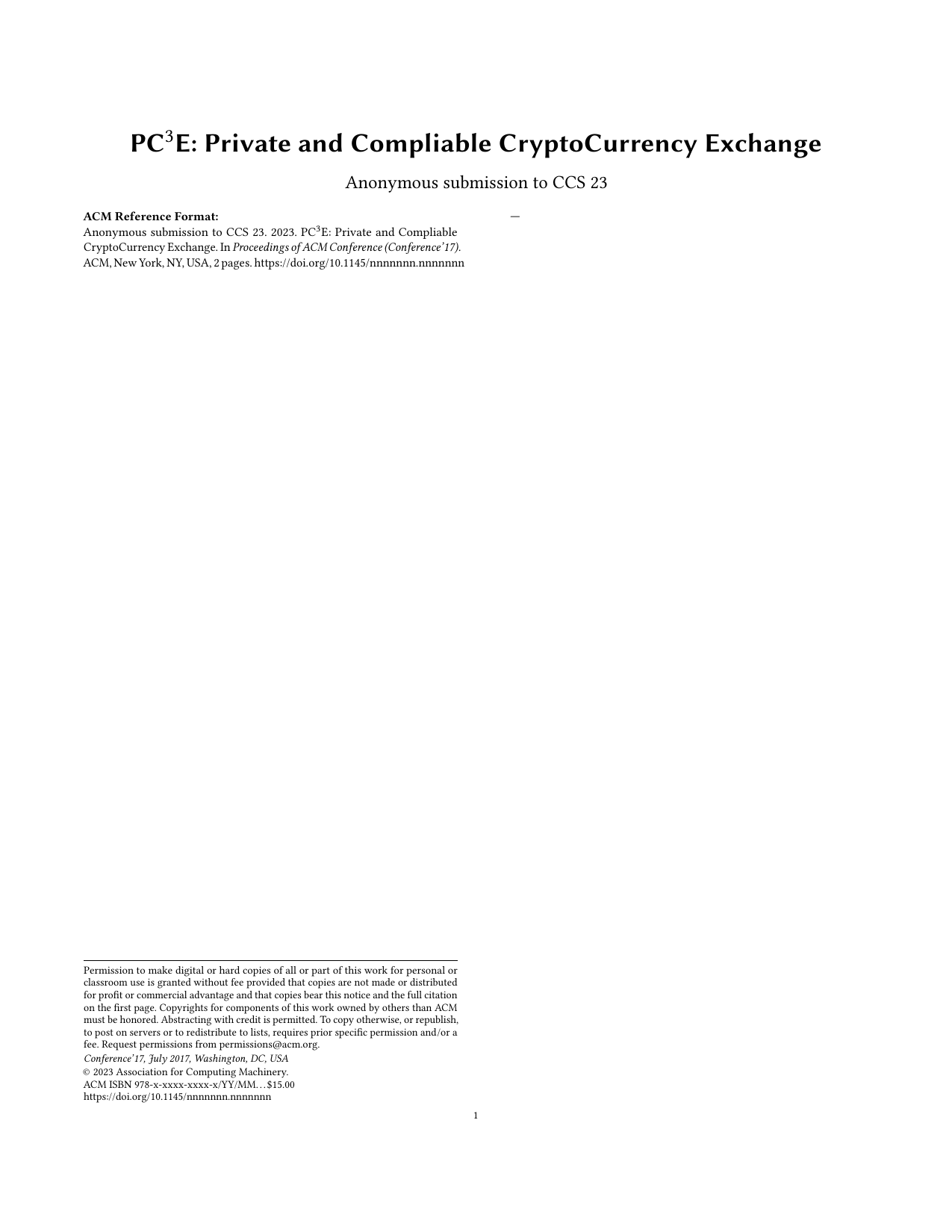} 
\caption{Our efficient construction of Pisces}
\label{scheme} 
\end{figure*}

\noindent{\bf Extended compliance support.}\ 
Our construction also easily supports other regulation policies. 
We just give sketches here due to the page limitation.
For example, AML requires that users cannot exchange or withdraw too many times in a time period.
It can be achieved by adding a counter in the registration record.
In each exchange or withdraw transaction, the user proves that the counter in his latest registration record is smaller than some value and the counter is credited by one in the newly issued record. 
Other rules are similar, such as transaction amounts, and (total) value of exchanged assets.

We can also enforce tax filing by prohibiting users who have not filed tax last year from exchanging and withdrawing. 
It can be achieved by adding a year number in the registration record indicating the year when the user filed tax last time.
In each exchange or withdraw transaction, the user shows the year number in his latest registration record and this number is credited by one when the user has filed his tax.

\vspace{-0.3cm}	

\subsection{Security analysis}



\begin{theorem}[Interaction indistinguishability]
    {If $\Pi_{\mathrm{bs}}$ has blindness and the underlying $\mathsf{ZKAoK}$ is zero-knowledge, the commitment is hiding, then $\Pi_{\mathsf{\name}}$ has interaction indistinguishability.}
\end{theorem}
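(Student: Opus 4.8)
The plan is to prove interaction indistinguishability by a sequence of hybrid games, starting from the real experiment $\mathrm{Exp}^{\mathsf{IND}}$ of Fig.~\ref{IND} and ending at a game in which $\mathcal{A}$'s entire view is information-theoretically independent of the challenge bit $b$. The guiding observation is that, for each interaction with either oracle set, all that $\mathcal{A}$ actually sees consists of: the freshly sampled one-time identifiers/nonces it is shown in the clear (e.g.\ $rid,aid_i$ in $\langle\mathsf{Exchange},\mathsf{Update}\rangle$ and $\langle\mathsf{Withdraw},\mathsf{Deduct}\rangle$, the nonce in $\langle\mathsf{Join},\mathsf{Issue}\rangle$ and $\langle\mathsf{File},\mathsf{Sign}\rangle$), the commitments $com_1,\dots,com_k$ sent by the honest user, the $\mathsf{ZKAoK}$ proof $\pi$, the transcript of the $\langle\mathsf{BlindSign},\mathsf{BlindRcv}\rangle$ sub-protocols, and the public information returned by $\mathcal{O}_\mathsf{Public}$. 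Everything that could distinguish the two queries in a pair — user secret keys, credentials, asset types, amounts, prices inside records, accumulated cost/gain — is confined to the oracle state.

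\textbf{Game hops.} Let $\mathsf{G}_0$ be the real experiment. In $\mathsf{G}_1$, replace every $\mathsf{ZKAoK}$ proof produced in \emph{both} worlds by a proof output by the zero-knowledge simulator on the public statement alone, so the oracles no longer touch the witnesses when proving; since the $\mathsf{ZKAoK}$ is perfectly simulatable, $\mathsf{G}_1$ is distributed exactly as $\mathsf{G}_0$ (a hybrid over the polynomially many proofs, each handled even for adaptive $\mathcal{A}$). In $\mathsf{G}_2$, replace every commitment sent by the oracles with a commitment to a fixed dummy message of the appropriate length under fresh randomness; indistinguishability follows from the hiding property of $\Pi_{\mathrm{cmt}}$ by a hybrid over all commitments (this is sound precisely because after $\mathsf{G}_1$ the committed values are no longer needed to build the proofs). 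In $\mathsf{G}_3$, replace each $\langle\mathsf{BlindSign},\mathsf{BlindRcv}\rangle$ execution so that the oracle runs $\mathsf{BlindRcv}$ on a dummy message rather than the real one; since $\mathcal{A}$ plays the signer, this is exactly the view a blindness adversary sees, so indistinguishability follows from the blindness of $\Pi_{\mathrm{bs}}$ by a hybrid over all signing sessions. There are $N=\mathrm{poly}(\lambda)$ steps, each with a constant number of proofs, commitments and blind-signing sessions, so the total number of sub-steps is polynomial.

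\textbf{Conclusion.} In $\mathsf{G}_3$, the transcript $\mathcal{A}$ obtains from $\langle\mathcal{A}(st^0),\mathcal{O}^0_\mathsf{IND}(Q^b)\rangle$ is a function only of: the oracle type, a vector of freshly and uniformly sampled nonces/identifiers, simulated proofs, dummy commitments, dummy blind-signing transcripts, and the public information attached to $Q^b$. By Definition~\ref{defquery}, for a publicly consistent pair $(Q^0,Q^1)$ the public information coincides in $Q^0$ and $Q^1$ (and, for Join/File, the user identifier; for Deposit/Withdraw, the asset name, amount and on-chain/bank address). Hence this transcript is distributed identically whether $Q^b=Q^0$ or $Q^b=Q^1$, and symmetrically for the right world on $Q^{1-b}$; the two worlds use independent randomness, so the joint view of $\mathcal{A}$ in $\mathsf{G}_3$ is independent of $b$ and $\Pr[\hat b=b]=1/2$ exactly. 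Combining the game hops yields $|\Pr[\mathrm{Exp}^{\mathsf{IND}}(\mathcal{A},\lambda)=1]-1/2|\le\mathit{negl}(\lambda)$.

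\textbf{Main obstacle.} The delicate point is the last step: arguing rigorously that the \emph{revealed} data carry no cross-transaction information. One must check that the construction discloses each registration-credential nonce and each asset-credential nonce at most once and always replaces it with an independently sampled fresh value, so the sequence of revealed nonces is distributed as fresh uniform randomness regardless of which user or record is involved — this is what makes the two worlds in $\mathsf{G}_3$ identically distributed rather than merely computationally close. A related corner case is that $\mathcal{A}$, acting as the platform, may abort or return a malformed blinded signature, causing the honest-user algorithm (hence the oracle) to output $\bot$; since $\mathcal{A}$ interacts with both worlds symmetrically and never learns which forwarded query is $Q^0$, by induction on the steps its messages to each world (and thus any induced failure pattern) are independent of $b$. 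Neither issue needs a cryptographic assumption, but both must be spelled out to justify the ``identically distributed'' claim.
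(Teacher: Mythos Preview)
Your proposal is correct and follows essentially the same hybrid-game strategy as the paper: simulate the $\mathsf{ZKAoK}$ proofs first, then replace the committed messages by dummies, and conclude that the adversary's view is independent of $b$. The paper collapses your $\mathsf{G}_2$ and $\mathsf{G}_3$ into a single hop (since in this construction $\Pi_{\mathrm{bs}}.\mathsf{BlindSign}$ receives precisely the commitment, so blindness and commitment hiding are invoked together), and it does not spell out the nonce-freshness and abort-handling points you raise in your ``Main obstacle'' paragraph; your treatment of those corner cases is more careful than the paper's, but the overall argument is the same.
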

\vspace{-0.3cm}

\begin{proof}
We prove this theorem by a sequence of hybrid experiments $(\mathsf{G}_{\mathrm{real}}, \mathsf{G}_{\mathrm{1}}, \mathsf{G}_{\mathrm{sim}})$.
$\mathsf{G}_{\mathrm{real}}$ is the original $\mathsf{IND}$ experiment.
$\mathsf{G}_{1}$ modifies $\mathsf{G}_{\mathrm{real}}$ by simulating the $\mathsf{ZKAoK}$ proof.
$\mathsf{G}_{\mathrm{sim}}$ modifies $\mathsf{G}_{1}$ by replacing the original commitments with commitments on random strings. 
Since the underlying $\mathsf{ZKAoK}$ is zero-knowledge, $\mathsf{G}_{1}$ can be distinguished from $\mathsf{G}_{\mathrm{real}}$ with only negligible probability.
Due to that the commitment scheme is hiding and the blind signature has blindness, $\mathsf{G}_{\mathrm{sim}}$ can be distinguished from $\mathsf{G}_{1}$ with only negligible probability.
Thus $\mathsf{G}_{\mathrm{sim}}$ can be distinguished from $\mathsf{G}_{\mathrm{real}}$ with only negligible probability.
Furthermore, in $\mathsf{G}_{\mathrm{sim}}$, \adv's view is fully simulated, independent of $b$, \adv's advantage in $\mathsf{G}_{\mathrm{sim}}$ is 0. So \adv\ wins in $\mathsf{G}_{\mathrm{real}}$ with at most negligible probability.  
We describe $\mathsf{G}_{\mathrm{1}}, \mathsf{G}_{\mathrm{sim}}$ as follows.


\noindent $\mathsf{G}_{1}$:
This experiment modifies $\mathsf{G}_{\mathrm{real}}$ by simulating the $\mathsf{ZKAoK}$ proof. 
It works as follows:
at the beginning, 
$\mathcal{C}$ chooses $b\sample\{0,1\}$ and generates $pp\leftarrow\mathsf{Setup}(1^{\lambda})$ and zero-knowledge trapdoor $\textit{td}\leftarrow\mathsf{Sim}(1^{\lambda})$.
$\mathcal{C}$ sends $pp$ to $\mathcal{A}$ and initializes two sets of oracles $\mathcal{O}^0_{\mathsf{IND}}$ and $\mathcal{O}^1_{\mathsf{IND}}$.
In the following oracle queries, the proofs are generated by $\mathcal{C}$ using $\textit{td}$: $\pi\leftarrow\mathsf{Sim}(\textit{td}, x)$.
$\mathsf{G}_{1}$ proceeds in steps, and each time $\mathcal{A}$ queries an oracle, it sends $\mathcal{C}$ a pair of queries $(Q^0, Q^1)$.
$\mathcal{C}$ first checks that they are \textit{publicly consistent} according to Def. \ref{defquery}, then simulates different oracles.

\setlist{nolistsep}
\begin{itemize}[noitemsep,leftmargin=*]

\item[-] For $\mathcal{O}^1_\mathsf{Join}$ oracle, $\mathit{Q^0=(req_{joi}^0, ref_{reg}^0)}$ and $\mathit{Q^1=(req_{joi}^1, ref_{reg}^1)}$ . 
To answer them, $\mathcal{C}$ behaves as in $\mathsf{G}_{\mathrm{real}}$ except for the following modification.
The $\mathsf{ZKAoK}$ proofs $\pi^0, \pi^1$ are simulated using $\textit{td}$.
$\mathcal{C}$ replies $\mathcal{A}$ with $(\textit{com}^b, \pi^0)$ and $(\textit{com}^{1-b},\pi^1)$. 
If $\mathcal{A}$ accepts the proofs, it runs $\hat{\sigma}^0\leftarrow\mathsf{BlindSign}(pp,pk,com^0)$ and $\hat{\sigma}^{1}\leftarrow\mathsf{BlindSign}(pp,pk,com^{1})$ and sends them to $\mathcal{C}$ and continues.

\item[-] For $\mathcal{O}^1_{\mathsf{Deposit}}$ oracle, $\mathit{Q^0=(uid^0, req_{dep}^0,}\mathit{ref}^0_\mathsf{reg}, \mathit{ref}_\mathsf{ast}^{\mathsf{out}1})$ and $\mathit{Q^1=(uid^1, req_{dep}^1,} \mathit{ref}^1_\mathsf{reg}, \mathit{ref}_\mathsf{ast}^{\mathsf{in}1},$ $ \mathit{ref}_\mathsf{ast}^{\mathsf{out}1})$. 
To answer them, $\mathcal{C}$ behaves as in $\mathsf{G}_{\mathrm{real}}$ except the following modification:
It simulates the $\mathsf{ZKAoK}$ proofs $\pi^0, \pi^1$ using $\textit{td}$. 
$\mathcal{C}$ replies $\mathcal{A}$ with $(\{\mathit{com}_u^b\}_{u=1}^2, \pi^0)$ and $(\{\mathit{com}_u^{1-b}\}_{u=1}^2, \pi^1)$. 
If $\mathcal{A}$ accepts the proofs, it runs the $\mathsf{BlindSign}$ algorithm and sends the respective blinded signatures to $\mathcal{C}$ and continues.


\item[-] For $\mathcal{O}^1_{\mathsf{Exchange}}$ oracle, $\mathit{Q^0=(uid^0,req^0_{exc},}\mathit{ref}^0_{\mathsf{reg}}, \mathit{ref}_{\mathsf{ast}}^{\mathsf{in}0},\ $ $ \mathit{ref}_{\mathsf{ast}}^{\mathsf{out}0})$ and $\mathit{Q^1=(uid^1,req^1_{exc},}\mathit{ref}^1_{\mathsf{reg}}, \mathit{ref}_{\mathsf{ast}}^{\mathsf{in}1}, \mathit{ref}_{\mathsf{ast}}^{\mathsf{out}1})$.  
To answer them, $\mathcal{C}$ behaves as in $\mathsf{G}_{\mathrm{real}}$ except that 
it 
simulates the $\mathsf{ZKAoK}$ proofs $\pi^0, \pi^1$ using $\textit{td}$. 
$\mathcal{C}$ replies $\mathcal{A}$ with $(\{\mathit{com}_u^b\}_{u=1}^7, \pi^0)$ and $(\{\mathit{com}_u^{1-b}\}_{u=1}^7, \pi^1)$. 
If $\mathcal{A}$ accepts the proofs, it runs the $\mathsf{BlindSign}$ algorithm and sends the blinded signatures to $\mathcal{C}$ and continues.
\item[-] For $\mathcal{O}^1_{\mathsf{Withdraw}}$ oracle, $\mathit{Q^0 =(uid^0,req^0_{wit},}\mathit{ref}^0_{\mathsf{reg}}, \mathit{ref}_{\mathsf{ast}}^{\mathsf{in}0})$ and $\mathit{Q^1 =(uid^1,req^1_{wit},}\mathit{ref}^1_{\mathsf{reg}}, \mathit{ref}_{\mathsf{ast}}^{\mathsf{in}1})$. 
To answer them, $\mathcal{C}$ behaves as in $\mathsf{G}_{\mathrm{real}}$ except that
it simulates the $\mathsf{ZKAoK}$ proofs $\pi^0,\pi^1$ using $\textit{td}$. 
$\mathcal{C}$ replies $\mathcal{A}$ with $(\{\mathit{com}_u^b\}_{u=1}^4, \pi^0)$ and $(\{\mathit{com}_u^{1-b}\}_{u=1}^4, \pi^1)$. 
If $\mathcal{A}$ accepts the proofs, it runs the $\mathsf{BlindSign}$ algorithm and sends the blinded signatures to $\mathcal{C}$ and continues.  
\item[-] For $\mathcal{O}^1_{\mathsf{File}}$ oracle, $\mathit{Q^0 = (uid^0, req^0_{fil}},\mathit{ref}^0_{\mathsf{reg}})$, and $\mathit{Q^1 = (uid^1, req^1_{fil}},\mathit{ref}^1_{\mathsf{reg}})$. 
To answer them, $\mathcal{C}$ behaves as in $\mathsf{G}_{\mathrm{real}}$ except that 
it simulates the $\mathsf{ZKAoK}$ proofs $\pi^0, \pi^1$ using $\textit{td}$. 
$\mathcal{C}$ replies $\mathcal{A}$ with 
$(\{\mathit{com}_u^b\}_{u=1}^3, \pi^0)$ and $(\{\mathit{com}_u^{1-b}\}_{u=1}^3, \pi^1)$. 
If $\mathcal{A}$ accepts the proofs, it runs the $\mathsf{BlindSign}$ algorithm and sends the blinded signatures to $\mathcal{C}$ and continues.  
   
\end{itemize}

Note that from $\mathsf{G}_{\mathrm{real}}$  to $\mathsf{G}_{1}$, the only difference is that the $\mathsf{ZKAoK}$ proofs are simulated. 
Due to that the $\mathsf{ZKAoK}$ scheme is zero-knowledge, we have that 
$|\mathrm{Pr}[\mathsf{G}_{\mathrm{real}}(\adv,\lambda)=1]-\mathrm{Pr}[\mathsf{G}_{1}(\adv,\lambda)=1]|\leq negl(\lambda)$.

\noindent$\mathsf{G}_{\mathrm{sim}}$: 
This experiment modifies $\mathsf{G}_{1}$ by replacing the original commitments with commitments on random strings. 
$\mathsf{G}_{\mathrm{sim}}$ proceeds in steps, 
and each time $\mathcal{A}$ invokes an oracle, it sends $\mathcal{C}$ a pair of queries $(Q^0, Q^1)$.
$\mathcal{C}$ first checks that they are \textit{publicly consistent}, 
then simulates different oracles as follows.

\vspace{-0.2cm}
\setlist{nolistsep}
\begin{itemize}[noitemsep,leftmargin=*]
\item[-] For $\mathcal{O}^1_\mathsf{Join}$ oracle, 
to answer queries 
$Q^0, Q^1$, $\mathcal{C}$ behaves as in $\mathsf{G}_{1}$ except that it produces commitment $\textit{com}^0, \textit{com}^1$ on random strings $r^0, r^1$. 
\item[-] For $\mathcal{O}^1_{\mathsf{Deposit}}$ oracle, 
to answer $Q^0, Q^1$, $\mathcal{C}$ behaves as in $\mathsf{G}_1$ except that it produces commitments $\{\mathit{com}_u^0\}_{u=1}^2$ and $\{\mathit{com}_u^{1}\}_{u=1}^2$ on random strings.

\item[-] For $\mathcal{O}^1_{\mathsf{Exchange}}$ oracle, 
to answer $Q^0, Q^1$, $\mathcal{C}$ behaves as in $\mathsf{G}_1$ except that it produces commitments $\{\mathit{com}_u^0\}_{u=1}^7$ and $\{\mathit{com}_u^{1}\}_{u=1}^7$ on random strings. 

\item[-] For $\mathcal{O}^1_{\mathsf{Withdraw}}$ oracle, 
to answer $Q^0, Q^1$, $\mathcal{C}$ behaves as in $\mathsf{G}_1$ except that it produces commitments $\{\mathit{com}_u^0\}_{u=1}^4$ and $\{\mathit{com}_u^{1}\}_{u=1}^4$ on random strings. 

\item[-] For $\mathcal{O}^1_{\mathsf{File}}$ oracle, 
to answer $Q^0, Q^1$, $\mathcal{C}$ behaves as in $\mathsf{G}_1$ except that it produces commitments $\{\mathit{com}_u^0\}_{u=1}^3$ and $\{\mathit{com}_u^{1}\}_{u=1}^3$ on random strings. 

\end{itemize}

In each of the above cases, $\mathcal{A}$'s view is independent of $b$. 
Thus, $\mathcal{A}$ just outputs a random guess $\hat{b}$ in $\mathsf{G}_{\mathrm{sim}}$, so its advantage is 0: $\mathrm{Pr}[\mathsf{G}_{\mathrm{sim}}(\adv,\lambda)=1]-1/2=0$

Note that from $\mathsf{G}_{1}$ to $\mathsf{G}_{\mathrm{sim}}$, we change that the commitments are on the random strings. 
Due to the hiding property of the commitment scheme and the blindness of $\Pi_{\mathrm{bs}}$ (which is also based on the hiding property of the commitment), 
we have that 
$|\mathrm{Pr}[\mathsf{G}_{1}(\adv,\lambda)=1]-\mathrm{Pr}[\mathsf{G}_{\mathrm{sim}}(\adv,\lambda)=1]|\leq negl(\lambda)$.
In summary, we have that
\begin{equation*}
    \begin{aligned}
    &|\mathrm{Pr}[\mathrm{Exp^{\mathsf{IND}}}(\mathcal{A},\lambda)=1]-{1}/{2}| 
    =|\mathrm{Pr}[\mathsf{G}_{\mathrm{real}}(\adv,\lambda)=1]-{1}/{2}|\\ 
    \leq&|\mathrm{Pr}[\mathsf{G}_{\mathrm{real}}(\adv,\lambda)=1]-\mathrm{Pr}[\mathsf{G}_{1}(\adv,\lambda)=1]|\\ &+|\mathrm{Pr}[\mathsf{G}_{1}(\adv,\lambda)=1]-\mathrm{Pr}[\mathsf{G}_{\mathrm{sim}}(\adv,\lambda)=1]|\\
    &+|\mathrm{Pr}[\mathsf{G}_{\mathrm{sim}}(\adv,\lambda)=1]-1/2|\\
    \leq& negl(\lambda)    
    \end{aligned}
\end{equation*}

\vspace{-0.7cm}
\end{proof}

\begin{theorem}[Overdraft prevention]
    If the underlying $\mathsf{ZKAoK}$ has argument of knowledge, the commitment is binding and  $\Pi_{\mathrm{bs}}$ is unforgeable, then $\Pi_{\mathsf{\name}}$ has overdraft prevention.
\end{theorem}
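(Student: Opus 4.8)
The plan is to follow the proof-of-knowledge template: first exhibit the extractor $\mathcal{E}$ demanded by Definition~\ref{od}, then argue that whenever $\mathrm{Exp^{od}}$ (Fig.~\ref{odp}) outputs $1$ one of the three hypotheses --- argument-of-knowledge of $\mathsf{ZKAoK}$, binding of the commitment, or unforgeability of $\Pi_{\mathrm{bs}}$ --- must be violated.

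First I would build $\mathcal{E}$. In every oracle call in $\mathcal{O}_{\mathsf{od}}$ the adversary $\mathcal{A}$, playing the malicious users, sends the platform a bundle of commitments together with a $\mathsf{ZKAoK}$ proof $\pi$ whose statement asserts (a) knowledge of the openings of all those commitments and (b) possession of valid platform signatures on every consumed record. Reusing $\mathcal{A}$'s coins $r_{\mathcal{A}}$ and the oracle coins, $\mathcal{E}$ replays the execution and, for each accepting $\pi$, runs the $\mathsf{ZKAoK}$ knowledge extractor to recover the witness; from the recovered openings it reads off $uid$ (from the extracted $usk$) and the record contents, and together with the public data of the query it assembles the history $h_t=(uid,\mathit{Rd_{reg}},\mathit{Rd_{ast}},\mathit{Rd'_{reg}},\mathit{Rd'_{ast}},\mathit{Rd_{ast}^{out}},\mathit{ts_t},\mathit{pub_t})$. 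As a by-product $\mathcal{E}$ also records, for every blind-signing query the platform answered, the message actually committed (again via the extracted opening); call $M$ the resulting set, of size at most the number of such queries. Since each of the $N=\mathrm{poly}(\lambda)$ sub-extractions terminates in expected polynomial time and succeeds with probability $1$ on an accepting transcript (cf.~\cite{lindell2003parallel}), $\mathcal{E}$ is expected-polynomial and fails only on the negligible event that some accepting $\pi$ is not extractable --- which already contradicts the argument-of-knowledge property.

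Next I would run the case analysis, conditioning on $\mathcal{E}$ extracting correctly. Recall that a record enters the experiment's live set $\mathsf{RdSet}$ only as an output of an earlier successful transaction, and that a record's nonce goes into the platform's set $\mathsf{ID}$ the moment the record is spent. If $\mathrm{Exp^{od}}$ returns $1$: \emph{(i)} a consumed pair $\{\mathit{Rd_{reg}},\mathit{Rd_{ast}}\}\not\subseteq\mathsf{RdSet}$ means either the nonce was already in $\mathsf{ID}$, so the honest oracle would have aborted (contradiction), or the record was never issued, so $\mathcal{A}$ presents a valid $\Pi_{\mathrm{bs}}$ signature on a record-message outside $M$, a forgery (commitment binding ensures the message we counted into $M$ is the one the platform actually signed, so the tally is exact); \emph{(ii)} a wrong name/amount on a deposited record, a negative remainder $\mathit{Rd'_{ast}}.\mathit{amt}<0$, a deducted amount smaller than the public withdrawn/exchanged-out amount, or a violation of the fairness equation are each explicitly ruled out by a clause of that transaction's $\mathsf{ZKAoK}$ statement (e.g.\ $com_2$ carries the public $(i,k_i,px_i)$; $v_i-k_i=v_i^{*}\ge 0$; the prices come from the platform-signed latest price credentials, so they equal $\mathit{pub_t}.pr_{\mathrm{in}}$ and $\mathit{pub_t}.pr_{\mathrm{out}}$, and $\overline{px}_i\cdot k_i=\overline{px}_j\cdot k_j$), hence the extracted witness satisfies a \emph{false} statement, breaking soundness of $\mathsf{ZKAoK}$, or, if the extracted opening disagrees with the public values the platform checked, breaking binding; \emph{(iii)} touching another user's asset contradicts the same-$usk$ clause proven for all records of a transaction, and a record carrying a foreign $usk$ with a valid signature is again a $\Pi_{\mathrm{bs}}$ forgery as in~(i). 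A union bound over the $N$ transactions and the three reductions then gives $\Pr[\mathrm{Exp^{od}}(\mathcal{A},\mathcal{E},\lambda)=1]\le\mathrm{negl}(\lambda)$.

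The step I expect to be the main obstacle is the extractor itself: it must extract sequentially from $\mathrm{poly}(\lambda)$ proofs while keeping the overall running time expected-polynomial and the cumulative extraction-failure probability negligible (the standard but delicate witness-extended-emulation bookkeeping, compounded here by the fact that the platform oracles are stateful), and it must additionally pry open the messages buried in the blind-signing queries so that case~(i)'s unforgeability reduction can detect the ``extra'' valid message. Once $\mathcal{E}$ is in place, mapping each failure clause of $\mathrm{Exp^{od}}$ either to a broken clause of the matching $\mathsf{ZKAoK}$ statement, to a binding collision, or to double-spend detection via $\mathsf{ID}$ is essentially mechanical.
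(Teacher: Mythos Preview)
Your proposal is correct and follows essentially the same approach as the paper: construct $\mathcal{E}$ from the $\mathsf{ZKAoK}$ knowledge extractor applied to each accepting proof, then case-split on the winning conditions of $\mathrm{Exp^{od}}$ and map each to a violation of unforgeability, binding, or argument-of-knowledge. You are in fact more careful than the paper on the point you flag as the main obstacle---the paper simply posits the extractor without discussing sequential extraction or the bookkeeping of blind-signed messages---and your handling of case~(iii) via the same-$usk$ clause is arguably cleaner than the paper's, which argues that $\mathcal{A}$ must guess an honest user's private $aid$ (an argument that sits awkwardly in an experiment where $\mathcal{A}$ controls every user).
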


\vspace{-0.3cm}

\begin{proof}
In the overdraft prevention experiment, the adversary $\mathcal{A}$ wins if it withdraws more asset than it has deposited or exchanged. 
Given the transaction histories $\mathit{h_t=(uid,}\mathit{Rd_{reg}}, \mathit{Rd_{ast}},\mathit{Rd'_{reg}}, \mathit{Rd'_{ast}}, \mathit{Rd_{ast}^{out}},\mathit{ts_t},\mathit{pub_t})$ 
extracted by \edv\ for $t\in[N]$. 
If \adv\ wins, there exists at least one  transaction has issues which means the input or output records in it are problematic such that one of the following events happens:  
\vspace{-0.2cm}

\setlist{nolistsep}
\begin{itemize}[noitemsep,leftmargin=*]
    \item [1.] The record used in this transaction is not generated from previous transactions, i.e., $Rd\notin \mathsf{RdSet}$; 
    \item [2.] The user steals other honest users' assets;
    \item[3.] The generation of asset record is wrong in one of the following cases: \\
{Deposit}: $\textit{Rd}_{\textit{ast}}^{\textit{out}}.\textit{{name}} \neq \textit{ts}_{\textit{t}}.\textit{name}$ or $\textit{Rd}_{\textit{ast}}^{\textit{out}}.\textit{{amt}} \neq \textit{ts}_{\textit{t}}.\textit{{amt}}$;\\
{Exchange}: $\mathit{Rd'_{ast}}.\textit{{name}} \neq \mathit{Rd_{ast}}.\textit{name}$ {or} $\mathit{Rd'_{ast}}.\textit{{amt}} <0$ or $(\mathit{Rd_{ast}}.\textit{amt}-\mathit{Rd'_{ast}}.\textit{amt})\cdot \textit{pub}_{\mathit{t}}.\mathit{pr_{in}} \neq \mathit{Rd_{ast}^{out}}.\mathit{amt}\cdot \textit{pub}_{\textit{t}}.\mathit{pr}_{out}$;\\
{Withdraw}: $\mathit{Rd_{ast}}.\textit{{name}} \neq \mathit{ts_{t}}.\textit{name} \ \mathrm{or}\ \mathit{Rd'_{ast}}.\textit{{amt}} <0$ or  $\mathit{Rd_{ast}.\textit{amt}}-\mathit{Rd'_{\textit{ast}}.\textit{amt}}<\mathit{ts_{t}.\textit{amt}}$. 
\end{itemize}

For events 1 and 2, the input records are problematic which are forged or stole by \adv. 
\adv\ may forge the asset records or reuse records with a different identifier. 
In order to use others' asset, \adv\ must guess the \textit{aid} correctly. 
For event 3, the new generated asset records are problematic, \adv\ gets these records by cheating the issuer.
The security of our scheme can be reduced to the underlying cryptographic building blocks. 
This includes standard primitives like commitment, blind signature, and non-interactive $\mathsf{ZKAoK}$.
We elaborate it case by case. 
\vspace{-0.2cm}

\noindent(1) 
Suppose that Pr[$\mathcal{A}$ wins and event 1 happens] is non-negligible. In this case, it leads to at least one of the following contradictions: 
\vspace{-0.2cm}
\setlist{nolistsep}
\begin{itemize}[noitemsep,leftmargin=*]
    \item The record $Rd$ is valid but was not generated via querying oracles, which breaks the unforgeability of $\Pi_mathrm{bs}$;
    \item The asset record  $Rd$ is reused with another identifier. In this case, since the used identifier would be detected, the revealed identifiers must be different $aid \neq aid'$. 
    It means one commitment produces two different openings which contradicts the binding property of the commitment. 
\end{itemize}

\vspace{-0.2cm}
\noindent(2) 
Suppose that Pr[$\mathcal{A}$ wins and event 2 happens] is non-negligible. 
Here the input records are valid records generated from previous transaction but belong to other honest users. 
If \adv\ uses this asset record, it must know the respective \textit{aid} which is kept privately by the honest user. 
It contradicts that $\mathcal{A}$ can only guess it correctly with negligible probability. 


\vspace{-0.2cm}
\noindent(3) 
Suppose that Pr[$\mathcal{A}$ wins and event 3 happens] is non-negligible. 
In this case, \adv\ generates a valid transaction but gets asset records with wrong attributes.
It leads to at least one of the following contradictions:
\vspace{-0.2cm}
\setlist{nolistsep}
\begin{itemize}[noitemsep,leftmargin=*]
    \item For deposit transaction, \adv\ gets an asset record which is different from the deposit request. It happens only if one commitment produces two different openings which contradicts the binding property or $\mathcal{A}$ uses the incorrect witness to generate a valid proof, which breaks the argument of knowledge of underlying $\mathsf{ZKAoK}$.
    \item For exchange transaction, \adv\ gets new exchange-out asset record which is different from the old one or gets exchange-in asset with more amount by breaking the fair exchange rule. 
    It leads to at least one of the following contradictions:
\vspace{-0.2cm}

\setlist{nolistsep}
\begin{itemize}[noitemsep,leftmargin=*]
    \item[-] for the commitments of records, $\mathcal{A}$ generates a valid proof with incorrect witness, which breaks the argument of knowledge of underlying $\mathsf{ZKAoK}$; 
    \item[-] $\mathcal{A}$ opens the commitment to different values and generates the proof. It means one commitment produces two different openings which contradicts the binding property of the commitment scheme;
    \item[-] the price credentials are forged by \adv, thus the platform's signatures. It contradicts to the unforgeability of $\Pi_{\mathrm{bs}}$.
\end{itemize} 
    \item For withdraw transaction, the user should prove that it owns enough asset for the withdraw request by committing on the old asset and new asset. Then he proves that the opening of the asset name is the same as that in the request and the deducted amount is the same as the withdrawal amount and the new amount is non-negative.   
    Now the new asset record does not meet at least one of these requirements. 
    It happens only if one commitment produces two different openings which contradicts the binding property or $\mathcal{A}$ uses the incorrect witness to generate a valid proof, which breaks the argument of knowledge of underlying $\mathsf{ZKAoK}$.
\end{itemize}

\vspace{-0.4cm}
\end{proof}

\begin{theorem}[Compliance]\label{thmcomp}
    If the underlying $\mathsf{ZKAoK}$ is secure with argument of knowledge, the commitment is binding, and $\Pi_{\mathrm{bs}}$ is unforgeable, then $\Pi_{\mathsf{\name}}$ has 
     tax-report-client-compliance.
\end{theorem}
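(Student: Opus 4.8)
The plan is to follow the same template as the overdraft-prevention proof: first exhibit the knowledge extractor $\edv$ required by Definition~\ref{defcp}, and then argue that any $\adv$ that makes $\mathrm{Exp^{clie-comp}}$ return $1$ with non-negligible probability yields an attacker against one of the three assumptions (binding of $\mathsf{Com}$, argument of knowledge of $\mathsf{ZKAoK}$, or unforgeability of $\Pi_{\mathrm{bs}}$). To build $\edv$: for each of the $N$ oracle interactions it fixes $\adv$'s randomness and the oracles' randomness, rewinds, and runs the $\mathsf{ZKAoK}$ knowledge extractor on the proof $\pi$ attached to that interaction. From the extracted witness it reads off the user secret $usk$ (hence $uid$, the public key derived from $usk$), the record identifiers, asset names, amounts, buying and selling prices, and the openings of every commitment, and assembles the transaction history $h_t=(uid,\mathit{Rd_{reg}},\mathit{Rd_{ast}},\mathit{Rd'_{reg}},\mathit{Rd'_{ast}},\mathit{Rd_{ast}^{out}},\mathit{ts_t},\mathit{pub_t})$. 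Since the $\mathsf{ZKAoK}$ extractor is expected polynomial time and succeeds with probability $1$ on an accepting transcript, $\edv$ is expected polynomial time and the $\{h_t\}$ it outputs are well defined whenever the corresponding run accepted.

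Next, assume $\adv$ outputs $\mathit{doc}^*=(uid,cp,mt,sig)$ with $\mathsf{Verify}(epp,pk,\mathit{doc}^*)=1$ and the experiment returns $1$; this splits into two cases. The first is that some extracted $h_t$ fails a basic client-compliance check: a single transaction contains records with two distinct secrets $usk$; or the extracted $uid$ was never output by a $\mathsf{Join}$ with the honest oracle ($uid\notin\mathsf{RU}$); or a new asset record carries an $\mathit{acp}$ disagreeing with the spent record's $\mathit{acp}$ or with $\mathit{pub_t}$. In the construction the honest platform signs only after verifying a $\mathsf{ZKAoK}$ proof asserting exactly the negations of these events (all commitments share $usk$; the shown registration/asset credential verifies under $pk$; the new record copies the old buying price, resp.\ equals the current price credential's price). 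Hence such an $h_t$ means the extracted witness violates the proved relation, contradicting argument of knowledge, \emph{unless} some commitment was opened two ways (contradicting binding) or $\adv$ holds a valid platform signature on a registration, asset, or price credential that was never issued (contradicting unforgeability of $\Pi_{\mathrm{bs}}$). The $uid\notin\mathsf{RU}$ sub-case reduces to unforgeability most directly: the shown $\sigma_{\mathrm{reg}}$ is then a signature on a fresh message the reduction never queried to its signing oracle.

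The second case is that every $h_t$ passes the basic checks but $cp\neq\widetilde{cp}_{uid}=F(\{\mathsf{cp}\}_{uid})$. By soundness of the $\mathsf{File}$ proof, the pair $cp$ blindly signed into $\mathit{doc}^*$ equals the $(\mathit{cp}_1,\mathit{cp}_2)$ stored in the registration credential $\adv$ showed; so it suffices to prove, by induction over the time-ordered exchange and withdraw transactions of that user, that this stored pair equals $\bigl(\sum k_i\,\mathit{pr}_i,\ \sum k_i\,\overline{pr}_i\bigr)$ aggregated over exactly the taxable transactions collected into $\{\mathsf{cp}\}_{uid}$. Each inductive step uses two facts: the identifier $\mathit{rid}$ revealed whenever a registration credential is consumed is fresh and is then inserted into $\mathsf{ID}$, so the chain of a fixed user's registration credentials is linear and no credential is reused (any deviation is either the double-spend the honest platform rejects, or one commitment with two openings, contradicting binding); and the update proof attests $\mathit{cp}_1^*=\mathit{cp}_1+k_i\cdot px_i$ and $\mathit{cp}_2^*=\mathit{cp}_2+k_i\cdot\overline{px}_i$, with $px_i$ read from the spent asset credential and $\overline{px}_i$ from the current price credential --- and $\edv$ records precisely these quantities in $\mathit{ts_t}$ and $\mathit{pub_t}$. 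A mismatch at any step again contradicts argument of knowledge, binding, or (for a forged asset or price credential) unforgeability.

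I expect this second case to be the main obstacle. The reductions to the three assumptions are essentially those already used for overdraft prevention; what is new is the bookkeeping needed to reconstruct, from the independently extracted per-transaction witnesses, the unique predecessor of each registration credential, to show the resulting structure is a disjoint union of simple paths (using the $\mathsf{ID}$ double-spend check together with commitment binding), and then to push the additive cost/gain relation along each path by induction. Complications to phrase carefully include transactions of different users interleaving arbitrarily, records that are empty in certain transaction types, and credentials the adversary abandons without ever filing; none of these should break the argument, but they make the induction statement somewhat delicate.
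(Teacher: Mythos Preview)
Your proposal is correct and follows essentially the same strategy as the paper's proof, which also enumerates the ways $\adv$ can win and reduces each to a violation of argument of knowledge, commitment binding, or unforgeability of $\Pi_{\mathrm{bs}}$. Your two-case split packages the paper's five events: your first case bundles ``records with different $usk$,'' ``wrong $\mathit{acp}$/price,'' and ``$uid\notin\mathsf{RU}$,'' while your second case corresponds to ``$\mathit{doc}$ obtained via $\mathcal{O}^2_{\mathsf{Sign}}$ but compliance info updated incorrectly somewhere.'' Your explicit inductive reconstruction of each user's registration-credential chain (using freshness of $\mathit{rid}$ plus binding to get a simple path, then pushing the additive update along it) is more detailed than the paper's treatment, which simply asserts that an incorrect $(\mathit{cp}_1^*,\mathit{cp}_2^*)$ at some step contradicts one of the primitives.

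One small sub-case you should isolate before invoking ``soundness of the $\mathsf{File}$ proof'' in your second case: the signature $sig$ in $\mathit{doc}^*$ may never have been produced by any $\mathcal{O}^2_{\mathsf{Sign}}$ interaction at all. The paper treats this as its own event; there is then no $\mathsf{File}$ proof to appeal to, and the reduction goes directly to unforgeability of $\Pi_{\mathrm{bs}}$ on the message $(upk,\mathit{cp}_1,\mathit{cp}_2,\mathit{mt})$. Your current phrasing implicitly assumes $\mathit{doc}^*$ arose from a $\mathsf{File}$ session, so just add this as a preliminary sub-case of your second case.
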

\vspace{-0.3cm}	

\begin{proof}
We prove the tax-report-client-compliance as follows.
In this experiment, $\mathcal{A}$ wins if it outputs $\mathit{doc}$ which passes the authority verification but is inconsistent with the transaction histories. Given that 
 \edv\ extracts transaction histories $\mathit{h_t=(uid,}\mathit{Rd_{reg}},\mathit{Rd_{ast}},$ $ \mathit{Rd'_{reg}}, \mathit{Rd'_{ast}}, \mathit{Rd_{ast}^{out}},\mathit{ts_t},\mathit{pub_t})$ for $t\in[N]$. 
$\mathcal{A}$ wins if one of the following events happens:
\vspace{-0.2cm}
\setlist{nolistsep}
\begin{itemize}[noitemsep,leftmargin=*]
    \item [1.] $\mathit{doc}$ was not obtained from queries but passed the authority verification.
    \item [2.] The user uses others' registration record: the user identities of asset and registration records are not the same;
    \item[3.] The price was inconsistent as follows:
    In deposit transaction, $\textit{Rd}_{\textit{ast}}^{\textit{out}}.\textit{acp}\neq\textit{pub}_{\textit{t}}.pr_{out}$; or in exchange transaction, $\textit{Rd}'_{\textit{ast}}.\textit{acp}\neq\textit{Rd}_{\textit{ast}}.\textit{acp}$ or $\textit{Rd}_{\textit{ast}}^{\textit{out}}.\textit{acp}\neq\textit{pub}_{\textit{t}}.pr_{out}$; Or in withdraw transaction, $\mathit{Rd'_{ast}}.\mathit{acp}\neq\mathit{Rd_{ast}}.\mathit{acp}$.
    \item[4.] $\mathit{doc}$ was obtained by interacting with $\mathcal{O}_{\mathsf{Sign}}$, but the inconsistency happens since the compliance information was updated incorrectly in some exchange or withdraw transaction;
    \item[5.] The user identifier has not been registered : $uid \notin \mathsf{RU}$ in any transaction expect for Join;
    \end{itemize}
\vspace{-0.2cm}
In a high level, event 1 happens meaning that \adv\ forges a valid signature which is contradicted by the unforgeability of $\Pi_{\mathrm{bs}}$. Events 2,3,4, happens meaning that \adv\ finds the collisions of commitment that violate the binding property of commitment, or \adv\ proves on a wrong statement that violates the argument of knowledge property of non-interactive $\mathsf{ZKAoK}$. Event 5 happens which contains three possible cases. The first is \adv\ forges a record with new $uid$, which violates the unforgeability of $\Pi_{\mathrm{bs}}$. The second is \adv\ finds collisions on $uid$, which violates the binding property of commitment. The third is that \adv\ proves a wrong statement including the unregistered $uid$, which violates the argument of knowledge property of non-interactive $\mathsf{ZKAoK}$.
So, we reduce the security of our scheme to the unforgeability of $\Pi_{\mathrm{bs}}$, the binding property of commitment, and the argument of knowledge property of non-interactive $\mathsf{ZKAoK}$.
We elaborate it case by case. 

\vspace{-0.2cm}
\noindent(1) Suppose that Pr[$\mathcal{A}$ wins and event 1 happens] is non-negligible. 
In this case, $\mathcal{A}$ works honestly for each transaction but sends a $\mathit{doc}$ to the authority which contains the incorrect $\textit{cp}_1, \textit{cp}_2$ and a forged signature on them. 
It breaks the unforgeability of the blind signature.

\vspace{-0.2cm}
\noindent(2) Suppose that Pr[$\mathcal{A}$ wins and event 2 happens] is non-negligible. 
In this case, a valid transaction is generated but the records belong to different users. 
However, the user needs to prove that all records belong to himself by proving they contain the same $uid$ which is a contradiction.
So it breaks the argument of knowledge of the underlying $\mathsf{ZKAoK}$.

\vspace{-0.2cm}
\noindent(3) Suppose that Pr[$\mathcal{A}$ wins and event 3 happens] is non-negligible. 
In this case, \adv\ generates a commitment for its new asset containing its $\textit{uid}$, asset identifier \textit{aid}, asset name \textit{i}, amount $\textit{k}_i$ and price $\textit{pr}_i$. 
Here $\textit{pr}_i$ is different from the real price w.r.t the output of $\mathcal{O}_{\mathsf{Public}}$.
For the deposit transaction, the price is different from the public price. For the exchange transaction, the price is different from that of the old asset record or the price credential. For the withdraw transaction, the price is different from that of the old asset record. 
The occurrence of the incorrect price leads to at least one of the following contradictions:
\vspace{-0.2cm}
\setlist{nolistsep}
\begin{itemize}[noitemsep,leftmargin=*]
    \item $\mathcal{A}$ uses the incorrect witness to generate a valid proof, which breaks the argument of knowledge of underlying $\mathsf{ZKAoK}$; 
    \item $\mathcal{A}$ opens the commitment to different values and generates the proof. It means one commitment produces two different openings which contradicts the binding property of the commitment scheme;
    \item In the exchange transaction, \adv\ manipulates the price by using the price credential forged by itself. It breaks the unforgeability of the blind signature scheme $\Pi_{\mathrm{bs}}$. 
\end{itemize}

\vspace{-0.2cm}
\noindent(4) Suppose that Pr[$\mathcal{A}$ wins and event 4 happens] is non-negligible. 
In this case, for at least one exchange or withdraw transaction
the new compliance information $\textit{cp}_1^*, \textit{cp}_2^*$ was incorrect but the proof is valid.
It leads to at least one of the following contradictions:
\setlist{nolistsep}
\begin{itemize}[noitemsep,leftmargin=*]
    \item[$\bullet$] When computing $\textit{cp}_1^*, \textit{cp}_2^*$, $\mathcal{A}$ uses some incorrect selling prices different from the output of $\mathcal{O}_{\mathsf{Public}}$.
    Its success implies that it breaks the argument of knowledge of underlying $\mathsf{ZKAoK}$, or breaks the binding property of the commitment scheme, or forges a price credential (in the exchange transaction) which breaks the unforgeability of the blind signature.
    \item[$\bullet$] When proving the correctness of $\textit{cp}_1^*, \textit{cp}_2^*$, $\mathcal{A}$ just uses the incorrect witness to generate a valid proof, which breaks the argument of knowledge of underlying $\mathsf{ZKAoK}$; 
    \item[$\bullet$] $\mathcal{A}$ opens the commitment to different compliance information values and generates the proof. It means one commitment produces two different openings which contradicts the binding property of the commitment scheme.
\end{itemize}

\vspace{-0.2cm}
\noindent(5) 
Suppose that Pr[$\mathcal{A}$ wins and event 5 happens] is non-negligible. 
In this case, $uid$ has not registered but \adv\ generates a valid transaction on it which leads to at least one of the following contradictions:
\vspace{-0.2cm}
\setlist{nolistsep}
\begin{itemize}[noitemsep,leftmargin=*]
    \item \adv\ forges a registration record in which the $\sigma_{\textit{reg}}$ should be issued by the platform via a blind signature scheme, so \adv\ breaks the unforgeability of the blind signature; 
    \item[$\bullet$] 
    \adv\ does not have the $\sigma_{\textit{reg}}$ but generates a valid proof in the deposit, exchange or withdraw protocol, so it breaks the argument of knowledge of the underlying $\mathsf{ZKAoK}$.

\end{itemize} 
\vspace{-0.5cm}
\end{proof}
\vspace{-0.1cm}


\section{Performance Evaluations}
\label{sec:eval}
In this section, we describe our instantiation, prototype implementation and the performance evaluation. The evaluation results show that our design is efficient and practical. 

\noindent{\bf Instantiation and implementation.}\label{instantiaction} We instantiate the anonymous exchange system using the Pointcheval Sanders blind signatures~\cite{Pointcheval2016ShortSignatures} and Pedersen commitment~\cite{Pedersen91}. The $\mathsf{ZKAoK}$s are instantiated with $\Sigma$-protocol  on the knowledge of DLog, its equality, and range. 
%
We implement this instantiation of the anonymous exchange system with Java. We use the open source Java library $\mathsf{upb.crypto}$\footnote{upb.crypto: \url{https://github.com/upbcuk}.} and the bilinear group provided by $\mathsf{mcl (bn256)}$\footnote{mcl: \url{https://github.com/herumi/mcl}.}. We run experiments on MacBook Air (1.6 GHz Dual-Core Intel Core i5, 16GB memory).

\begin{table}[htp]
\centering
\caption{Avg. computation cost in milliseconds.
}\label{Timecost}
\begin{tabular}{|c|c|c|c|c|}
\hline
Party    & Join & Deposit & Exchange & Withdraw \\ \hline
\name-user     & 9    & 11       & 46       & 37       \\ \hline
\name-platform & 7    & 14      & 88       & 62      \\  \hline
\end{tabular}
\vspace{-0.5cm}
\end{table}
\vspace{-0.15cm}

\noindent{\bf Performance.} We test the pure computation time cost and communication cost of each procedure to show the efficiency. Then to show the practicality, we make two comparisons. One is to compare the secure exchange with plain exchange to show the overhead is truly small. The other is to compare with other anonymous credential applications, including Privacy Pass and the privacy-preserving incentive system (PPIS for short).  

\noindent{\emph{Computation cost.}} We test the computation time cost of each party in each procedure of the anonymous exchange system. 
As shown in table~\ref{Timecost}, we can see that each party's time cost for each procedure is less than $88ms$, which is quite efficient. 

\noindent{\emph{Communication cost.}} We  measure the communication cost of each procedure and none of them exceeds 12kb. Concretely, in the Join and deposit procedures, the user adds \textasciitilde 2.6kb and \textasciitilde 3.3kb data to the request,  respectively. The platform adds a \textasciitilde 1.8kb data to both responses. 
In the exchange and withdraw procedure, the user adds \textasciitilde 12kb and \textasciitilde 8.7kb data to the request, respectively. The platform adds \textasciitilde 2.3kb and \textasciitilde 2.8kb data to the response, respectively.
\begin{figure}[h] 
\centering 
\includegraphics[width=1\linewidth, height= 0.35\linewidth]{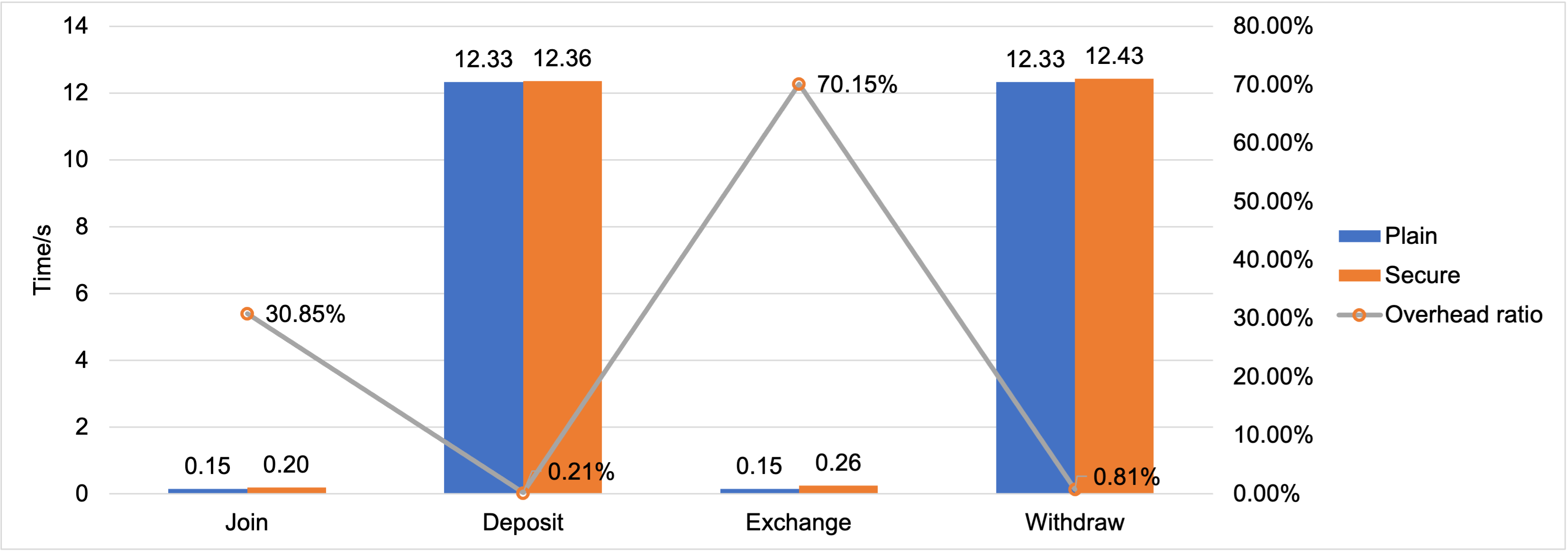} 
\caption{Comparison between plain exchange system and \name}
\vspace{-0.6cm}
\label{fig:comparison} 
\end{figure}

\noindent{\emph{Comparison with plain exchange.}} To demonstrate its practicality, we have taken into consideration the cost of secure communication and have provided a comparison of the estimated time costs between plain operations and secure operations, as shown in Figure~\ref{fig:comparison}. For plain operations, we have estimated the lower-bound time costs by considering only communication cost and on-chain transaction confirmation time, assuming computation cost to be 0. 
We detail the estimation of plain and secure join, deposit, exchange, and withdraw in the following. Consider the optimal network performance, $30-40 ms$ is the desired round-trip time (RTT)\footnote{Network latency:\url{https://www.ir.com/guides/what-is-network-latency}.}. In the estimation, we pick RTT $= 30ms$. 

\vspace{-0.1cm}
\noindent\textit{a) {Join operation}:} For a new user, the plain join includes the sign-up procedure and identity verification for KYC without on-chain confirmation cost. The communication cost includes one TLS handshake with at least 2 round-trip time (RTT for short) cost, 2 RTTs for sign-up setting username and password, and at least 1 RTT for identity verification. Totally the time cost is 5 RTT say $150ms$.
The secure join runs all the plain join process and additionally runs the $\langle\mathsf{Join},\mathsf{Issue}\rangle$ protocol. The time overhead includes 1 RTT for interaction latency, user and platform computation time $16ms$, and data transfer time $\frac{2.6kb}{10MB/s}/ +\frac{1.8kb}{100MB/s} \approx 0.278
ms$. (We assume for a user device the uploading speed is $10MB/s$ and downloading speed is $100MB/s$) The total time cost is $196.278ms$

\vspace{-0.1cm}
\noindent\textit{b) {Deposit operation}:} A plain ETH deposit includes one handshake with platform costing at least 2 RTTs, log-in procedure to get receipt address costing 1 RTT, on-chain payment request costing at least 1 RTT, and an Ethereum transaction confirmation time $12.21s$. The total time cost of the plain deposit is $12.33s$. In an anonymous ETH deposit, users do not log in the platform saving 1 RTT, but run all other procedures of plain deposit. Then users additionally interact with the platform running $\langle \textit{Deposit}, \textit{Credit}\rangle$ where the total computation cost is $25ms$, the interaction with the platform costs 1 RTT, and the data transfer time is $\frac{3.4kb}{10MB/s}/ +\frac{1.8kb}{100MB/s} \approx 0.358ms$. The total time cost of the secure deposit is around $12.355s$.

\vspace{-0.1cm}
\noindent\textit{c) {Exchange operation}:} A plain exchange includes server authentication via TLS handshaking at least 2 RTTs, user login costing 1 RTT, price fetching with 1 RTT, and  sending exchange request with 1 RTT.
The total time cost of a plain exchange is at least 5 RTT, around $150ms$.
The secure exchange removes login, but additionally runs the $\langle \textit{Exchange}, \textit{Update}\rangle$ protocol, where the computation cost is $134ms$, interaction equals the exchange request sending, and data transfer costs $\frac{12kb}{10MB/s}/ +\frac{2.8kb}{100MB/s} \approx 1.228ms$. The total time cost of secure exchange is around $255.228ms$.

\vspace{-0.1cm}
\noindent\textit{d) {Withdraw operation}:} A plain withdraw of ETH includes server authentication via TLS handshaking at least 2 RTTs, user login costing 1 RTT, sending withdraw request with 1 RTT, and waiting the on-chain confirmation with $12.21s$. 
The total time cost of a plain withdraw is around $12.33s$.
The secure exchange gets rid of the login, saving 1 RTT, but additionally requests the price with 1 RTT, and run the $\langle \textit{Withdraw}, \textit{Deduct}\rangle$ protocol, where the computation cost is $99ms$, interaction equals the withdraw request sending, and data transfer costs  $\frac{8.7kb}{10MB/s}/ +\frac{2.3kb}{100MB/s} \approx 0.893ms$. The total time cost of a secure exchange is about $12.43s$.

The results show that the time costs for plain and secure operations are similar, with the overhead of each secure operation being less than $0.11s$. Notably, the overhead ratio of secure deposit and withdrawal is less than $1\%$.

\noindent{\emph{Comparison with Privacy Pass and PPIS.}} To provide a better understanding of the practicality of our system, we conduct performance comparisons with widely used anonymous user-authentication mechanism Privacy Pass~\cite{PrivacyPass}. Privacy Pass published preliminary tests on consumer hardware, indicating that creating a pass in the extension takes less than 40ms\footnote{Privacy Pass FAQ: \url{https://privacypass.github.io/faq/}}. Although the test environments may not be identical to ours, as both are on consumer hardware, the key takeaway is that each procedure of our system incurs similar time costs as Privacy Pass, showcasing its practicality. It's important to note that our system offers additional functionalities beyond Privacy Pass's anonymous authentication. We also test the time cost of the privacy-preserving incentive system (PPIS)~\cite{Bl2019UpdatableSystems}. 
The results, as shown in table~\ref{TimecostCom}, 
demonstrate that our system is more complicated and more private, yet similarly practical to PPIS.

\begin{table}[!htbp]
\centering
\caption{
Avg. computation cost of each party per procedure over 100 runs in milliseconds.}
\label{TimecostCom}
\begin{tabular}{|c|c|c|c|c|}
\hline
Party    & Join & Earn & Exchange & Spend \\ \hline
PPIS~\cite{Bl2019UpdatableSystems}-user    & 10    & 8       & N{/A}       & 30       \\ \hline
PPIS~\cite{Bl2019UpdatableSystems}-provider & 9    & 12      & N{/A}       & 72       \\  \hline
\end{tabular}
\vspace{-0.5cm}
\end{table}

\section{Conclusion}

In this paper, we give the first study of cryptocurrency exchange that supports user anonymity and compliance requirements simultaneously. 
The platform cannot get more information from the transactions other than that has to be public. 
Users cannot get more assets from the platform so double spending is prohibited and they have to correctly report their accumulated profits for tax purposes, even in a private setting. Also, critical compliance functions are to be supported.
Our construction is efficient and achieves constant computation and communication overhead  with only simple cryptographic tools and rigorous security analysis.
Additionally, we implement our system and evaluate its practical performance.


\medskip\noindent{\bf Acknowledgement.} We would like to thank our shepherd and anonymous reviewers of NDSS24 for valuable feedbacks. This work was supported in part by research awards from Stellar Development Foundation, Ethereum Foundation, Protocol Labs, SOAR Prize, and University of Sydney’s Digital Sciences Initiative through the Pilot Research Project Scheme.

\bibliographystyle{IEEEtranS}
\bibliography{ref}


\newpage

\section*{Appendices}

\section{Related works\label{rwk}}

\noindent\textbf{Private payment.}
Payment is a basic transaction format which supports one kind of asset, and the private payment systems are built on a single private closed blockchain \cite{monero, zcash} or smart contract \cite{bunz2020zether,diamond2021manyzether}.
It does not imply the exchange between different kinds of cryptocurrencies especially for some public cryptocurrencies (like Bitcoin, Ether).

%

In the off-chain setting, many solutions have been proposed. 
They perform as opt-in tools that enhance privacy for existing cryptocurrencies.
They aim to prevent an adversary from linking a payment from a particular payer to a particular payee. 
Bolt \cite{green_bolt_2017} is an anonymous payment channel was introduced by Green and Miers.
It aims to offer privacy-preserving payment channels such that multiple payments on a single channel are unlinkable to each other. 
Assuming the funded cryptocurrency is anonymous (e.g. Zerocash), the payments in Bolt are also anonymous.

TumbleBit \cite{heilman_tumblebit_2017} is a unidirectional payment channel hub (PCH) relying on an untrusted intermediary called Tumbler and Hashed Timelock Contracts (HTLCs). 
The Tumbler issues anonymous payments that users can cash-out to Bitcoins. 
Every payment conducted through TumbleBit is backed by Bitcoins, ensuring that there is no possibility of linking individual pairs of payments. Furthermore, it is guaranteed that the Tumbler cannot engage in theft of Bitcoins, or make payments to itself.

Anonymous atomic locks (A2L) is introduced in \cite{tairia2l2021} where the authors propose a PCH upon it. 
This PCH functions as a three-party protocol designed for conditional transactions, in which an intermediary (referred to as the hub) disburses funds to the recipient contingent upon the recipient's successful resolution of a puzzle, aided by the sender.
This arrangement signifies that the sender compensates the hub. The utilization of a randomized puzzle ensures that the hub cannot establish a connection between the sender and the recipient involved in a payment.
The authors define unlinkability in terms of an interaction multi-graph \cite{heilman_tumblebit_2017}.
It is a mapping of transactions from a set of senders to a set of receivers in an epoch.
An interaction graph is called compatible if it explains the view of tumbler.
The unlinkability requires that all compatible interaction graphs are equal and the anonymity set depends on the number of compatible interaction graphs in the epoch.
Since the payment amount can be used to link the sender and receiver trivially, the unlinkability requires the amount to be fixed \cite{heilman_tumblebit_2017,tairia2l2021,glaeser2022foundations} or concealed \cite{qin2022blindhub}.

The star topology of PCH is very similar to the exchange scenario where the user sends one kind of asset to the exchange platform and receives another kind from it. 
And there are some works adding anonymity on the PCH to prevent the tumbler from linking the sender and receiver.
Regarding the exchange user as the sender and receiver at the same time, the anonymous PCH seems related to our goal that cutting the link between two accounts.
Unfortunately, it is not suitable to be used to design a private exchange system due to the model differences, operation restrictions and limited privacy:

\setlist{nolistsep}
\begin{itemize}[noitemsep,leftmargin=*]
\item[(i)] PCH requires the establishment of payment channels on the blockchains by the tumbler and users.
It means each exchange needs the deployment of two channels in two blockchains. 
The channel is only valid before the expiration time, so the establishment work should be done repeatedly to make sure that they can exchange freely.
The fund locked in the channel is fixed and the user cannot transact more than that locked amount. So the exchange amount is limited by the money locked in the channel rather than the money that the user owns.
Even if the user has a huge amount of BTC, he cannot exchange them into ETH more than the amount locked in the Ethereum channel.
\item[(ii)] The anonymity set of PCH is just the active users in an epoch.
Some constructions require the off-chain transaction amount is a fixed denomination \cite{tairia2l2021,glaeser2022foundations} which is inconvenient.
BlindHub \cite{qin2022blindhub} is a recent work that supports variable amounts.
But it still assumes that there are many active users and each of them transacts many times during the epoch.
If the sender just sends once and the receiver just receives once before closing the channel, the changed amounts of their channels would link them easily.
\item[(iii)] An exchange system consists of deposit, exchange and withdraw operations where the deposit and withdrawal amount must be public and variable.
We consider the anonymity in the whole system. The interaction graph model is not enough since it only focuses the payments in one epoch and only supports the \textit{k}-anonymity of active users. 
It drives us to define a stronger model of interaction indistinguishability with larger anonymity set.
\end{itemize}

LDSP \cite{ng_ldsp_2021} is a layer-2 cryptocurrency payment system that supports payer privacy.
It is designed in the setting of shopping with cryptocurrency where the payer is customer and the receiver is merchant.
There is also an untrusted entity called leader who is in charge of issuing coins for customers and merchants.
Customers can transfer coins off-chain with low-latency.
Since the coins are issued in a blind way, the leader cannot link the spent coin with any customer.
At the same time, the merchants are guaranteed to receive the coins.

Additionally, there exist several off-chain solutions, including Plasma \cite{Poon2017Plasma:Contracts}, NOCUST \cite{nocust}, and ZK-Rollup \cite{zkrollup}, which are designed to enhance blockchain scalability by relocating resource-intensive computations and redundant data off-chain, conducted by an untrusted operator.

To enhance privacy within these scalability-focused frameworks, the PriBank system has been introduced by Galbraith et al. \cite{galbraith_pribank_2022}. This system incorporates an efficient Commit-and-Prove Non-Interactive Zero-Knowledge (NIZK) protocol tailored for quadratic arithmetic programs. It ensures that users' balances and transaction values remain confidential, accessible only to the operator and not to other entities.

\noindent{\bf Fiat to cryptocurrency (F2C) exchange.} \ 
In general, the centralized F2C exchange platform does not consider user's privacy, like Coinbase, Binance.
They collect user's personal information when they register to meet the KYC requirement.
However, the user's accounts are transparent for the platform.
It knows their asset profile, i.e., which kinds and how many assets they own.
In the case of cryptocurrency, it would also know how the user spend their cryptocurrency which violates user's privacy outsides the platform.

To prevent the linkability by the transaction amount, the amount of withdrawn cryptocurrency is fixed for all transactions. 
For example, let all transactions worth 1 Bitcoin.
To prevent the linkability by the input UTXO, it should be chosen by the client.
But two clients may choose the same UTXO and the conflict leads to only one of them would receive the bitcoin.
Besides, it is not accountable. 
The users do not need to provide any compliance information, otherwise their privacy cannot be preserved. 

A privacy-preserving  fiat-to-Bitcoin exchange scheme is proposed in \cite{yi2019new}. 
In this scheme, a user can acquire a fixed quantity of cryptocurrency from an exchange platform using fiat currency, all the while ensuring that the platform remains unaware of the connection between the user's genuine identity and the associated Bitcoin address.
To achieve this, a blind signature mechanism is employed, allowing the user to receive Bitcoin from the platform without revealing the output address linked to the transaction. Subsequently, this transaction is recorded on the Bitcoin blockchain, divulging details such as the output address, transaction amount, and the Unspent Transaction Output (UTXO) utilized by the platform at that moment.
To mitigate the risk of linkability through transaction amounts, a constant withdrawal amount is maintained across all transactions. For example, all transactions could be set at a fixed value of 1 Bitcoin.
To counteract the potential issue of linkability through input UTXOs, clients are required to select their preferred UTXOs. However, a challenge arises when multiple clients opt for the same UTXO, potentially resulting in a conflict where only one of them receives the Bitcoin. Furthermore, this approach lacks accountability.
Crucially, users are not obligated to furnish any compliance-related information. Failure to do so would compromise their privacy preservation.


\noindent\textbf{Private decentralized exchange.}   
Decentralized exchange allows users to exchange cryptocurrencies with each other directly or with smart contract.
However, it is very different from our setting.
In the one hand, the private DEX focuses on the trade anonymity and trade confidentiality. It aims to keep the transaction information secret except for the trading parties.
But in the CEX the platform is one of the trading party who can learn the information of the other one.
On the other hand, it does not support fiat money transactions and they are generally deployed in the decentralized setting like smart contract that is unaffected by the KYC requirement.
Users are free to join the DEX without providing their real identities as long as they have cryptocurrencies.
It is hard to directly enforce compliance requirement on it since the enrollment does not require real-world identities. 

There are some works on the {private exchange} in the decentralized setting like Zexe~\cite{bowe2020zexe}, P2DEX~\cite{Baum2021P2DEXExchange} and Manta~\cite{chu_manta_nodate}, but they do not consider any compliance issue. 
P2DEX~\cite{Baum2021P2DEXExchange} is a privacy preserving exchange system for cryptocurrency tokens cross different blockchains while preserving order privacy to avoid front-running attack and ensuring users never lose tokens.
They use MPC for privately matching exchange orders and deploy smart contract to reimburse affected clients with the collateral deposit from the cheating server.
Manta \cite{chu_manta_nodate} is a decentralized anonymous exchange scheme based on automated market maker (AMM).
They design a mint mechanism to convert base coins to private coins, then achieve the decentralized anonymous exchange by trading private coins anonymously.

\noindent\textbf{Accountable privacy.} 
There are some works in studying to achieve privacy-preserving and accountability at the same time.
PGC \cite{chen2020pgc} is an auditable decentralized confidential payment system. 
It offers transaction confidentiality and two levels of auditability, namely regulation compliance and
global supervision at the same time.
Androulaki et al. \cite{Androulaki20} present a privacy-preserving token payment system for permissioned blockchains that with auditing.
The content of transactions is concealed and only some authorized parties can inspect them.

UTT \cite{tomescu2022utt} stands as a decentralized electronic cash payment system designed to incorporate accountable privacy measures. One of its key features is the integration of anonymous budgets, which contribute to maintaining a balance between privacy and accountability.
Within the UTT framework, senders are empowered to generate payments in an anonymous manner, but this is subject to a predefined monetary limit per month. Once this limit is exceeded, the system mandates that their transactions must become visible and transparent to a governing authority. This approach ensures that while users can transact with a certain degree of privacy, their financial activities remain accountable when they surpass the specified budgetary threshold.

Platypus \cite{Platypus} is a payment system designed for use within the context of a central bank digital currency (CBDC) environment. It focuses on enabling transactions that are unlinkable, ensuring privacy while also accommodating regulatory requirements. The system introduces a versatile regulatory framework, which can be applied across various scenarios, and it effectively enforces limitations on holdings and receipts as specific instances of regulatory control.

\noindent\textbf{Exchange platform compliance.} 
Provisions, as outlined in \cite{Benedikt2003ProVISION}, presents a privacy-centric approach to validating solvency within a financial exchange, particularly in the context of cryptocurrencies like Bitcoin. This scheme enables an exchange platform to demonstrate its solvency without needing to disclose sensitive information such as its Bitcoin addresses, total holdings, liabilities, or customer details.
The concept of \textit{proof of solvency} entails the exchange providing evidence that it possesses sufficient cryptocurrencies to cover each customer's account balance. This proof is composed of two primary components:
(i). \textit{proof of Liabilities}: The exchange commits to the collective quantity of Bitcoin it owes to all of its users. This commitment establishes the total liabilities of the exchange.
(ii). \textit{proof of Assets}: The exchange commits to the total value of Bitcoin over which it holds signing authority. If the value of assets under the exchange's control is equal to or greater than its total liabilities, the exchange is considered solvent.


\noindent\textbf{Privacy-preserving incentive system.}\label{IncentiveSystem}
An incentive system allows users to collect points which they can redeem later. 
Bl{\"{o}}mer et al \cite{Bl2019UpdatableSystems} proposed a privacy-preserving incentive system from an updatable anonymous credential. 
The collection and redemption are similar with the deposit and withdrawal. 
But it does not support the exchange operation and compliance regulation.
Additionally, the achieved anonymity is a weak game-based unlinkability, where the anonymity set is limited to the eligible users.

\section{cryptographic primitives}\label{AppPriliminary}
\noindent\textbf{Commitments.} 
A commitment scheme allows one to commit to a chosen value secretly, with the ability to only open to the same committed value later. 
A commitment scheme $\Pi_{\mathrm{cmt}}$ consists of the following PPT algorithms: 

\noindent$\mathsf{Setup}(1^{\lambda}) \rightarrow pp$: generates the public parameter \textit{pp}.\\
$\mathsf{Com}(m; r): \rightarrow com$ generates the commitment for the message $m$ using the randomness \textit{r}.

\noindent\textit{Hiding}. 
A commitment scheme is said to be hiding if for all PPT adversaries \adv\ and $\lambda$, it holds that

\begin{equation*}
    \left| \Pr\left[b=b' \left|
\begin{split}
& pp\leftarrow\mathsf{Setup}(1^{\lambda});\\
& (m_0, m_1)\leftarrow\adv(pp), b\sample\{0,1\},\\
& r\sample\mathcal{R}_{pp}, \textit{com} \leftarrow \mathsf{Com}(m_b;r), \\
& b'\leftarrow\adv(pp, \textit{com})
\end{split}
\right]-\frac{1}{2}
\right|\leq negl(\lambda)
\right.
\end{equation*}
If $negl(\lambda)=0$, we say this scheme is perfectly
hiding.

\noindent\textit{Binding}. 
A commitment scheme is said to be binding if for all PPT adversaries \adv\ and $\lambda$, it holds that

\begin{equation*}
\Pr\left[
\begin{split}
& \textit{com}_0=\textit{com}_1\\ 
& \wedge m_0\neq m_1    
\end{split}
\left|
\begin{split}
& pp\leftarrow\mathsf{Setup}(1^{\lambda});\\
& (m_0, m_1, r_0, r_1)\leftarrow\adv(pp),\\
& \mathsf{Com}(m_0;r_0)=\textit{com}_0,\\
& \mathsf{Com}(m_1;r_1)=\textit{com}_1
\end{split}
\right]
\leq negl(\lambda)
\right.
\end{equation*}
If $negl(\lambda)=0$, we say this scheme is perfectly
binding.

 \noindent\textbf{Blind signatures.}
A blind signature scheme  $\Pi_{\mathrm{bs}}$ for signing committed $n$ messages  has the following algorithms: 

\noindent$\mathsf{KeyGen}(pp) \rightarrow (pk, sk)$: takes public parameter $pp$ as input, outputs a key pair $(pk, sk)$. $pp,pk$ are implicit input of other algorithms for simplicity.

\noindent$\mathsf{Com}(\vec{m}, r) \rightarrow c$: given messages $\vec{m}\in\mathcal{M}^n $ and randomness $r$, computes a commitment $c$.

\noindent$\langle\mathsf{BlindSign}, \mathsf{BlindRcv}\rangle$: it is an interactive protocol between the signer and user, with inputs $(sk, c)$ and $(\vec{m}, r)$ respectively. User outputs a signature $\sigma$. 

\noindent$\mathsf{Vrfy}(\vec{m}, \sigma ) \rightarrow b$: it checks  $(\vec{m},\sigma)$ pair and outputs 0/1.

We require a blind signature scheme to be \textit{correct} and have the properties of {\em unforgeability} and {\em blindness}.

\noindent\textit{Correctness.} 
The following probability is negligible.
\begin{equation*}
\Pr\left[
\mathsf{Vrfy}(\vec{m}, \sigma ) =0
\left|
\begin{split}
& (pk, sk)\leftarrow\mathsf{KeyGen}(pp);\\
& c\leftarrow\mathsf{Com}(\vec{m}; r),\\
& \sigma\leftarrow\langle\mathsf{BlindSign}, \mathsf{BlindRcv}\rangle
\end{split}
\right]
\right.
\end{equation*}

\noindent\textit{Unforgeability.} A blind signature scheme is unforgeable if for any $ q=\mathsf{poly}(\lambda)$ and any PPT \adv\ who can query the blind signature oracle for at most $q-1$ times, the following probability is negligible.
\begin{equation*}
\Pr\left[
\begin{split}
& \forall i,j\in[q],\\
&\mathsf{Vrfy}(\vec{m}_i, \sigma_i)=1  \wedge\\ 
&\vec{m}_i\neq\vec{m}_j \pcif i\neq j
\end{split}
\left|
\begin{split}
& (pk, sk)\leftarrow\mathsf{KeyGen}(pp);\\
& \{\vec{m}_i, \sigma_i\}_{i\in[q]}\leftarrow\adv^{\mathcal{O}}(pp,pk)
\end{split}
\right]
\right.
\end{equation*}

\noindent\textit{Blindness.}A blind signature scheme is blind if for any PPT \adv\, there exists a challenger $\mathcal{C}$ who interacts with \adv\ by running $\mathsf{Com}$ and $\mathsf{BlindRcv}$ and \adv\ runs $\mathsf{BlindSign}$, the following probability is negligible. 

\begin{equation*}
    \left| \Pr\left[b=b' \left|
\begin{split}
& pp\leftarrow\mathsf{Setup}(1^{\lambda});\\
& (m_0, m_1)\leftarrow\adv(pp), b\in\{0,1\}\sample\mathcal{C}\\
& \mathcal{C}\ \mathrm{interacts\ with}\ \adv \ \mathrm{using}\ m_b, m_{1-b},\\
& \quad  \mathrm{and\ gets}\ \sigma_b, \sigma_{1-b}, \mathrm{respectively}\\
& b'\leftarrow\adv(\sigma_0,\sigma_1)
\end{split}
\right]-\frac{1}{2}
\right|
\right.
\end{equation*}

\noindent\textbf{Partially blind signature.} A partially blind signature is a variant of the blind signature, where the signed message is partially blind. Here we briefly introduce the definition and security properties following~\cite{PartialBlindSig/crypto/AbeO00}. A partially blind signature $\Pi_{\mathrm{pbs}}$ consists following three algorithms:

\noindent$\mathsf{KeyGen}(pp, 1^{n}) \rightarrow (pk, sk)$: generates a public and secret key pair $(pk, sk)$.

\noindent$\langle\mathsf{PartialBlindRcv},\mathsf{PartialBlindSign}\rangle$: it is an interactive protocol between the user and signer with inputs $(pp,pk,msg,\mathit{info})$ and $(pp,pk,sk,\mathit{info})$ respectively, where $msg$ denotes the blind part of signed message, and  $\mathit{info}$ denotes the unblind part of signed message. User outputs $\bot$ or the message signature pair $(msg,\mathit{info},\sigma)$, and signer outputs $b=0/1$ indicating whether it fails or not.

\noindent$\mathsf{Vrfy}(pp, pk, msg,\mathit{info}, \sigma ) \rightarrow b$: it checks  $(msg,\mathit{info},\sigma)$ pair and outputs 0/1.

We require a partially blind signature scheme to have the properties of completeness, unforgeability, and partial blindness as defined in~\cite{PartialBlindSig/crypto/AbeO00}.  

\noindent\textbf{Zero-knowledge argument of knowledge ($\mathsf{ZKAoK}$)\cite{lindell2003parallel}.}
A zero-knowledge argument of knowledge is a cryptographic protocol involving two participants: a prover and a verifier. In this protocol, the prover's primary aim is to convince the verifier that a specific statement is true, all while ensuring that the evidence supporting this statement, known as the witness, remains confidential. The central objective is to furnish a compelling proof without disclosing any information about the underlying witness.
This system encompasses three algorithms, namely $\textbf{Setup}$, $\mathcal{P}$, and $\mathcal{V}$, all of which run in probabilistic polynomial time. The $\textbf{Setup}$ algorithm takes a security parameter $\lambda$ as input and generates a shared reference string $\sigma$.
The prover $\mathcal{P}$ and the verifier $\mathcal{V}$ are interactive algorithms. 
The transcript produced by $\mathcal{P}$ and $\mathcal{V}$ when interacting on inputs $x$ and $y$ is denoted by $tr\gets \langle \mathcal{P},\mathcal{V} \rangle$.
As the output of this protocol, we use the notation $\langle \mathcal{P},\mathcal{V} \rangle = b$,
where $b=1$ if $\mathcal{V}$ accepts and $b=0$ if $\mathcal{V}$ rejects.

Let $\mathcal{R}$ be a polynomial-time verifiable ternary relation for common reference string $\sigma$, statement $x$, and witness $w$, and let $\mathcal{L}$ be the corresponding language, 
i.e.,
$\mathcal{L}=\{x \ |\ \exists w \text{, s.t., } (\sigma,x,w)\in \mathcal{R}  \}$.
The argument of knowledge is defined as follows.
	
\noindent\textit{Argument of Knowledge.} 
The triple $(\textnormal{\textbf{Setup}},\mathcal{P}, \mathcal{V})$ is called an argument of knowledge for the relation $\mathcal{R}$ if it satisfies the following two definitions.
	
\setlist{nolistsep}
\begin{itemize}[noitemsep,leftmargin=*]
    \item[-] \textit{Perfect completeness.}
$(\textnormal{\textbf{Setup}},\mathcal{P},\mathcal{V})$ has perfect completeness if for any $(\sigma,x,w)\in \mathcal{R}$, $\langle \mathcal{P}(\sigma,x,w),\mathcal{V}(\sigma,x) \rangle$ always outputs 1.
\item[-] \textit{Knowledge Soundness.}
$(\textnormal{\textbf{Setup}},\mathcal{P},\mathcal{V})$ has knowledge soundness with error $\kappa$ if there exists a knowledge extractor $\mathcal{E}$, s.t. for any deterministic polynomial-time prover $\mathcal{P}^*$, if $\mathcal{P}^*$ convinces $\mathcal{V}$ of $x$ with probability $\epsilon>\kappa$, then $\mathcal{E}^{\langle \mathcal{P}^*(\cdot),\mathcal{V}(\cdot)  \rangle}(x)$ outputs $w$ s.t. $(\sigma,x,w)\in \mathcal{R}$ in expected time $\frac{poly(|x|)}{\epsilon(|x|)-\kappa(|x|)}$. 
Here $\mathcal{E}$ has access to the oracle $\langle \mathcal{P}^*(\cdot),\mathcal{V}(\cdot)  \rangle$ that permits rewinding to a specific round and rerunning with  $\mathcal{V}$ using fresh randomness.
\end{itemize}

The protocols in this paper  require  the zero-knowledge property.
We define it as follows.

\noindent\textit{Zero-knowledge.} 
A public coin argument $(\textnormal{\textbf{Setup}},\mathcal{P},\mathcal{V})$ is zero-knowledge for $\mathcal{R}$ if there exists probabilistic polynomial-time simulator $S$ such that for all non-uniform polynomial-time interactive adversaries $\mathcal{A}$ and any $\lambda\in\mathbb{N}$,
		\begin{small}
			\begin{align*}
			\biggm|&\Pr\Bigg[ 
			\begin{array}{c}
			\mathcal{A}(tr)=1 \ \wedge\\
			(\sigma,x,w)\in \mathcal{R}
			\end{array}
			\biggm| 
			\begin{array}{c}
			\sigma\gets \textnormal{\textbf{Setup}}(1^\lambda);\\ (x,w,\rho)\gets \mathcal{A}(\sigma); \\
			tr\gets \langle \mathcal{P}(\sigma,x,w),\mathcal{V}(\sigma,x,\rho)  \rangle
			\end{array}
			\Bigg]\\
			-
			&\Pr\Bigg[ 
			\begin{array}{c}
			\mathcal{A}(tr)=1 \ \wedge\\
			(\sigma,x,w)\in \mathcal{R}
			\end{array}
			\biggm| 
			\begin{array}{c}
			(x,w,\rho)\gets \mathcal{A}(\sigma); \\
			tr\gets S(x,\rho)
			\end{array}
			\Bigg]\biggm|\leq negl(\lambda)
			\end{align*}
   
		\end{small}
		\!\!where $\rho$ is the randomness used by $\mathcal{V}$.

\section{Basic Withdraw Anonymity Construction\label{simple}}
In this section, we first give a construction of basic withdraw anonymity and analyze its security. Then we discuss its anonymity set which is bigger than what the defined basic withdraw anonymity could provide. So we show a simpler construction satisfying the basic withdraw anonymity and analyze the security of the simper version.

\subsection{The construction}
In the following, we construct the basic withdraw anonymity scheme $\Pi_\mathsf{BWA}$ with an additively homomorphic commitment scheme $\mathsf{Com}$, a blind signature scheme $\Pi_{\mathrm{bs}}=(\mathsf{KeyGen},\mathsf{Com},\langle\mathsf{BlindSign},\mathsf{BlindRcv}\rangle , \mathsf{Vrfy})$ using $\mathsf{Com}$ to blind messages, and a $\mathsf{ZKAoK}$ scheme.


\noindent $\mathsf{Setup}(1^\lambda)\rightarrow \textit{epp}$: 
It sets up the system parameter $epp$ that includes public parameters of all involved cryptographic primitives, some specific public parameters about the system, such as the total assets kinds $n$, the maximum balance $v_{max}=p-1$ for some super-poly $p$, and some dynamic parameters such as the current price ${pr}_i$ of each asset $i\in[n]$. For simplicity, $epp$ will be an input of all the following algorithms and protocols implicitly. 

\noindent$\mathsf{PKeyGen}(\textit{epp})\rightarrow (pk,sk)$:
The platform  generates a key pair $(pk, sk) \leftarrow \Pi_\mathrm{bs}.\mathsf{KeyGen}(\textit{epp})$, and initializes the platform internal state $st$ including but not restricted to the registered user set $\mathsf{USet}=\emptyset$, the identifier set $\mathsf{ID}=\emptyset$. For simplicity, the public key $pk$ will be an implicit input of the following algorithms of both user and the platform.


\noindent{$
\langle\mathsf{Join}(\mathit{req_{joi}}), \mathsf{Issue}(sk,st)\rangle\rightarrow (uid,\mathit{Rd_{reg}}/\bot;b,st')$}: In this procedure, a user registers into the system to get a unique id $uid$ and a registration record $\mathit{Rd_{reg}}$. Here $\mathit{Rd_{reg}}$ is an access token for authentication where $\mathit{Rd_{reg}}.cred=tk$. We do not specify the authentication method, which could be any secure one widely used in existing exchange systems. $\mathit{req_{joi}}=\textit{info}$ contains all the information for registration, especially some real identity and bank information for compliance. The platform updates the internal state $st'$ accordingly and outputs a bit $b$ indicating whether the registration succeeds. $b=1$ means the user $uid$ registered successfully, and a record will be added to $st'$ to store the state related to $uid$, such as the metadata, the following transactions, the balance, etc. The record is visible when the user logs into the system. Right when the user registers successfully, the balance is zero and finished transactions are empty.



\noindent$\langle\mathsf{Deposit}(uid,\mathit{Rd_{reg}},\mathit{req_{dep}}), \mathsf{Credit}(sk,st)\rangle\rightarrow (;b, st')$: It is a plain deposit where users deposit assets as they did in plain exchange system and the platform updates state $st'$ accordingly. Concretely, users first authenticate themselves to the system with user id $uid$ and access token $\mathit{Rd_{reg}}$, which is the login process. Then users submit deposit request $\mathit{req_{dep}}$ which includes all the information for deposit. The platform outputs $b=1$ indicating the transaction succeeds and updates the state $st'$ so that the user could see this deposit transaction, updated balance, and other related metadata.


\noindent$\langle\mathsf{Exchange}(\textit{uid},\mathit{Rd_{reg}},\mathit{req_{exc}}),\mathsf{Update}(sk,st)\rangle\rightarrow (\mathit{Rd_{ast}}; b,st')$:
This procedure includes two separate operations specified by the request attribute $\mathit{req_{exc}}.op$, where  $req_{exc}.op=pln$ indicates a plain exchange operation between different assets, and $\mathit{req_{exc}}.op=prv$ indicates a private exchange of one asset from a plain version to a private version where the amount of exchange is hidden. The private exchange is a preparation for later anonymous withdraw. Both two kinds of withdraw operations are done in the plain account, which means the user logs into the system  so that the platform knows whom it is interacting with.  The difference is in plain exchange, the platform knows the details about the transaction including the exact amounts and assets names, whereas the amount is hidden from the platform in private exchange. 


Concretely, the user logs into the system with $uid$ and $\mathit{Rd_{reg}}$ and gets the state of his account. If a user with the identifier $uid$ has previously withdrawn asset $i$, his balance state consists of two components: the plain balance denoted as $bal_i$, and a set of committed values $\{com_i^j\}_{j\in[l]}$ of withdrawn asset $i$. Here, $l$ is an integer indicating the overall count of commitments that the user has made pertaining to asset $i$. For the plain exchange, 
 the user exchanges with the request $\mathit{req_{exc}}=(i, k_i, j, k_j,pln)$. He proves that the balance of asset $i$ is greater than $k_i$. Given the balance state $\mathsf{balance}_i=(bal_i, \{com_i^j\}_{j\in[l]})$ on asset $i$, he shows that he has enough balance to exchange-out amount $k_i$ for asset $i$ by generating a proof $\pi=$ $\mathsf{ZKAoK}[\{(v_i^i,r_i^j)\}_{j\in[l]};\forall j\in[l], com_i^j=\mathsf{Com}(v_i^j;r_i^j) \wedge bal_i-k_i\geq \sum_{j\in[l]}{v_i^j}]$. Then he sends $( \mathit{req_{exc}}, \pi)$ to the platform. If the proof $\pi$ is valid, then the platform outputs $b=1$ and updates the state to $st'$, including balance state update $bal_i\leftarrow bal_i-k_i$, $bal_j\leftarrow bal_j+k_j$, and adding this exchange transaction and other metadata. Otherwise, the platform outputs $b=0$, aborts this transaction, and updates the state $st'$ with related metadata. 
 
 For the private exchange, the exchange request is $\mathit{req_{exc}}=(i, k_i, prv)$. After logging into the system, the user gets the balance state $\mathsf{balance}_i=(bal_i,\{com_i^j\}_{j\in[l]})$ about asset $i$. The user commits to the amount $k_i$ by $com_i^* \leftarrow \mathsf{Com}(k_i;r^*)$ and $\textit{c}_i\leftarrow\Pi_{\mathrm{bs}}.\mathsf{Com}(rid,i,k_i;r_1)$, and generates a proof $\pi$ for enough balance $\pi=\mathsf{ZKAoK}[(k_i,r^*,rid,r_1,\{v_i^j,r_i^j\}_{j\in[l]}); com_i^*= \mathsf{Com}(k_i;r^*)\wedge \textit{c}_i=\Pi_{\mathrm{bs}}.\mathsf{Com}(rid,i,k_i;r_1) \wedge k_i\geq 0 \wedge \forall j\in[l], com_i^j=\mathsf{Com}(v_i^j;r_i^j) \wedge bal_i-k_i\geq \sum_{j\in[l]}{v_i^j}]$. If  
 the proof $\pi$ is valid, the platform interacts with the user by running $\Pi_{\mathrm{bs}}.\langle\mathsf{BlindSign}(sk, c_i),\mathsf{BlindRcv}((rid, i, k_i), r_1)\rangle\rightarrow(b;\sigma^*/\bot)$. If $\Pi_\mathrm{bs}.\mathsf{Vrfy}(rid,i,k_i,\sigma^*)=1$ and the  platform outputs $b=1$, then the transaction succeeds. The platform updates the state $st'$ by adding the commitment $com_i^*$ to the user's balance state of asset $i$, recording the private exchange transaction and related metadata. The user obtains a new asset record $\mathit{Rd_{ast}}=(rid,i,k_i,\sigma^*)$ and can see his own updated state in the system. Otherwise, the platform outputs $b=0$, and the transaction fails. The user aborts the transaction and the platform updates the internal state accordingly.

\noindent$\langle\mathsf{Withdraw}(\mathit{req_{wit}},\mathit{Rd_{ast}})$, $\mathsf{Deduct}$ $(sk,st)\rangle\rightarrow (;b,st')$: 
This is an anonymous transaction, which means the user does not log into his account and just acts as an anonymous guest, and the platform does not know whom he is interacting with.  In the withdraw operation, the user takes the request $\mathit{req_{wit}}=(i, k_i, meta)$ and an asset record $\mathit{Rd_{ast}}$ as input, where $meta$ contains the on-chain address and other metadata possibly required for the transaction. The user sends $(\mathit{req_{wit}},\mathit{Rd_{ast}})$ directly to the platform. The platform parses $\mathit{Rd_{ast}}=(rid,i,k_i,\sigma)$. If $rid\notin \mathsf{ID}$ which means the asset record has never been used before, and $\Pi_\mathrm{bs}.\mathsf{Vrfy}(rid,i,k_i,\sigma)=1$, the platform does the on-chain transfer to the specified address in $meta$, output $b=1$ indicating the transaction succeeds, and updates the internal state accordingly including adding $rid$ to the set $\mathsf{ID}$. Otherwise, the user and platform abort the transaction.

\noindent{$\langle\mathsf{File}(uid, \mathit{Rd_{reg}}, \mathit{req_{fil}}), \mathsf{Sign}(sk,st)\rangle\rightarrow (\mathit{doc};b,st')$}: This operation is done for client compliance. In our basic construction, we follow the same client compliance rule as our full construction specified in~\ref{compliance-rule}. So the user needs to file tax for the transactions occurring in a time period. The user logs into the system with $uid$ and $\mathit{Rd_{reg}}$, and gets compliance state, his balance state, and transaction histories from the system internal state $st$. Based on the transaction histories and the state of balance, the user computes the commitment $c=\Pi_{\mathrm{bs}}.\mathsf{Com}(id, cp_1, cp_2, mt; r)$ on the user's real identity $id$, the total cost $cp_1$, the total gain $cp_2$ during the time period $mt$ and generates a proof of correct identity, total cost and gain, and time period calculation in the commitment.

According to the rule, the correct total cost and gain during $mt$ are only related to the assets the user sells by exchanging out (including plain exchange and private exchange for anonymous withdraw) during $mt$. 
The user should show the total amount $\textit{sum}_i$ of each asset $i$ he sells during $mt$ that is the sum of exchange-out in plain exchange and private exchange which is committed.  
It is easy to add the amount in plain exchange. In private exchange, the amount is committed with additive homomorphic commitment. So the commitment of the sum is the sum of each commitment. The user could open the commitment of the sum to show the total amount. Then the user needs to specify the transactions he buys them in, which occurs in deposit and plain exchange operations. With all the related deposit, plain exchange, and private exchange transaction histories, the user could calculate the gain and cost by multiplying each amount and the corresponding price and adding them. Since the price is plain value and some amount of gain is committed with  additive homomorphic commitment, the scalar multiplication and addition on the commitment could get the committed value $c_2$ of the total gain. The total cost $cp_1^*$ could be calculated with price and amount in plain.

To prove the correctness of $c$, the user generates the proof that in $c$, $cp_1$ is equal to plain calculation $cp_1^*$, $cp_2$ is the value $c_2$ commits to, $id$ is the user $uid$'s real identity (which involves the proof of same real identity), and $\mathit{mt}$ is the exact time period. 
If the user's compliance state $\mathsf{cmp}_{\mathit{mt}}=\mathsf{false}$ for the time period $\mathit{mt}$ and the proof is valid, the platform interacts with the user by running $\Pi_{\mathrm{bs}}.\langle\mathsf{BlindSign}(sk, c),\mathsf{BlindRcv}((\textit{id}, cp_1, cp_2, \mathit{mt}), r)\rangle\rightarrow(b;\sigma^*/\bot)$. If the user passes the final check, i.e., $\Pi_\mathrm{bs}.\mathsf{Vrfy}(id,i,k_i,\sigma^*)=1$  the transaction succeeds and the platform outputs $b=1$.
For the platform, it updates the internal state, including the user's balance state $\mathsf{balance}_i=(bal_i\leftarrow bal_i-sum_i,\emptyset)$ for each asset $i$, and compliance state $\mathsf{cmp}_{\mathit{mt}}\leftarrow\mathsf{true}$. For the user, 
he outputs $\mathit{doc}=(id, cp_1, cp_2, \mathit{mt}, \sigma^*)$.  Otherwise, none of the checks pass, the user aborts the transaction and the platform outputs $b=0$.

\noindent$\mathsf{Verify}(\textit{epp}, pk, \mathit{doc})\rightarrow b$: 
The authority sets the correct timestamp as $\mathit{mt}'$ from $epp$ and parses $\mathit{doc}=(id,cp_1, cp_2,\mathit{mt}, \sigma^*)$,
if $\mathit{mt}\neq \mathit{mt}'$ or $id$ is invalid (which involves some real identity check) or $\Pi_{\mathrm{bs}}.\mathsf{Vrfy}((id, cp_1,cp_2, \mathit{mt}),\sigma^*)\rightarrow 0$, 
it outputs $b=0$ indicating the verification fails. 
Otherwise, it is valid and updates $id$'s compliance state of $mt$ to $\mathsf{true}$, which is maintained by the authority.

\noindent$\mathsf{Check}(epp,st)$: $\mathsf{P}$ runs $\mathsf{Check}(epp,st)$ for self-checking the internal state's compliance with platform rules specified in $epp$. The output is a single bit $b$, with $b=1$ indicating a passing check and $b=0$ otherwise.

\noindent{\bf Anonymity set of $\Pi_\mathsf{BWA}$.} The anonymity set that $\Pi_\mathit{bs}$ provides is larger than the set provided by the security definition of basic withdraw anonymity. Because the security definition of basic withdraw anonymity only captures that for an adversary-specified amount and two eligible users (both could successfully withdraw the specified amount of coin), it is indistinguishable from which user executing the withdraw operation. This security could be achieved by purely anonymous withdraw operations with all other operations plain (or unprotected). Our construction $\Pi_\mathsf{BWA}$ includes both anonymous withdraw and private exchange via committing the transaction amount, which increases the anonymity set by adding users who privately exchange some coins with an amount different from the specified amount. For example, there are two users doing private exchange twice during a tax report year, where the user $U_1$ privately exchanges twice for 5 bitcoins, and the user $U_2$ privately exchanges once for 3 bitcoins and once for 7 bitcoins. Later, one person withdrew 5 bitcoins which is unlinkable to $U_1$ or $U_2$ for a platform using $\Pi_\mathsf{BWA}$. That is, the anonymity set is larger than the number of users who exchange exactly the same amount of coins as withdrawal.

We will present a much simpler construction with the plain deposit, plain exchange, and only anonymous withdraw later in~\ref{SimplierBWA}, where the anonymity set is exact among users exchanging the same number of coins for anonymous credentials for later withdraw.

\subsection{Security analysis of $\Pi_\mathsf{BWA}$}
We briefly analyze the security of the above basic withdraw anonymity construction $\Pi_\mathsf{BWA}$.

\begin{theorem}[Basic withdraw anonymity] Let $\Pi_{\mathrm{bs}}$ have blindness,  and $\mathsf{ZKAoK}$ be zero-knowledge, $\Pi_\mathsf{BWA}$ satisfies the basic withdraw anonymity defined in Def~\ref{defano}.
\end{theorem}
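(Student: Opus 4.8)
The plan is to prove basic withdraw anonymity for $\Pi_{\mathsf{BWA}}$ via a standard sequence of game hops that gradually erases all information the challenge withdraw transaction (and the surrounding oracle interactions) could carry about which of the two challenge users $uid_0,uid_1$ is acting. The crucial observation is that in $\Pi_{\mathsf{BWA}}$ the withdraw operation itself is performed anonymously: the user sends only $(\mathit{req_{wit}},\mathit{Rd_{ast}})$ where $\mathit{Rd_{ast}}=(rid,i,k_i,\sigma)$, with $rid$ a freshly chosen nonce and $\sigma$ an unblinded blind signature. Since the experiment in Fig.~\ref{anoy} restricts the challenge so that $\mathit{req}_{\mathit{wit}}.\mathit{amt}$ equals the amount recorded in $\textsc{Map}(uid_0,\mathit{ref}_{\mathsf{ast}}^0).\mathit{amt}$ and in $\textsc{Map}(uid_1,\mathit{ref}_{\mathsf{ast}}^1).\mathit{amt}$, the message $(i,k_i)$ signed inside $\mathit{Rd_{ast}}$ is \emph{identical} in both worlds; hence once we argue that $\sigma$ is distributed independently of which blind-signing session produced it, the withdraw transcript is perfectly independent of $b$.

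I would proceed as follows. $\mathsf{G}_0$ is the real experiment $\mathrm{Exp}^{\mathrm{ano\text{-}wit}}$. In $\mathsf{G}_1$, I replace every $\mathsf{ZKAoK}$ proof generated by the honest-user oracles ($\mathcal{O}^1_{\mathsf{Exchange}}$ for the private-exchange case, $\mathcal{O}^1_{\mathsf{File}}$, and — if relevant to the challenge user's prior activity — any proof tied to $uid_b$) with simulated proofs using the zero-knowledge simulator and trapdoor; indistinguishability follows from the zero-knowledge property of $\mathsf{ZKAoK}$. In $\mathsf{G}_2$, I use blindness of $\Pi_{\mathrm{bs}}$ to argue that the signature $\sigma$ the challenge user obtains (and hence reveals) during the prior private-exchange step preparing $\mathit{Rd_{ast}}^b$ is distributed the same way regardless of which user-side session was run against the platform; more precisely, since the adversary plays the malicious platform and runs $\mathsf{BlindSign}$, blindness guarantees that the unblinded $\sigma^*$ output to the user, and the later revealed tuple $(rid,i,k_i,\sigma^*)$ with $rid$ uniform, reveals nothing linking back to the session. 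After these hops the challenge withdraw transcript, and all oracle outputs the adversary sees after the challenge (which by the experiment's side condition may not reference $\mathit{ref}_{\mathsf{ast}}^0,\mathit{ref}_{\mathsf{ast}}^1$), are independent of $b$, so the adversary's guess is a coin flip and $\mathrm{Pr}[\mathsf{G}_2=1]=1/2$. Summing the hop distances gives $|\mathrm{Pr}[\mathrm{Exp}^{\mathrm{ano\text{-}wit}}=1]-1/2|\le\mathit{negl}(\lambda)$.

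The main obstacle I anticipate is making the blindness reduction clean given that the \emph{same} user $uid_b$ may have carried out many earlier oracle-induced operations before the challenge, so one must be careful that the view the adversary accumulates across $\mathcal{O}^1_{\mathsf{Join}},\mathcal{O}^1_{\mathsf{Deposit}},\mathcal{O}^1_{\mathsf{Exchange}},\mathcal{O}^1_{\mathsf{File}}$ does not already partially determine which record $\mathit{Rd_{ast}}^b$ belongs to which user. The plain deposit and plain exchange operations do leak $uid$ to the adversary/platform, but they do \emph{not} leak the committed private-exchange amounts or the nonces, and the experiment's guard clauses (no queries referencing $\mathit{ref}_{\mathsf{ast}}^0$ or $\mathit{ref}_{\mathsf{ast}}^1$ after the challenge, and equal amounts on the two challenge records) are exactly what rules out trivial distinguishing; I would spell out that the only adversary-visible data attached to the challenge records prior to the challenge are commitments (hiding) and simulated proofs, so a hybrid over the at-most-$N$ private-exchange sessions, each hop invoking blindness of $\Pi_{\mathrm{bs}}$ and hiding of $\mathsf{Com}$, completes the argument. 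A secondary subtlety is that the theorem statement as given lists only blindness and zero-knowledge as hypotheses, whereas the reduction naturally also wants hiding of the commitment; I would note that the commitment used to blind messages in $\Pi_{\mathrm{bs}}$ inherits hiding from the blindness property itself, so no extra assumption is formally needed.
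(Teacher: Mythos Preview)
Your proposal is correct and follows essentially the same route as the paper: a short hybrid sequence that first simulates all $\mathsf{ZKAoK}$ proofs and then invokes blindness of $\Pi_{\mathrm{bs}}$ to decouple the revealed $(rid,i,k_i,\sigma)$ from the blind-signing session that produced it. The paper's sketch differs only in the mechanics of the blindness step: rather than a hybrid over the private-exchange sessions, it has the reduction \emph{guess} in advance which two sessions will correspond to the adversary's challenge references $\mathit{ref}_{\mathsf{ast}}^0,\mathit{ref}_{\mathsf{ast}}^1$ and embeds a single blindness challenge there, incurring a $1/q^2$ loss where $q$ is the number of queries; your observation about commitment hiding being subsumed by blindness and your handling of the post-challenge $\mathcal{O}^1_{\mathsf{File}}$ queries are consistent with (and arguably more explicit than) what the paper writes.
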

\begin{proof}[Proof sketch]
    We say $\Pi_\mathsf{BWA}$ satisfies the basic withdraw anonymity if \adv\ cannot gain any advantage to link the withdraw transaction with the user identity.  We prove this theorem according to Def~\ref{defano}. 
    
    Note in the experiment $\adv$ provides two identities with valid credentials for withdrawal respectively. 
    It means \adv\ has successfully queried the credential on the same asset with the same amount for both identities, and the commitments on identical withdrawal amounts, which stem from private exchanges, are present in both users' accounts. While \adv\ has the ability of querying the $\mathcal{O}^1_\mathsf{File}$ oracle to ascertain the total amount of commitments for asset $i$ held by both users, these commitments are determined by the private exchanges that are common to both users. Therefore, \adv\ is unable to gain any benefit from querying $\mathcal{O}^1_\mathsf{File}$ oracle in this context.
    \adv\ wins only if it can link the credential showing for withdrawal of the challenger to one of the private exchanges correctly with overwhelming probability. 
    In the private exchange phase, the view of \adv\ consists of the commitments $\textit{com}_i^*, c_i$, the proof $\pi$ and the blind signature transcript $\textit{trans}$. We define the following hybrid games:
    \setlist{nolistsep}
    \begin{itemize}[noitemsep,leftmargin=*]
        \item[-] $\mathsf{G}_0$: it is identical with the experiment in Def~\ref{defano};
        \item[-] $\mathsf{G}_1$: it is identical with $\mathsf{G}_0$ except that $\pi$ is replaced by $\pi'$ which is simulated with random strings;
        \item[-] $\mathsf{G}_2$: it is identical with $\mathsf{G}_1$ except that 
        $\mathcal{C}$ acts as the adversary on the blindness of $\Pi_{\mathrm{bs}}$: it chooses $rid_0, rid_1$ randomly w.r.t. references $\textit{ref}_0, \textit{ref}_1$ and sends $(rid_0, i, k_i), (rid_1, i, k_i)$ as the challenge to the challenger $\mathcal{B}$ in $\Pi_{\mathrm{bs}}$'s blindness experiment. Then \cdv\ interacts with $\mathcal{B}$ and forwards the transcripts \textit{trans} to \adv.
        Finally, $\mathcal{B}$ chooses a random bit $b$ and interacts with \cdv\ who blind sign $(rid_b, i, k_i)$ and $(rid_{1-b}, i, k_i)$, respectively. Then \bdv\ send two message signature pairs to $(rid_b, i, k_i,\sigma_b)$ and $(rid_{1-b}, i, k_i, \sigma_{1-b})$ to \cdv. \cdv\ forwards $(rid_0, i, k_i, \sigma_0)$ to \adv. \cdv\ forward \adv's guess $b'$ to \bdv. If $b'=b$ say \adv\ win the game by linking the withdraw operation with the credential issuance in a private exchange, which means \cdv\ could attack the blindness of underlying blind signature $\Pi_{\mathrm{bs}}$.
    \end{itemize}
    If \adv\ wins in $\mathsf{G}_2$ with non-negligible probability, then \cdv\ wins the blindness experiment of $\Pi_{\mathrm{bs}}$ also with non-negligible probability which leads to a contradiction.
    So $\adv$ wins in $\mathsf{G}_2$ with only negligible probability.
    Compared with $\mathsf{G}_1$, \cdv hits the correct challenge of \adv with the probability $1/q^2$ where $q\in \mathsf{poly}(\lambda)$ denotes the total number of \adv's queries. So there is at most $1/q^2$ reduction loss from $\mathsf{G}_1$ to $\mathsf{G}_2$.
    Since $\mathsf{ZKAoK}$ is zero-knowledge, $\mathsf{G}_1$ can be distinguished from $\mathsf{G}_0$ with only negligible probability.
    Thus $\adv$ wins in $\mathsf{G}_0$ also with only negligible probability.
\end{proof}

\begin{theorem}[Overdraft prevention] Let $\Pi_{\mathrm{bs}}$ be unforgeable, commitment be binding, and $\mathsf{ZKAoK}$ be argument-of-knowledge. $\Pi_\mathsf{BWA}$ satisfies the overdraft prevention defined in Def~\ref{od}.
\end{theorem}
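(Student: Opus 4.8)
\textbf{Proof proposal for overdraft prevention of $\Pi_\mathsf{BWA}$.}

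The plan is to follow the same case-analysis structure used in the full-construction overdraft prevention proof (Theorem on $\Pi_{\mathsf{\name}}$), but adapted to the simpler basic-anonymity scheme where deposit and plain exchange happen in the logged-in account and only the private exchange and withdraw involve blinded asset records. First I would fix the extractor $\mathcal{E}$: since the adversary acts as malicious users against the honest-platform oracles $\mathcal{O}_{\mathsf{od}}=\{\mathcal{O}^2_{\mathsf{Issue}},\mathcal{O}^2_{\mathsf{Credit}},\mathcal{O}^2_{\mathsf{Deduct}},\mathcal{O}^2_{\mathsf{Update}},\mathcal{O}^2_{\mathsf{Sign}},\mathcal{O}_{\mathsf{public}}\}$, the extractor is built from the knowledge extractor of the underlying $\mathsf{ZKAoK}$ for the validity proof $\pi$ submitted in each private exchange, withdraw, and file interaction, plus the bookkeeping the honest platform already performs in the logged-in operations (deposit, plain exchange). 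Rewinding $\mathcal{A}$ with its fixed randomness $r_{\mathcal{A}}$ as in the proof-of-knowledge style of~\cite{lindell2003parallel}, $\mathcal{E}$ recovers for each transaction the user id $uid$, the consumed/produced records, the committed amounts, and the associated metadata, and assembles the histories $\{h_t\}$ in transaction order so that the experiment $\mathrm{Exp^{od}}$ can run its sequential $\mathsf{RdSet}$ check.

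Next I would argue that if $\mathrm{Exp^{od}}(\mathcal{A},\mathcal{E},\lambda)=1$ then at least one of the following bad events occurs on some transaction, and reduce each to a building block: (i) a withdraw uses an asset record $\mathit{Rd_{ast}}=(rid,i,k_i,\sigma)$ with $rid\notin\mathsf{ID}$ that was never output by a private-exchange oracle call --- this breaks unforgeability of $\Pi_{\mathrm{bs}}$; (ii) a valid asset record is replayed under a fresh identifier $rid'\neq rid$ to double-spend --- since the platform rejects repeated identifiers, this forces the commitment $\Pi_{\mathrm{bs}}.\mathsf{Com}(rid,i,k_i;r_1)$ to open two ways, contradicting binding; (iii) the private-exchange validity proof $\pi$ passes although the committed withdrawal amount $k_i$ is negative, or although $bal_i-k_i<\sum_{j\in[l]}v_i^j$, i.e.\ the user lacked enough plain balance after subtracting outstanding commitments --- extraction of a false witness contradicts argument-of-knowledge of $\mathsf{ZKAoK}$ (or, if the extracted witness is genuine but inconsistent with the commitments stored in the platform state, binding is broken); (iv) a withdraw transfers more on-chain than the committed amount in the shown record, or the plain-exchange request deducts less than it credits on the exchange-out side / is unfair --- these are caught directly by the $\mathsf{RdSet}$ check on honest-platform-maintained plain balances, so they cannot occur without one of the preceding primitive breaks; (v) a user spends another honest user's asset record, which requires guessing a uniformly random $rid$ and so happens with negligible probability. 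I would package these as a union bound: $\Pr[\mathrm{Exp^{od}}=1]\le \mathrm{Adv}^{\mathrm{unforge}}_{\Pi_{\mathrm{bs}}}+\mathrm{Adv}^{\mathrm{bind}}_{\mathsf{Com}}+\mathrm{Adv}^{\mathrm{aok}}_{\mathsf{ZKAoK}}+\mathrm{negl}(\lambda)$.

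The main obstacle I expect is the reduction's handling of the interplay between the \emph{hidden} committed amounts in private exchanges and the \emph{plain} balances updated during logged-in deposits and plain exchanges. Concretely, the overdraft check must ensure that across a whole year of mixed operations, the sum of all private-exchange commitments on asset $i$ plus all plain exchange-out amounts never exceeds what was credited; the extractor must therefore faithfully reconstruct, for each $uid$, the running relation $bal_i - \sum_{j} v_i^j \ge 0$ that each private-exchange proof asserts, and the reduction must embed its $\mathsf{ZKAoK}$ or binding challenge into exactly the proof/commitment where the first violation surfaces. Identifying ``the first bad transaction'' cleanly (so that all earlier state is honestly consistent and the challenge can be planted) is the delicate bookkeeping step; everything else is a routine hybrid over the standard security games of commitment, blind signature, and zero-knowledge argument of knowledge, mirroring the full-system proof. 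I would conclude that, since each term on the right-hand side is negligible, $\Pr[\mathrm{Exp^{od}}(\mathcal{A},\mathcal{E},\lambda)=1]\le\mathrm{negl}(\lambda)$, establishing overdraft prevention for $\Pi_\mathsf{BWA}$.
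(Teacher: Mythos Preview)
Your approach is essentially the same as the paper's: both give a proof sketch by case analysis over the ways an overdraft could arise, reducing each to unforgeability of $\Pi_{\mathrm{bs}}$, binding of the commitment, argument-of-knowledge of $\mathsf{ZKAoK}$, or the negligible probability of guessing a random $rid$. The paper's sketch is briefer, collapsing your cases into three (forged credential, guessed other user's credential, and issued credential whose revealed amount exceeds the committed/available balance); one small slip in your case~(ii) is that in $\Pi_\mathsf{BWA}$ the withdraw reveals $(rid,i,k_i,\sigma)$ in the clear rather than via a commitment opening, so ``replaying under a fresh $rid'$'' would require a valid signature on the new message $(rid',i,k_i)$ and is therefore an unforgeability break, not a binding break.
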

\begin{proof}[Proof sketch]
    Intuitively, overdraft means \adv\ spends more than he owns. 
    Note that all transactions are plain which can be checked by the platform except the withdraw transactions. So the event that \adv\ spends more asset only happens in one of the following three cases: 
    \begin{itemize}
        \item[(1)] In the withdraw transaction, \adv\ withdraws asset with a valid credential but it has never queried the deduct oracle on it which means it is forged by itself. It violates the unforgeability of $\Pi_{\mathrm{bs}}$.
        \item[(2)] \adv\ guess other users' valid credentials which is negligible due to the randomness of credential unique identifier.
        \item[(3)] The credential is issued by the platform, but the revealed asset amount is larger than the deducted amount in the commitment or the account plain balance. In this case, \adv\ could get it by finding collisions in the commitment, which could be reduced to the binding property of commitment. \adv\ could also get it by proving a wrong statement, which is negligible due to the argument-of-knowledge property of $\mathsf{ZKAoK}$.
    \end{itemize}
    Since any of the above cases happens only with negligible probability, \adv\ also wins with negligible probability.
\end{proof}

\begin{theorem}[Tax-report-client-compliance] Let $\Pi_{\mathrm{bs}}$ be unforgeable, commitment be binding, and $\mathsf{ZKAoK}$ is an argument of knowledge. $\Pi_\mathsf{BWA}$ satisfies the Tax-report-client-compliance defined in Def~\ref{defcp} where $F$ is the tax-report function.
\end{theorem}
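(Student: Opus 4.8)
The plan is to mirror the proof of Theorem~\ref{thmcomp}, specialized to the warm-up scheme $\Pi_\mathsf{BWA}$; the argument is in fact simpler, since every operation except $\mathsf{Withdraw}$ is carried out on a plain (logged-in) account whose details the honest platform records into its internal state $st$, so most of the ``basic client-compliance rules'' of Def.~\ref{defcp} are enforced directly by the platform's checks and cannot be violated by a successful transaction. First I would construct the extractor $\mathcal{E}$. It re-runs $\mathcal{A}$ under the same randomness it controls, in the proof-of-knowledge style of \cite{lindell2003parallel}. For every $\mathsf{Join}$, $\mathsf{Deposit}$, and plain $\mathsf{Exchange}$ query that succeeded, $\mathcal{E}$ reads the user identifier, asset names, amounts, prices and resulting records verbatim from the request and from the platform state $st$; in particular the recorded prices are exactly the public prices in $epp$, because the honest $\mathsf{Credit}$/$\mathsf{Update}$ reject otherwise. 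For every private $\mathsf{Exchange}$ query, $\mathcal{E}$ runs the $\mathsf{ZKAoK}$ knowledge extractor on $\pi$ to recover the committed exchange-out amount $k_i$ and the openings of $com_i^*$ and $c_i$. For every $\mathsf{File}$ query, $\mathcal{E}$ runs the $\mathsf{ZKAoK}$ extractor to recover $(id,cp_1,cp_2,mt,r)$, the claimed per-asset totals $\textit{sum}_i$, and the openings of the homomorphic combination $c_2$. Composing at most $N=\mathrm{poly}(\lambda)$ such extractions preserves expected polynomial running time by the standard argument. $\mathcal{E}$ then assembles $\{h_t\}$ and maintains $\mathsf{RU}$ exactly as in Fig.~\ref{cp1}.

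Next I would bound $\Pr[\mathrm{Exp^{clie-comp}}(\mathcal{A},\mathcal{E},F,\lambda)=1]$ by a case split on the bad event that makes $\mathcal{A}$ win. Case (i): the accepted $doc^*=(id,cp_1,cp_2,mt,\sigma^*)$ was never produced by $\mathcal{O}^2_\mathsf{Sign}$. Then, since $\mathsf{Verify}$ already rejects a wrong $mt$, $\mathcal{A}$ has a fresh $\Pi_{\mathrm{bs}}$ signature on $(id,cp_1,cp_2,mt)$, contradicting unforgeability. Case (ii): $doc^*$ was produced, but some extracted history violates a basic rule --- an unregistered $uid$, records of two owners in one transaction, or a mis-recorded price. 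The price subcase cannot occur for a successful plain transaction (the platform checks against $epp$); the owner-consistency and registered-$uid$ conditions are enforced by the login step together with the same-$usk$ clauses of the $\mathsf{File}$/$\mathsf{Exchange}$ proofs, so a violation yields either a $\Pi_{\mathrm{bs}}$ forgery of an access token $Rd_{reg}.cred$, a commitment collision, or a broken $\mathsf{ZKAoK}$ proof. Case (iii): all histories are well-formed, but $cp\neq F(\{cp\}_{doc^*.uid})$. Here the $\mathsf{File}$ proof asserts exactly that $cp_1$ equals the plain accumulation of (public buying price)$\times$(amount) over the buy transactions the user designates, that $cp_2$ equals the value committed by the homomorphic combination $c_2$ of the exchange-out amount commitments (scaled by public selling prices), and that the per-asset totals $\textit{sum}_i$ equal the homomorphic sums of the private-exchange commitments plus the plain exchange-out amounts. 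If the amounts, prices and identity used inside the $\mathsf{File}$ proof coincide with those $\mathcal{E}$ extracted from the individual transactions, then evaluating $F$ on $\{h_t\}$ reproduces $cp$ exactly, a contradiction; if they differ, then some commitment --- a private-exchange commitment reopened in $\mathsf{File}$, or the $\mathsf{File}$ commitment itself --- has two openings (breaking binding), or the $\mathsf{File}$ relation is satisfied on a false statement (breaking the argument-of-knowledge property). Each item is a routine black-box reduction, and summing the negligible bounds over the $O(1)$ cases and $N$ transactions finishes the proof.

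The main obstacle I expect is Case (iii), and concretely the bookkeeping that the quantities $\mathcal{E}$ extracts from the private-exchange proofs are the same quantities $\mathcal{A}$ feeds into the $\mathsf{File}$ proof: both are obtained by rewinding the one adversary under fixed randomness, so one must argue that an inconsistency between the two extraction points is itself a commitment collision (this is where binding does the real work), and one must verify that the homomorphic accounting encoded by the $\mathsf{File}$ relation --- scalar multiplication by public prices and summation of commitments, over plain and over committed amounts alike --- faithfully computes the function $F$ on the extracted histories. A secondary, standard concern is the expected-polynomial-time composition of the $N$ knowledge extractions, handled as in \cite{lindell2003parallel}; and a minor point is making precise how $doc^*.id$ is tied to $doc^*.uid$ through the ``same real identity'' clause in the $\mathsf{File}$ proof and the registration record, which is needed so that $\{cp\}_{doc^*.uid}$ in the experiment is the right set to compare against.
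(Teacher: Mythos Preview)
Your proposal is correct and follows essentially the same approach as the paper's proof sketch: a case split into (1) $doc^*$ is a fresh signature (unforgeability of $\Pi_{\mathrm{bs}}$), (2) the commitments recorded in the account are reopened to different values during $\mathsf{File}$ (binding), and (3) the $\mathsf{File}$ proof accepts on a false statement (argument of knowledge). The paper's sketch is considerably terser --- it does not explicitly build $\mathcal{E}$ and essentially omits your Case~(ii), since in $\Pi_\mathsf{BWA}$ the logged-in plain-account mechanism makes the basic client-compliance rules (registered $uid$, single owner, correct recorded prices) immediate --- but the substance of the reduction is the same.
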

\begin{proof}[Proof sketch]
    Intuitively, tax-report-client-compliance requires any user to report the exact cost and gain for his account. \adv\ generates a valid \textit{doc} that contains a signature $\sigma^*$ on $(id, cp_1, cp_2, \mathit{mt})$ that can be verified by the platform's \textit{pk}. \adv\ wins if the total cost value $cp_1$ or total gain value $cp_2$ is wrong. Note that $cp_1, cp_2$ are computed from the user's transaction histories which are recorded by the platform. So, for deposit and plain exchange transactions, the platform knows the plain transaction details and could directly compute the cost and gain. For the private exchange transactions that contribute to a portion of the user' gain, the platform knows the plain price and the commitment to the amount and could calculate the commitment of the sum gain. 
    With the platform knowing the plain cost, plain gain, and committed gain, and blind-signing the compliance information based on them, \adv\ wins only in one of the following cases: 

    \begin{itemize}
        \item[(1)] The signature $\sigma^*$ of the platform was forged by \adv\ on the wrong $cp_1$ or $cp_2$. 
        If it happens, it violates the unforgeability of $\Pi_{\mathrm{bs}}$.
        \item[(2)] In the file protocol, \adv\ opens commitments recorded in his account to different values, which violates the binding property of commitment. 
        \item[(3)] In the file protocol, \adv\ proves a wrong statement on his compliance information to get less tax, which violates the argument-of-knowledge property of $\mathsf{ZKAoK}$. 
    \end{itemize}
    Since any of the above cases happens only with negligible probability, \adv\ also wins with negligible probability.
\end{proof}

\noindent{\textbf{For platform compliance.}} We observe that our basic withdraw anonymity construction does not bring any more challenges to platform compliance than existing plain exchange schemes. The reason is platform could know the exact total amount of each asset in all accounts. The transaction amount is known to the platform in deposit, plain exchange, and withdraw with one-use anonymous credential. The only case that the platform does not know the amount is the exchange preparation, while it does not exchange assets but changes the asset form from plain to anonymous version. Thus it satisfies the strictest platform compliance rule in~\cite{BinanceProofofReserve}.

\subsection{Simpler construction with basic withdraw anonymity}\label{SimplierBWA}

We construct a simpler basic withdraw anonymity scheme denoted by $\Pi_\mathsf{S-BWA}$ via the partially blind signature $\Pi_\mathrm{pbs}=(\mathsf{Keygen},$ $\langle\mathsf{PartialBlindSign},$ $\mathsf{PartialBlindRcv}\rangle, \mathsf{Vrfy})$.

The main idea to achieve basic anonymous withdraw is that before withdrawing, the user gets an anonymous credential issuance for the asset so that showing an anonymous credential in the withdraw is unlinkable to the credential issuance. The credential issuance is done as a special exchange transforming the asset form from plain asset to asset credential. $\Pi_\mathsf{S-BWA}$ is simpler than $\Pi_\mathsf{BWA}$ because its credential issuance does not protect the amount of asset, which is committed in $\Pi_\mathsf{BWA}$ and brings additional proof for the user's balance in the plain exchange and private exchange. $\Pi_\mathsf{S-BWA}$ uses the partially blind signature $\Pi_\mathrm{pbs}$ to enable credential issuance where the asset amount is public but only the random index unique for the credential is hidden from the platform.  Since the asset info is public for the user and the platform, the user does not need to prove enough balance for the following exchange. Accordingly, a withdraw operation is unlinkable to the previous special exchange operations with the same amount, which causes quite limited anonymity that $\Pi_\mathsf{S-BWA}$ could provide. But we stress $\Pi_\mathsf{S-BWA}$ is very simple and we will prove that $\Pi_\mathsf{S-BWA}$ satisfies the basic withdraw anonymity we define in~\ref{defano}.

The concrete construction is shown in the following, where for the same steps as $\Pi_\mathsf{BWA}$, we will specify it  and refer to  $\Pi_\mathsf{BWA}$'s construction description for simplicity.

\noindent $\mathsf{Setup}(1^\lambda)\rightarrow \textit{epp}$: This step is same as $\Pi_\mathsf{BWA}.\mathsf{Setup}$.
It sets up the system parameter $epp$ that includes public parameters of all involved cryptographic primitives, some specific public parameters about the system, such as the total assets kinds $n$, the maximum balance $v_{max}=p-1$ for some super-poly $p$, and some dynamic parameters such as the current price ${pr}_i$ of each asset $i\in[n]$. For simplicity, $epp$ will be an input of all the following algorithms and protocols implicitly.

\noindent$\mathsf{PKeyGen}(\textit{epp})\rightarrow (pk,sk)$:
The platform  generates a key pair $(pk, sk) \leftarrow \Pi_\mathrm{pbs}.\mathsf{KeyGen}(\textit{epp})$. The internal state $st$ initialization is the same as in $\Pi_\mathsf{BWA}.\mathsf{Setup}$. Concretely, the platform initializes the platform internal state $st$ including but not restricted to the registered user set $\mathsf{USet}=\emptyset$, the identifier set $\mathsf{ID}=\emptyset$. For simplicity, the public key $pk$ will be an implicit input of the following algorithms of both the user and the platform.

\noindent{$\langle\mathsf{Join}(\mathit{req_{joi}}), \mathsf{Issue}(sk,st)\rangle\rightarrow (uid,\mathit{Rd_{reg}}/\bot;b,st')$}: This step is the same as $\Pi_\mathsf{BWA}.\langle\mathsf{Join},
\mathsf{Issue}\rangle$. In this procedure, a user registers into the system to get a unique id $uid$ and a registration record $\mathit{Rd_{reg}}$. Here $\mathit{Rd_{reg}}$ is an access token for authentication where $\mathit{Rd_{reg}}.cred=tk$. We do not specify the authentication method, which could be any secure one widely used in existing exchange systems. $\mathit{req_{joi}}=\textit{info}$ contains all the information for registration, especially some real identity and bank information for compliance. The platform updates the internal state $st'$ accordingly and outputs a bit $b$ indicating whether the registration succeeds. $b=1$ means the user $uid$ registered successfully, and a record will be added to $st'$ to store the state related to $uid$, such as the metadata, the following transactions, the balance, etc. The record is visible when the user logs into the system. Right when the user registers successfully, the balance is zero and finished transactions are empty.

\noindent$\langle\mathsf{Deposit}(uid,\mathit{Rd_{reg}},\mathit{req_{dep}}), \mathsf{Credit}(sk,st)\rangle\rightarrow (;b, st')$: This deposit step is the same as $\Pi_\mathsf{BWA}.\langle\mathsf{Deposit},
\mathsf{Credit}\rangle$. It is a plain deposit where users deposit assets as they did in a plain exchange system and the platform updates state $st'$ accordingly. Concretely, users first authenticate themselves to the system with user id $uid$ and access token $\mathit{Rd_{reg}}$, which is the login process. Then users submit deposit request $\mathit{req_{dep}}$ which includes all the information for deposit. The platform outputs $b=1$ indicating the transaction succeeds and updates the state $st'$ so that the user can see this deposit transaction, updated balance, and other related metadata.

\noindent$\langle\mathsf{Exchange}(\textit{uid},\mathit{Rd_{reg}},\mathit{req_{exc}}),\mathsf{Update}(sk,st)\rangle\rightarrow (\mathit{Rd_{ast}};$ $b,st')$: The exchange operation is simpler than $\Pi_\mathsf{BWA}.\langle\mathsf{Exchange}, \mathsf{Update}\rangle$.
This plain exchange procedure includes two separate operations specified by the request attribute $\mathit{req_{exc}}.op$, where  $req_{exc}.op=pln$ indicates a plain exchange between different assets, and $\mathit{req_{exc}}.op=cred$ indicates a plain exchange of one asset from a number in the user account to an anonymous credential of that asset where the amount of exchange is plain. 
Both two kinds of withdraw operations are done in the plain account, which means the user logs into the system so that the platform knows whom it is interacting with. Concretely, the user logs into the system with $uid$ and $\mathit{Rd_{reg}}$ and gets the state of his account including balance and all deposit and exchange transaction histories.

When $req_{exc}.op=pln$, $\mathit{req_{exc}}=(i, k_i,j, k_j, pln)$, the user does the plain exchange with the platform as usual in any plain exchange platform. 
When $req_{exc}.op=cred$, the request $\mathit{req_{exc}}=(i, k_i, cred)$, the user sends the request $\mathit{req_{exc}}$ to the platform and randomly chooses an id $rid$.  Then they run $\Pi_\mathrm{pbs}.\langle\mathsf{PartialBlindRev},\mathsf{PartialBlindSign}\rangle$ with the inputs $(rid, \textit{info})$ and $(sk, \textit{info})$, where $rid$ is the private message of $\Pi_\mathrm{pbs}$ and $\textit{info}=(req_{exc}.name,req_{exc}.amount)$ is the public information for the user and the platform. The user's private output is $\bot$ or $(rid,\textit{info},\sigma)$ and the platform public output is $0/1$ to indicate whether the interaction fails or not. If the interaction succeeds, the user running the exchange algorithm outputs $\mathit{Rd_{reg}}=(rid,i,k_i,\sigma)$ and the platform outputs $b=1$ and update the internal state $st'$ accordingly, e.g., the user $uid$'s balance of asset $i$ is deducted by $k_i$, and the transaction is added to the history.

\noindent$\langle\mathsf{Withdraw}(\mathit{req_{wit}},\mathit{Rd_{ast}})$, $\mathsf{Deduct}$ $(sk,st)\rangle\rightarrow (;b,st')$:  The withdraw operation is the same as  $\Pi_\mathsf{BWA}.\langle\mathsf{Withdraw},
\mathsf{Deduct}\rangle$.
This is an anonymous transaction, which means the user does not log into his account and just acts as an anonymous guest, and the platform does not know whom he is interacting with.  In the withdraw operation, the user takes the request $\mathit{req_{wit}}=(i, k_i, meta)$ and an asset record $\mathit{Rd_{ast}}$ as input, where $meta$ contains the on-chain address and other metadata possibly required for the transaction. The user sends $(\mathit{req_{wit}},\mathit{Rd_{ast}})$ directly to the platform. The platform parses $\mathit{Rd_{ast}}=(rid,i,k_i,\sigma)$. If $rid\notin \mathsf{ID}$ which means the asset record has never been used before, and $\Pi_\mathrm{pbs}.\mathsf{Vrfy}(rid,(i,k_i),\sigma)=1$, the platform does the on-chain transfer to the specified address in $meta$, output $b=1$ indicating the transaction succeeds, and updates the internal state accordingly including adding $rid$ to the set $\mathsf{ID}$. Otherwise, the user and platform abort the transaction.

\noindent{$\langle\mathsf{File}(uid, \mathit{Rd_{reg}}, \mathit{req_{fil}}), \mathsf{Sign}(sk,st)\rangle\rightarrow (\mathit{doc};b,st')$}: This operation is done for client compliance, which is simpler than $\Pi_\mathsf{BWA}.\langle\mathsf{file},
\mathsf{sign}\rangle$, and as easy as filing compliance in plain exchange platform because all compliance-related information is plain for the platform. We follow the same client compliance rule as our full construction specified in~\ref{compliance-rule}. So the user needs to file tax for the transactions occurring in a time period. The user logs into the system with $uid$ and $\mathit{Rd_{reg}}$, and gets compliance state, his balance state, and transaction histories from the system internal state $st$. Based on the transaction histories and the state of balance shown in the platform, the platform could compute the user's cost $cp_1$ and profit $cp_2$ during the specified time period $mt$, where the cost $cp_1$ comes from the buying in asset in deposit and plain exchange multiplied the corresponding price, the profit $cp_2$ comes from selling out asset in plain exchange and special exchange to anonymous credentials during $mt$ multiplied the corresponding price. The mapping between the buying-in asset to the selling-out asset for tax reporting could be specified by the user or the platform depending on the rules. Here our construction does not introduce restriction, as all needed information is plain and could be dealt as in any plain exchange. Then the platform interacts with the user to partially blind sign on empty message $msg=\bot$ and the public informatio $\mathit{info}=(id, cp1,cp_2,mt)$, where $id$ is the user's real identity he showed in the registration. The user gets $doc=(id,cp_1,cp_2,mt,\sigma)$ or $\bot$, and the platform outputs $b=1/0$  and updates the internal state accordingly.

\noindent$\mathsf{Verify}(\textit{epp}, pk, \mathit{doc})\rightarrow b$: The verification operation is the same as $\Pi_\mathsf{BWA}.\mathsf{Verity}$. 
The authority sets the correct timestamp as $\mathit{mt}'$ from $epp$ and parses $\mathit{doc}=(id,cp_1, cp_2,\mathit{mt}, \sigma^*)$,
if $\mathit{mt}\neq \mathit{mt}'$ or $id$ is invalid (which involves some real identity check) or $\Pi_{\mathrm{pbs}}.\mathsf{Vrfy}(\emptyset, (id, cp_1,cp_2, \mathit{mt}),\sigma^*)\rightarrow 0$, where the message $msg=\emptyset$, it outputs $b=0$ indicating the verification fails. 
Otherwise, it is valid and updates $id$'s compliance state of $mt$ to $\mathsf{true}$, which is maintained by authority

\noindent$\mathsf{Check}(epp,st)$: The self-checking operation is the same as $\Pi_\mathsf{BWA}.\mathsf{Check}$.  $\mathsf{P}$ runs $\mathsf{Check}(epp,st)$ for self-checking the internal state's compliance with platform rules specified in $epp$. The output is a single bit $b$, with $b=1$ indicating a passing check and vice versa.

\subsection{Security analysis of $\Pi_\mathsf{S-BWA}$}
We briefly analyze the security of the above basic withdraw anonymity construction $\Pi_\mathsf{S-BWA}$.

\begin{theorem}[Basic withdraw anonymity] Let $\Pi_{\mathrm{pbs}}$ have partial blindness. $\Pi_\mathsf{S-BWA}$ satisfies the basic withdraw anonymity defined in Def~\ref{defano}.
\end{theorem}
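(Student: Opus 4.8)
The plan is to prove this by a single direct reduction to the partial blindness of $\Pi_{\mathrm{pbs}}$, with no intermediate hybrids: in $\Pi_\mathsf{S-BWA}$ the only object that can possibly conceal the correspondence between a withdraw reveal $(rid,i,k_i,\sigma)$ and the credential‑issuance session that produced it is the partially blind signature; every other operation (registration, plain deposit, plain exchange, and file) is completely transparent to a malicious platform, so by construction it carries no information about that link. First I would record two structural facts about the experiment of Fig.~\ref{anoy}. (i)~The two challenge records $\textsc{Map}(uid_0,\mathit{ref}_{\mathsf{ast}}^0)$ and $\textsc{Map}(uid_1,\mathit{ref}_{\mathsf{ast}}^1)$ can only have been produced by $\mathcal{O}^1_\mathsf{Exchange}$ queries with $\mathit{req_{exc}}.op=\mathit{cred}$, since that is the unique way an asset record of the shape $(rid,i,k_i,\sigma)$ ever appears. (ii)~By the experiment's amount check, both records and the withdraw request carry the same amount, so, taking the coin names to agree as well (which I discuss below), both issuance sessions were run on the identical public information $\mathit{info}=(i,k_i)$.

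The reduction $\mathcal{B}$ against partial blindness then proceeds as follows. $\mathcal{B}$ runs $\mathcal{A}$, which itself supplies $pk$ and plays the platform; crucially $\mathcal{B}$ can answer every oracle of $\mathcal{O}_\mathsf{anony}$ on its own, because none of the honest‑user algorithms needs $sk$, and as a byproduct $\mathcal{B}$ knows the full (plaintext) transaction history of every simulated user, so it can answer $\mathcal{O}^1_\mathsf{File}$ queries too, the file output being a deterministic function of data $\mathcal{B}$ already holds. $\mathcal{B}$ guesses the indices $j_1<j_2$ of the two credential‑issuance queries that will later be named in the challenge, among the at most $q=\mathrm{poly}(\lambda)$ such queries. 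When $\mathcal{A}$ issues query $j_1$, $\mathcal{B}$ reads off its public part $\mathit{info}$, samples two fresh uniform identifiers $rid_0,rid_1$, declares $(\mathit{info},rid_0,rid_1)$ to its partial‑blindness challenger, and thereafter relays $\mathcal{A}$'s signer messages for query $j_1$ to the challenger's ``left'' user session and (when query $j_2$ arrives, aborting if its public part differs from $\mathit{info}$, which cannot happen when the guess is correct) for query $j_2$ to the ``right'' session; $\mathcal{B}$ assigns the left session to $uid_0$'s credential and the right one to $uid_1$'s. After both sessions complete, $\mathcal{B}$ receives the pairs $(rid_0,\sigma_0)$ and $(rid_1,\sigma_1)$, indexed by the messages it submitted rather than by left/right. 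In the challenge phase, conditioned on $\mathcal{A}$'s choice matching the guess, $\mathcal{B}$ answers the $\mathsf{Withdraw}$ interaction by sending $\mathcal{A}$ the string $(rid_0,i,k_i,\sigma_0)$; from $\mathcal{A}$'s viewpoint this is exactly $uid_0$'s credential when the challenger's hidden bit is $\beta=0$ and exactly $uid_1$'s credential when $\beta=1$, so the experiment's hidden coin is perfectly simulated by $b:=\beta$. $\mathcal{B}$ then forwards $\mathcal{A}$'s final guess $\hat b$. Conditioned on the index guess being correct the whole simulation is perfect, whence $\mathrm{Adv}^{\text{p-blind}}_{\mathcal{B}}\ \ge\ q^{-2}\cdot\mathrm{Adv}^{\text{ano-wit}}_{\mathcal{A}}$, and partial blindness of $\Pi_{\mathrm{pbs}}$ forces the right‑hand side, hence $\mathcal{A}$'s advantage in Def.~\ref{defano}, to be negligible.

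The main obstacle I expect is not the core indistinguishability step — immediate once the reduction is in place — but the bookkeeping that makes it sound: routing \emph{exactly} two of the adaptively chosen issuance sessions through the external challenger while keeping every other component of $\mathcal{A}$'s view (oracle replies, $\textsc{Map}$ updates, and the post‑challenge restriction that $\mathcal{A}$ may not query with $\mathit{ref}_{\mathsf{ast}}^0,\mathit{ref}_{\mathsf{ast}}^1$) identically distributed to the real experiment, and committing to $(\mathit{info},rid_0,rid_1)$ at precisely the moment query $j_1$ exposes $\mathit{info}$. A secondary, modeling‑level point I would address is the name‑consistency gap in fact~(ii): the experiment of Fig.~\ref{anoy} does not itself require the two challenge records and $\mathit{req_{wit}}$ to agree on the coin name, yet a name mismatch lets $\mathcal{A}$ win trivially and degenerately; I would therefore either strengthen the statement of Def.~\ref{defano} to demand name agreement or restrict the theorem to such non‑degenerate adversaries. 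I would also remark explicitly that $\mathcal{O}^1_\mathsf{File}$ and the remaining oracles are genuinely harmless here, since in $\Pi_\mathsf{S-BWA}$ they reveal nothing a platform watching the plain transcripts does not already possess, so no additional ``sanitizing'' hybrid is needed.
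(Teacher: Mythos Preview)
Your proposal is correct and follows essentially the same approach as the paper: a direct reduction to the partial blindness of $\Pi_{\mathrm{pbs}}$, routing the two challenge credential-issuance sessions through the blindness challenger and forwarding $\mathcal{A}$'s guess. The paper's proof is a much terser sketch that simply observes ``this is the same as the blindness experiment'' and says the challenger forwards queries and responses; your version spells out the index-guessing (with the $q^{-2}$ loss, which the paper's proof of the analogous theorem for $\Pi_\mathsf{BWA}$ does mention but the sketch for $\Pi_\mathsf{S-BWA}$ omits), the handling of the remaining oracles, and the name-consistency caveat, none of which the paper addresses for this construction.
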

\begin{proof}[Proof sketch]
    We say $\Pi_\mathsf{S-BWA}$ satisfies the basic withdraw anonymity if \adv\ cannot gain any advantage to link the withdraw transaction with the user identity.  We prove this theorem according to Def~\ref{defano}. 
    
    Note in the experiment $\mathrm{Exp}¬^{ano-wit}(\adv,\lambda)$, $\adv$ acts as malicious platform, \cdv\ acts as honest users. \adv\ provides two identities with valid credentials for withdrawal respectively. 
    It means \adv\ has successfully queried the credential on the same asset with the same amount for both identities. One of them chosen randomly by the challenger will show the credential in the withdraw transaction. This is the same as the blindness experiment of the partially blind signature. That means \cdv\ could act as the adversary of the blindness experiment of the partially blind signature to interact with challenger \bdv, forwarding \adv's queries and challenges to \bdv, and \bdv's responses to \adv. Then if \adv\ could link the withdraw with the exact credential issuance in exchange, then \cdv\ could the blindness game of partial blind signature, which contradicts the blindness of partially blind signature. So $\Pi_\mathsf{S-BWA}$ satisfies the basic withdraw anonymity if $\Pi_{\mathrm{pbs}}$ has blindness.  
\end{proof}

\begin{theorem}[Overdraft prevention] Let $\Pi_{\mathrm{pbs}}$ be unforgeable. $\Pi_\mathsf{S-BWA}$ satisfies the overdraft prevention defined in Def~\ref{od}.
\end{theorem}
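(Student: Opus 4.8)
\textbf{Proof proposal for overdraft prevention of $\Pi_\mathsf{S-BWA}$.}

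The plan is to follow the same case-analysis skeleton used in the overdraft prevention proof of $\Pi_\mathsf{BWA}$, but simplified because in $\Pi_\mathsf{S-BWA}$ the exchange operation no longer hides the asset amount and therefore carries no zero-knowledge range proof — so only the unforgeability of $\Pi_\mathrm{pbs}$ is needed, not the binding of commitments or argument-of-knowledge of $\mathsf{ZKAoK}$. I would first observe that in $\Pi_\mathsf{S-BWA}$ every operation except $\mathsf{Withdraw}$ is executed by a logged-in user against the plain account state maintained by the platform: deposits, plain exchanges, the special ``credential-issuance'' exchange, and file are all fully transparent, so the extractor $\edv$ can read off the transaction histories $\{h_t\}$ directly from the oracle transcripts and the platform's internal state, and the overdraft experiment's sequential check over $\mathsf{RdSet}$ never fires on these transactions as long as the platform's own bookkeeping is honest. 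Hence the only place where $\adv$ could cause an overdraft is a $\mathsf{Withdraw}$ transaction that spends an asset record $\mathit{Rd_{ast}}=(rid,i,k_i,\sigma)$ which was never legitimately issued, or was issued once but reused.

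Next I would enumerate the three ways the withdraw check can return $1$ and bound each. Case (1): the revealed credential $(rid,(i,k_i),\sigma)$ passes $\Pi_\mathrm{pbs}.\mathsf{Vrfy}$ and $rid\notin\mathsf{ID}$, but $\adv$ never obtained $\sigma$ by querying the credential-issuance exchange oracle for that $(i,k_i)$. Then $\adv$ has produced a fresh valid message–signature pair of $\Pi_\mathrm{pbs}$ beyond what it was given, which I would turn into a reduction breaking the (one-more) unforgeability of $\Pi_\mathrm{pbs}$: the reduction simulates the platform, answering every $\mathsf{PartialBlindSign}$ interaction via its signing oracle, and outputs $\adv$'s collection of spent-plus-forged credentials as the forgery. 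Case (2): $\adv$ withdraws with a credential that belongs to some other (honest or previously-issued) account — i.e. guesses a valid $rid$ it was never handed. Since each $rid$ is sampled uniformly from a super-polynomial space $\mathbb{Z}_p$ and is statistically hidden from the platform during issuance (it is the blinded message component), the probability of guessing an unseen valid $rid$ is negligible. Case (3): the credential was honestly issued but is reused — however, on first use the platform inserts $rid$ into $\mathsf{ID}$, so a second withdraw with the same $rid$ is rejected; reusing therefore requires producing a \emph{different} $rid'$ on a new valid signature, which collapses back to Case (1).

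I would then assemble these bounds: if $\Pr[\mathrm{Exp^{od}}(\adv,\edv,\lambda)=1]$ were non-negligible, one of the three cases occurs with non-negligible probability, each contradicting either the unforgeability of $\Pi_\mathrm{pbs}$ or the uniform, high-entropy choice of $rid$; hence $\Pi_\mathsf{S-BWA}$ has overdraft prevention. The main obstacle I anticipate is stating the extractor $\edv$ cleanly: because $\Pi_\mathsf{S-BWA}$ is ``almost plain,'' $\edv$ does essentially no rewinding-based extraction — it just reads transcripts — but I still need to argue that the reconstructed histories $\{h_t\}$ are consistent with the platform's view and that the only unextractable object (the hidden $rid$ inside a blind-signing session) is precisely what the unforgeability reduction needs, so that Cases (1) and (3) fold together without a gap. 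I would also take care to note that, as with the $\Pi_\mathsf{BWA}$ analysis, the platform-compliance side is immediate since the platform knows every asset amount in every account (the credential-issuance exchange changes the \emph{form} of an asset, not its publicly-known amount), but that is outside the overdraft statement itself.
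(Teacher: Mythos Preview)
Your proposal is correct and follows essentially the same approach as the paper's proof sketch: observe that all operations except withdraw are fully plain and checkable by the platform, so overdraft can only occur at withdraw, and then split into (1) a forged credential, contradicting unforgeability of $\Pi_{\mathrm{pbs}}$, and (2) guessing another user's $rid$, negligible by entropy. Your Case~(3) on reuse is a welcome extra remark (the paper leaves it implicit in the $\mathsf{ID}$-set check), and your discussion of the extractor $\edv$ simply reading transcripts is more explicit than the paper, but the argument is the same.
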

\begin{proof}[Proof sketch]
    Intuitively, overdraft means \adv\ spends more than he owns. Note that all the transaction details in each plain transaction including deposit and exchange could be checked by the platform. So the event that \adv\ spends more assets than he owns only happens in the withdraw transaction with the following two cases: 
    
    \begin{itemize}
        \item[(1)] \adv\ forges a new credential to withdraw. 
        It violates the unforgeability of $\Pi_{\mathrm{pbs}}$.
        \item[(2)] \adv\ guess other users' valid credentials which is negligible due to the randomness of credential identifier $rid$.
    \end{itemize}
    Since any of the above cases happens only with negligible probability, \adv\ also wins with negligible probability.
\end{proof}

\begin{theorem}[Tax-report-client-compliance] Let $\Pi_{\mathrm{pbs}}$ be unforgeable. $\Pi_\mathsf{S-BWA}$ satisfies the Tax-report-client-compliance defined in Def~\ref{defcp} where $F$ is the tax-report function.
\end{theorem}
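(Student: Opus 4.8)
The plan is to mirror the structure of the two preceding security proofs for $\Pi_\mathsf{S\text{-}BWA}$ (basic withdraw anonymity and overdraft prevention), since tax-report-client-compliance is again a soundness-style statement and the construction $\Pi_\mathsf{S\text{-}BWA}$ is deliberately simple: all compliance-relevant data (buying prices, selling prices, amounts in deposits and plain exchanges, and even the amount in the special credential-issuance exchange) are in the clear to the platform, and the final document is produced by a partially blind signature whose only hidden component is the random identifier $rid$ (here the message $msg$ is empty and the public info carries $(id,cp_1,cp_2,mt)$). Consequently the extractor $\mathcal{E}$ required by Def~\ref{defcp} has an easy job: for deposit, plain exchange, and credential-issuance exchange transactions it simply reads the plain transcript $\mathit{ts}_t$ (asset name, amount, price) and the plain compliance data from $h_t$; there is essentially nothing to rewind. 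I would state this up front and then argue that $\mathcal{A}$ can only win via a failure of the one cryptographic primitive used in the file protocol, namely unforgeability of $\Pi_\mathrm{pbs}$.

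First I would set up the extracted transaction histories $\{h_t\}_{t\in[N]}$ exactly as in the overdraft and full-compliance proofs, maintaining the registered-user list $\mathsf{RU}$ and, per user, the multiset of compliance contributions $\{\mathsf{cp}\}_{uid}$ computed by applying the tax function $F$ to the plain $(\text{price},\text{amount})$ pairs harvested from buying-in transactions (deposit, plain exchange) and selling-out transactions (plain exchange, credential-issuance exchange). Then I would enumerate the winning events analogously to Theorem~\ref{thmcomp}: (1) $doc^*$ verifies but was never output by $\mathcal{O}^2_\mathsf{Sign}$; (2) a transaction in $\{h_t\}$ violates a basic client-compliance rule (KYC: $uid\notin\mathsf{RU}$; AML: mixed identities in one transaction; wrong recorded price); (3) $doc^*$ was obtained from $\mathcal{O}^2_\mathsf{Sign}$ but $doc^*.cp\neq \widetilde{cp}_{doc^*.uid}=F(\{\mathsf{cp}\}_{doc^*.uid})$. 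Because every quantity feeding into $F$ is plain to the platform, and because in $\Pi_\mathsf{S\text{-}BWA}$ the platform itself computes $cp_1,cp_2$ from its own recorded state before partially blind signing $(id,cp_1,cp_2,mt)$ on an empty blinded message, event (3) cannot arise against an honest-signing oracle at all — the signed public info is the correct one by construction — so event (3) collapses into event (1) (the adversary must have produced a signature on a different public info that the oracle never issued). Similarly, for event (2), there is no zero-knowledge proof or commitment for $\mathcal{A}$ to cheat on in this simple construction: the platform checks plain data directly at deposit/exchange time and at registration time (via the authentication token $tk$ in $\mathit{Rd_{reg}}$), so any inconsistency $h_t$ would have caused the corresponding oracle to abort, contradicting the experiment's precondition that no oracle aborts.

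Hence I would reduce: given $\mathcal{A}$ winning with non-negligible probability, build $\mathcal{B}$ against $\mathsf{EUF}$ of $\Pi_\mathrm{pbs}$ that runs the whole experiment honestly, answering $\mathcal{O}^2_\mathsf{Sign}$ queries by forwarding to its partial-blind-signing oracle (recording every issued $(msg,\mathit{info},\sigma)$ triple), and when $\mathcal{A}$ halts with a verifying $doc^*=(id,cp_1,cp_2,mt,\sigma^*)$ that is inconsistent with the extracted histories, $\mathcal{B}$ outputs $(\,\emptyset,(id,cp_1,cp_2,mt),\sigma^*)$ as its forgery — valid precisely because that public-info string was never signed by the oracle (any oracle-issued doc would carry the platform-computed, hence consistent, $cp_1,cp_2$). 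I expect the main obstacle to be bookkeeping rather than cryptography: carefully arguing that the platform's own recorded $cp_1,cp_2$ necessarily equal $F(\{\mathsf{cp}\}_{uid})$ — i.e.\ that the platform's running accumulation in its internal state matches the extractor's recomputation from $\{h_t\}$ — which requires a short inductive claim over transactions, and handling the edge case $\{\mathsf{cp}\}_{doc^*.uid}=\emptyset$ (a user who filed without any taxable transaction), where the experiment returns $0$ so $\mathcal{A}$ cannot win via that route anyway.
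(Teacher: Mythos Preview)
Your proposal is correct and follows essentially the same approach as the paper's proof sketch: since all compliance-relevant data in $\Pi_\mathsf{S\text{-}BWA}$ is plain and the honest platform itself computes $(cp_1,cp_2)$ before signing, any winning $\textit{doc}^*$ must carry a signature on public info the oracle never issued, which is a direct forgery against $\Pi_\mathrm{pbs}$. Your treatment is more detailed---you explicitly walk through the event enumeration of Def~\ref{defcp} and argue why events (2) and (3) collapse or cannot occur---whereas the paper compresses this into a couple of sentences, but the reduction and the key observation are identical.
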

\begin{proof}[Proof sketch]
    Intuitively, tax-report-client-compliance requires any user to report the exact cost and gain for his account. \adv\ generates a valid $\textit{doc}^*$ that contains a signature $\sigma^*$ on $(id, cp_1^*, cp_2^*, \mathit{mt})$ that can be verified by the platform's \textit{pk}. \adv\ wins if the total cost value $cp_1^*$ or total gain value $cp_2^*$ is wrong. Note that in this experiment, the platform is assumed honest, $cp_1, cp_2$ are computed from the user's transaction histories which are recorded in plaintext by the platform. So,  the user should get only one valid $\textit{doc}$ with $id$ on the specific time period $mt$, where $\textit{doc}$ includes a valid partially blind signature $\sigma$ on correct public information $\mathit{info}=(id, cp_1, cp_2, \mathit{mt})$. \adv\ wins if $\textit{doc}\neq\textit{doc}^*$. The challenger \cdv\ could leverage the winning case to attack the unforgeability of the underlying signature.
    
    Since the partially blind signature scheme $\Pi_{\mathrm{pbs}}$ is unforgeable, \adv\ wins with negligible probability.
\end{proof}

\noindent{\textbf{For platform compliance.}} We observe that the construction $\Pi_\mathsf{S-BWA}$ does not bring any more challenges to platform compliance than existing plain exchange schemes. The reason is platform could know the exact total amount of each asset in all accounts. The transaction amount is known to the platform in deposit, all exchanges, and withdraw transactions.  Thus it satisfies the strictest platform compliance rule in~\cite{BinanceProofofReserve}.

\end{document}